\newcommand{\indep}{\raisebox{0.05em}{\rotatebox[origin=c]{90}{$\models$}}}
\DeclareMathOperator*{\argmin}{\arg\!\min}
\DeclareMathOperator*{\argmax}{\arg\!\max}
\newtheorem{proposition}{Proposition}[section]
\newtheorem{definition}{Definition}[section]
\newtheorem{remark}{Remark}[section]
\newtheorem{lemma}{Lemma}[section]
\newtheorem{theorem}{Theorem}[section]
\newtheorem{algorithm}{Algorithm}[section]
\newtheorem{corollary}{Corollary}[section]
\newtheorem{assumption}{Assumption}[section]
\title{Kernel Methods for Causal Functions:
Dose, Heterogeneous, and Incremental Response Curves}
\author{%
  Rahul Singh \\
  MIT Economics \\
  \texttt{rahul.singh@mit.edu} \\
  % examples of more authors
   \And
   Liyuan Xu \\
Gatsby Unit, UCL \\
   \texttt{liyuan.jo.19@ucl.ac.uk} \\
  % \AND
  % Coauthor \\
  % Affiliation \\
  % Address \\
  % \texttt{email} \\
   \And
   Arthur Gretton \\
   Gatsby Unit, UCL \\
   \texttt{arthur.gretton@gmail.com} \\
  % \And
  % Coauthor \\
  % Affiliation \\
  % Address \\
  % \texttt{email} \\
}
\begin{document}

\maketitle

\begin{abstract}
We propose estimators based on kernel ridge regression for nonparametric causal functions such as dose, heterogeneous, and incremental response curves. Treatment and covariates may be discrete or continuous in general spaces. Due to a decomposition property specific to the RKHS, our estimators have simple closed form solutions. We prove uniform consistency with finite sample rates via original analysis of generalized kernel ridge regression. We extend our main results to counterfactual distributions and to causal functions identified by front and back door criteria. We achieve state-of-the-art performance in nonlinear simulations with many covariates, and conduct a policy evaluation of the US Job Corps training program for disadvantaged youths. 
\end{abstract}

%\rscomment{The target length is 8 pages for the NeurIPS workshop and 13.5 pages for ECMA. The table of contents adds 3.5 pages}

%\tableofcontents

\section{Introduction}\label{sec:intro}

\subsection{Motivation}

Program evaluation aims to measure the counterfactual relationship between treatment $D$ and outcome $Y$, which may vary for different subpopulations: if we intervened on treatment, setting $D=d$, what would be the expected counterfactual outcome $Y^{(d)}$ for individuals with characteristics $V=v$? When treatment is binary, the causal parameter is a function $\theta_0(v)=E\{Y^{(1)}-Y^{(0)} \mid V=v\}$ called the heterogeneous treatment effect; when treatment is continuous, it is a function $\theta_0(d,v)=E\{Y^{(d)} \mid V=v\}$ that we call a heterogeneous response curve. Assuming selection on observable covariates $(V,X)$, the causal function $\theta_0(d,v)$ can be recovered by integrating the regression function $\gamma_0(d,v,x)=E(Y \mid D=d,V=v,X=x)$ according to the conditional distribution $\text{\normalfont pr}(x \mid v)$: $\theta_0(d,v)=\int \gamma_0(d,v,x)\mathrm{d}\text{\normalfont pr}(x \mid v)$ \cite{rosenbaum1983central,robins1986new}, which may be complex when there are many covariates. 

The same is true for other causal functions such as dose
%$\theta_0(d,v)=E[Y^{(d)} \mid V=v]$ 
and incremental response curves, 
%$\theta_0(d)=E[\nabla_d Y^{(d)}]$, 
and even counterfactual distributions, albeit with different regressions and reweightings. Therefore nonparametric estimation of a causal function involves three challenging steps: estimating a nonlinear regression, with possibly many covariates; estimating the distribution for reweighting, which may be conditional; and using the nonparametric distribution to integrate the nonparametric regression. For this reason, flexible estimation of nonparametric causal functions, such as $\theta_0(d,v)$, is often deemed too computationally demanding to be practical for program evaluation. %See Section~\ref{sec:related} for a discussion of related work.

%\cite{kennedy2017nonparametric,semenova2021debiased,nie2021quasi,kallus2018policy,chernozhukov2018global,fan2019estimation,zimmert2019nonparametric,colangelo2020double} 

% Additional nuanced questions can be asked. Is the causal relationship between treatment and outcome learned in one population externally valid in another? For the subpopulation who received one treatment value, what would have been their outcome had they instead received another? Are treatment effects heterogeneous with respect to some interpretable subcovariate $V\subset X$ such as age, race, or gender? Such empirical questions are ubiquitous in policy evaluation across economics, statistics, and epidemiology.

Our key insight is that the reproducing kernel Hilbert space (RKHS), a popular nonparametric setting in machine learning, is precisely the class of functions for which the steps of nonparametric causal estimation can be separated. This decomposition follows almost immediately from the definition of the RKHS, and it is a specific strength of our framework; random forests, for example, do not allow such decoupling. Our key insight follows from a more fundamental one. Evaluation of a causal function is generally not a bounded functional over all of $\mathbb{L}^2$ \cite{van1991differentiable,newey1994asymptotic}. We prove that the evaluation of a causal function is a bounded functional over the RKHS $\mathcal{H}$, which is a subset of $\mathbb{L}^2$. By the classic Riesz representation theorem of functional analysis, a bounded functional over a Hilbert space admits a decoupled inner product representation within the Hilbert space. We show how to use this representation to separate the steps of nonparametric causal estimation. This insight appears to be original.

We adapt kernel ridge regression, a classic machine learning algorithm that generalizes splines \cite{wahba1990spline}, to address the computational challenges of estimating causal functions such as dose, heterogeneous, and incremental response curves. Nonparametric estimation with kernels is quite simple: the nonlinear regression with many covariates can be estimated by simple matrix operations; the conditional distribution can be expressed as a regression problem and estimated by simple matrix operations as well; and the step of integration can be performed by taking the product of the results. The final nonparametric estimator for the causal function has a one line, closed form solution, unlike previous work. This simplicity makes the family of estimators highly practical. The proposed estimators are substantially simpler yet outperform some leading alternatives in nonlinear simulations with many covariates; see Supplement~\ref{sec:experiments}. As extensions, we generalize our new algorithmic techniques to counterfactual distributions in Supplement~\ref{sec:distribution} as well as causal functions and distributions identified by front and back door criteria in Supplement~\ref{sec:graphical}.

Theoretically, our statistical guarantees rely on smoothness of the causal function and spectral decay of the covariance operator rather than the explicit dimension of treatment and covariates. In economic modelling, many variables may matter for labor market decisions, yet economic theory suggests that the effect of different intensities of job training should be well approximated by smooth functions. The emphasis on smoothness in the causal interpretation of RKHS assumptions generalizes standard Sobolev assumptions, and it differs from the emphasis on sparsity in lasso-type assumptions. Our causal function estimators are uniformly consistent with rates that combine minimax optimal rates for smooth nonparametric regressions. En route to our main results, we prove an improved rate for conditional expectation operators. Our main results are nonasymptotic and imply asymptotic uniform validity.% over large classes of models.

\subsection{Contribution}

Conceptually, we illustrate how to use RKHS techniques in order to separate the steps of nonparametric causal estimation. In doing so, we provide a template for researchers to develop simple kernel estimators for complex causal estimands. Specifically, we clarify five assumptions under which we derive our various results: (i) identification, from the social scientific problem at hand; (ii) basic regularity conditions on the kernels, which are satisfied by all of the kernels typically used in practice; (iii) basic regularity on the  outcome, treatment, and covariates, allowing them to be discrete or continuous variables that take values in general spaces (even texts, images, or graphs); (iv) smoothness of the causal estimand; and (v) spectral decay of the covariance operator. We combine these five assumptions to estimate causal functions, providing insight into the meaning and applicability of RKHS approximation assumptions for causal inference.

Statistically, we prove uniform consistency: our estimators converge to causal functions in $\sup$ norm, which encodes caution about worst case scenarios when informing policy decisions. Our finite sample rates of convergence explicitly account for each source of error at any finite sample size. Our rates do not directly depend on the data dimension, but rather the smoothness of the causal estimand and spectral decay of the covariance operator. The rates may indirectly depend on dimension; see Section~\ref{sec:consistency} for discussion in the context of the Sobolev space, which is a special case of an RKHS. Of independent interest, we provide a technical innovation to justify our main results: relative to previous work, we prove faster rates of convergence in Hilbert--Schmidt norm for conditional expectation operators. We generalize our main results to prove convergence in distribution for counterfactual distributions. The analysis of uniform confidence bands for our causal function estimators is an open question that we pose for future research, since uniform inference for kernel ridge regression remains an open question in statistics.

Computationally, we demonstrate state-of-the-art performance in nonlinear simulations with many covariates, despite the relative simplicity of our proposal compared to existing machine learning approaches. In order to simplify the causal estimation problem, we assume that underlying conditional expectation functions are elements in an RKHS. We propose a family of global estimators with closed form solutions, avoiding density estimation and sampling even for complex integrals. Throughout, the only hyperparameters are kernel hyperparameters and ridge regression penalties. The former have well established tuning procedures, and the latter are easily tuned using the closed form solution for generalized cross validation (which is asymptotically optimal) or leave-one-out cross validation (which we derive). In practice, the tunings are similar, and the asymptotically optimal choice aligns with our statistical theory.

Empirically, our kernel ridge regression approach allows for simple yet flexible estimation of nuanced causal estimands. Such estimands provide meaningful insights about the Job Corps, the largest job training program for disadvantaged youth in the US. Our key statistical assumption is that different intensities of job training have smooth effects on counterfactual employment, and those effects are smoothly modified by age. In our program evaluation in Supplement~\ref{sec:experiments}, we find that the effect of job training on employment substantially varies by class hours and by age; a targeted policy will be more effective. Our program evaluation confirms earlier findings while also uncovering meaningful heterogeneity. 
%In this case study, 
 We demonstrate how kernel methods for causal functions are a practical addition to the empirical economic toolkit.

% The structure of the paper is as follows. Section~\ref{sec:related} describes related work. Section~\ref{sec:problem} defines our class of causal functions. Section~\ref{sec:algorithm} proposes kernel methods for this class, which Section~\ref{sec:detail} compares to kernel methods for causal scalars. Section~\ref{sec:consistency} presents our theoretical guarantees of uniform consistency. Section~\ref{sec:experiments} conducts nonlinear simulations as well as a real world program evaluation of the US Job Corps. Section~\ref{sec:conclusion} concludes. We extend our analysis to counterfactual distributions in Appendix~\ref{sec:distribution}, and to causal functions and counterfactual distributions identified by Pearl's front and back door criteria in Appendix~\ref{sec:graphical}.
\section{Related work}\label{sec:related}

We view nonparametric causal functions as reweightings of an underlying regression, synthesizing the $g$ formula \cite{robins1986new} and partial means \cite{newey1994kernel} frameworks. To express causal functions in this way, we build on canonical identification theorems under the assumption of selection on observables \cite{rosenbaum1983central,robins1986new,altonji2005cross}. We propose simple, global estimators that combine kernel ridge regressions. Previous works that take a global view include \cite{van2003unified,luedtke2016super,diaz2013targeted,kennedy2020optimal}, and references therein. A broad literature instead views causal functions as collections of localized treatment effects and proposes local estimators with Nadaraya--Watson smoothing, e.g. \cite{imai2004causal,rubin2005general,rubin2006extending,galvao2015uniformly,luedtke2016statistical,kennedy2017nonparametric,semenova2021debiased,kallus2018policy,chernozhukov2022debiased,fan2019estimation,zimmert2019nonparametric,colangelo2020double}, and references therein. By taking a global view rather than a local view, we propose simple estimators that can be computed once and evaluated at any value of a continuous treatment, rather than a computationally intensive procedure that must be reimplemented at any treatment value.

Our work appears to be the first to reduce estimation of dose, heterogeneous, and incremental response curves to kernel ridge regressions. Previous works incorporating the RKHS into nonparametric estimation focus on different causal functions: nonparametric instrumental variable regression \cite{carrasco2007linear,darolles2011nonparametric,singh2019kernel}, and heterogeneous treatment effect conditional on the full vector of covariates \cite{nie2021quasi}. %In particular, \cite{singh2019kernel} propose a kernel generalization of two stage least squares and prove projected mean square error rates. 
%Though \cite{singh2019kernel} focus on ill posed inverse problems and we focus on partial means, we build on their insights about the role of conditional expectation operators and conditional mean embeddings in causal inference.  
\cite{nie2021quasi} propose the R learner to estimate the heterogeneous treatment effect $\theta_0(x)=E\{Y^{(1)}-Y^{(0)} \mid X=x\}$. \cite[Section 3]{nie2021quasi} reviews the extensive literature that considers this estimand. The R learner minimizes a loss that contains inverse propensities and different regularization \cite[eq. A24]{nie2021quasi}, and it does not appear to have a closed form solution. The authors prove oracle mean square error rates. By contrast, we pursue a more general heterogeneous response curve with discrete or continuous treatment, conditional on some interpretable subvector $V$ \cite{abrevaya2015estimating}: $\theta_0(d,v)=E\{Y^{(d)} \mid V=v\}$. Unlike previous work on nonparametric causal functions in the RKHS, we (i) consider dose, heterogeneous, and incremental response curves; (ii) propose estimators with closed form solutions; and (iii) prove uniform consistency, which is an important norm for policy evaluation.

We extend the framework from causal functions to counterfactual distributions. Existing work focuses on distributional generalizations of average treatment effect (ATE) or average treatment on the treated (ATT) for binary treatment \cite{firpo2007efficient,cattaneo2010efficient,chernozhukov2013inference}, e.g. $\theta_0=\text{\normalfont pr}\{Y^{(1)}\}-\text{\normalfont pr}\{Y^{(0)}\}$. \cite{muandet2021counterfactual} propose an RKHS approach for distributional ATE and ATT with binary treatment using inverse propensity scores and an assumption on the smoothness of a ratio of densities, which differs from our approach. Unlike previous work, we (i) allow treatment to be continuous; (ii) avoid inversion of propensity scores and densities; and (iii) study a broad class of counterfactual distributions for the full population, subpopulations, and alternative populations, e.g. $\theta_0(d,v)=\text{\normalfont pr}\{Y^{(d)} \mid V=v\}$. 

%This paper subsumes our previous draft from 2020, circulated with a different title.
We provide a detailed comparison with kernel methods for binary treatment effects in Section~\ref{sec:detail}. Whereas we study causal functions, these works study causal scalars \cite{kallus2020generalized,hirshberg2019minimax,singh2021debiased}. We clarify the sense in which our causal function estimators generalize known estimators for treatment effects to new estimators for causal functions. Previous work is inherently tied to the $\mathbb{L}^2$ bounded functional perspective. However, evaluation of a causal function is not a bounded functional over all of $\mathbb{L}^2$ \cite{van1991differentiable,newey1994asymptotic}. Therefore our algorithms extend the conceptual framework of kernel methods for causal inference in a new direction. Our statistical contribution is a new, uniform analysis of response curves that goes beyond pointwise approximation of response curves by local treatment effects. 

This paper subsumes our previous draft \cite[Section 2]{singh2020kernel}.
\section{Causal functions}\label{sec:problem}

%\subsection{Definition}

A causal function summarizes the expected counterfactual outcome $Y^{(d)}$ given a hypothetical intervention on continuous treatment that sets $D=d$. The causal inference literature studies a rich variety of causal functions with nuanced interpretation, which we define below. Unless otherwise noted, expectations are with respect to the population distribution $\text{\normalfont pr}$.

\begin{definition}[Causal functions]\label{def:causal_param}
We define
\begin{enumerate}
    \item Dose response: $\theta_0^{ATE}(d)=E\{Y^{(d)}\}$ is the counterfactual mean outcome given intervention $D=d$ for the entire population.
     \item Dose response with distribution shift: $ \theta_0^{DS}(d,\tilde{\text{\normalfont pr}})=E_{\tilde{\text{\normalfont pr}}}\{Y^{(d)}\}$ is the counterfactual mean outcome given intervention $D=d$ for an alternative population with data distribution $\tilde{\text{\normalfont pr}}$. %(elaborated in Assumption~\ref{assumption:covariate}).
    \item Conditional response: $ \theta_0^{ATT}(d,d')=E\{Y^{(d')} \mid D=d\}$ is the counterfactual mean outcome given intervention $D=d'$ for the subpopulation who actually received treatment $D=d$.
     \item Heterogeneous response: $\theta_0^{CATE}(d,v)=E\{Y^{(d)} \mid V=v\}$ is the counterfactual mean outcome given intervention $D=d$ for the subpopulation with subcovariate value $V=v$.
\end{enumerate}
Likewise we define incremental functions, e.g. $\theta_0^{\nabla:ATE}(d)=E\{\nabla_d Y^{(d)}\}$ where $\nabla_d$ means $\partial / \partial d$.
%and $\theta_0^{\nabla:ATT}(d,d')=E\{\nabla_{d'} Y^{(d')} \mid D=d\}$.
\end{definition}
The superscript of each nonparametric causal function corresponds to its familiar parametric analogue. Results for means of potential outcomes immediately imply results for differences thereof. See Supplement~\ref{sec:distribution} for counterfactual distributions and Supplement~\ref{sec:graphical} for graphical models.

The  dose response curves $\theta_0^{ATE}(d)$ and $\theta_0^{DS}(d,\tilde{\text{\normalfont pr}})$ are causal functions for entire populations. 
%They are also called average structural functions in econometrics. 
 The second argument of $\theta_0^{DS}(d,\tilde{\text{\normalfont pr}})$ gets to the heart of external validity: though our data were drawn from population $\text{\normalfont pr}$, what would be the dose response curve for a different population $\tilde{\text{\normalfont pr}}$? For example, a job training study may be conducted in Virginia, yet we may wish to inform policy in Arkansas, a state with different demographics \cite{hotz2005predicting}. 
%This learning problem may be viewed as a potential outcome refinement of the policy effect introduced by \cite{stock1989nonparametric}. It may also be viewed as a generalization of the policy relevant treatment effect defined in \cite{heckman2001policy}. 
Predictive questions of this nature are widely studied in machine learning under the names of transfer learning, distribution shift, and covariate shift \cite{quinonero2009dataset}.

$\theta_0^{ATE}(d)$ and $\theta_0^{DS}(d,\tilde{\text{\normalfont pr}})$ are dose response curves for entire populations, but causal functions may vary for different subpopulations. Towards the goal of personalized or targeted interventions, an analyst may ask another nuanced counterfactual question: what would have been the effect of treatment $D=d'$ for the subpopulation who actually received treatment $D=d$? When treatment is continuous, we may define the conditional response $ \theta_0^{ATT}(d,d')=E\{Y^{(d')} \mid D=d\}$. %This quantity is also called the conditional average structural function in econometrics.% \cite{altonji2005cross}.

In $\theta_0^{ATT}(d,d')$, heterogeneity is indexed by treatment $D$. Heterogeneity may instead be indexed by some interpretable covariate subvector $V$, e.g. age, race, or gender \cite{abrevaya2015estimating}. An analyst may therefore prefer to measure heterogeneous effects for subpopulations characterized by different values of $V$. For simplicity, we will write covariates as $(V,X)$ for this setting, where $X$ are additional identifying covariates besides the interpretable covariates $V$. While many works focus on the special case where treatment is binary, our definition of heterogeneous response curve $\theta_0^{CATE}(d,v)=E\{Y^{(d)} \mid V=v\}$ allows for continuous treatment.

%\subsection{Identification}

% \begin{wrapfigure}{R}{0.4\textwidth}
% \vspace{-15pt}
% \begin{center}
% \begin{adjustbox}{width=.4\textwidth}
%  \begin{tikzpicture}[->,>=stealth',shorten >=1pt,auto,node distance=2.8cm,
%                     semithick]
%   \tikzstyle{every state}=[draw=black,text=black]

%   \node[state]         (x) [fill=gray]                   {$X$};
%   \node[state]         (d) [right of=x, fill=gray]       {$D$};
%   \node[state]         (y) [right of=d, fill=gray]       {$Y$};
%   \node[state]         (e) [above of=x]                  {$U$};

%   \path (x) edge              node {$ $} (d)
%              edge [bend right]          node {$ $} (y)
%         (d) edge              node {$ $} (y)
%         (e) edge           node {$ $} (x)
%           edge           node {$ $} (y);;
% \end{tikzpicture}
% \end{adjustbox}
% \vspace{-20pt}
% \caption{Back door criterion}
% \label{dag:te}
% \end{center}
% %\vspace{-20pt}
% \vspace{-20pt}
% \end{wrapfigure}

\begin{lemma}[Identification of causal functions \cite{rosenbaum1983central,robins1986new}]\label{theorem:id_treatment}
Under standard assumptions of selection on observables and covariate shift in Supplement~\ref{sec:id},
$\theta_0^{ATE}(d)=\int \gamma_0(d,x)\mathrm{d}\text{\normalfont pr}(x)$, $\theta_0^{DS}(d,\tilde{\text{\normalfont pr}})=\int \gamma_0(d,x)\mathrm{d}\tilde{\text{\normalfont pr}}(x)$, $\theta_0^{ATT}(d,d')=\int \gamma_0(d',x)\mathrm{d}\text{\normalfont pr}(x \mid d)$, and $\theta_0^{CATE}(d,v)=\int \gamma_0(d,v,x)\mathrm{d}\text{\normalfont pr}(x \mid v)$, 
where $\gamma_0(d,x)=E(Y \mid D=d,X=x)$ and $\gamma_0(d,v,x)=E(Y \mid D=d,V=v,X=x)
$. Likewise we identify incremental functions, e.g. $\theta_0^{\nabla:ATE}(d)=\int \nabla_d \gamma_0(d,x)\mathrm{d}\text{\normalfont pr}(x)$ \cite{altonji2005cross}.
\end{lemma}

Lemma~\ref{theorem:id_treatment} clarifies the data requirements for estimating each causal function. The dose response $\theta_0^{ATE}(d)$ and conditional response $\theta_0^{ATT}(d,d')$ require observations of outcome $Y$, treatment $D$, and covariates $X$ drawn from the population $\text{\normalfont pr}$. The dose response with distribution shift $\theta_0^{DS}(d,\tilde{\text{\normalfont pr}})$ additionally requires observations of covariates $\tilde{X}$ drawn from the alternative population $\tilde{\text{\normalfont pr}}$. For the heterogeneous response $\theta_0^{CATE}(d,v)$, we abuse notation by denoting the covariates by $(V,X)$, where $V$ is the subcovariate of interest and selection is with respect to the union $(V,X)$. An analyst requires observations of $(Y,D,V,X)$ drawn from the population $\text{\normalfont pr}$.

In particular, Lemma~\ref{theorem:id_treatment} expresses each causal function as an integral of the regression function $\gamma_0$ according to a marginal or conditional distribution. As previewed in Section~\ref{sec:intro}, nonparametric estimation of $\theta_0^{CATE}(d,v)$ involves three steps: estimating a nonlinear regression $\gamma_0(d,v,x)$, which may involve many covariates $X$; estimating the conditional distribution $\text{\normalfont pr}(x \mid v)$ for reweighting; and using the latter to integrate the former. In the next section, we propose original estimators that achieve all three steps in a one line, closed form solution. 
\section{Algorithm}\label{sec:algorithm}

\subsection{RKHS background}

To present the algorithm, we provide background on the RKHS. The essential property of a function $\gamma$ in an RKHS $\mathcal{H}$ is the eponymous reproducing property: $\gamma(w)=\langle \gamma,\phi(w)\rangle_{\mathcal{H}}$ where $\phi(w)$ are features, formally defined below, that serve as the basis functions for $\mathcal{H}$. Our key algorithmic insight is to interpret the reproducing property as a way to separate the function $\gamma$ from the features $\phi(w)$. We use this defining property of the RKHS to decouple the three steps of nonparametric causal estimation. After providing RKHS background material, we prove an inner product representation that formalizes the decoupling, then introduce the causal estimators.

A scalar-valued RKHS $\mathcal{H}$ is a Hilbert space with elements that are functions $\gamma:\mathcal{W}\rightarrow\mathbb{R}$, on which the operator of evaluation is bounded \cite{berlinet2011reproducing}. Polynomial, spline, and Sobolev spaces are widely used examples of RKHSs. $\mathcal{W}$ can be any Polish space, so a value $w\in \mathcal{W}$ can be discrete or continuous. An RKHS is fully characterized by its feature map, which takes a point $w$ in the original space $\mathcal{W}$ and maps it to a feature $\phi(w)$ in the RKHS $\mathcal{H}$. The closure of $span\{\phi(w)\}_{w\in \mathcal{W}}$ is the RKHS $\mathcal{H}$. In other words, $\{\phi(w)\}_{w\in\mathcal{W}}$ can be viewed as the dictionary of basis functions for the RKHS $\mathcal{H}$. The kernel $k:\mathcal{W}\times \mathcal{W}\rightarrow \mathbb{R}$ is the inner product of features $\phi(w)$ and $\phi(w')$: $
k(w,w')=\langle \phi(w),\phi(w') \rangle_{\mathcal{H}}
$. A real-valued kernel $k$ is continuous, symmetric, and positive definite. Though we have constructed the kernel from the feature map, the Moore--Aronszajn Theorem states that, for any positive definite kernel $k$, there exists a unique RKHS $\mathcal{H}$ with feature map $\phi:w\mapsto k(w,\cdot)$. We have already seen that if $\gamma\in\mathcal{H}$, then $\gamma:\mathcal{W}\rightarrow \mathbb{R}$. With the additional notation of the feature map, we write
$
\gamma(w)= \langle \gamma,\phi(w) \rangle_{\mathcal{H}}
$. If $\mathcal{W}$ is separable and $\phi$ is continuous, then $\mathcal{H}$ is separable and may be infinite dimensional.

The RKHS is a practical hypothesis space for nonparametric regression. Consider output $Y\in\mathbb{R}$, input $W\in\mathcal{W}$, and the goal of estimating the conditional expectation function $\gamma_0(w)=E(Y \mid W=w)$. A kernel ridge regression estimator of $\gamma_0$ is
\begin{equation}\label{eq:cef_loss}
\hat{\gamma}=\argmin_{\gamma \in\mathcal{H}} n^{-1}\sum_{i=1}^n \{Y_i-\langle\gamma,\phi(W_i)\rangle_{\mathcal{H}}\}^2 + \lambda \|\gamma\|^2_{\mathcal{H}}.
\end{equation}
$\lambda>0$ is a hyperparameter on the ridge penalty $\|\gamma\|^2_{\mathcal{H}}$, which imposes smoothness in estimation. The solution to the optimization problem has a well known closed form \cite{kimeldorf1971some}, which we exploit and generalize throughout this work:
\begin{equation}\label{eq:cef_form}
\hat{\gamma}(w)=Y^{\top}(K_{WW}+n\lambda  I )^{-1}K_{Ww}.
\end{equation}
The closed form solution involves the kernel matrix $K_{WW}\in\mathbb{R}^{n\times n}$ with $(i,j)$th entry $k(W_i,W_j)$, and the kernel vector $K_{Ww}\in\mathbb{R}^n$ with $i$th entry $k(W_i,w)$. To tune the ridge hyperparameter $\lambda$, both generalized cross validation and leave-one-out cross validation have closed form solutions, and the former is asymptotically optimal \cite{craven1978smoothing,li1986asymptotic}.

We have seen that the feature map takes a value in the original space $w\in\mathcal{W}$ and maps it to a feature in the RKHS $\phi(w)\in\mathcal{H}$. Now we generalize this idea, from the embedding of a value $w$ to the embedding of a distribution $\text{\normalfont q}$. Just as a value $w$ in the original space  is embedded as an element $\phi(w)$ in the RKHS, so too the distribution $\text{\normalfont q}$ over the original space can be embedded as an element  $\mu=E_{\text{\normalfont q}}\{\phi(W)\}$ in the RKHS \cite{smola2007hilbert,berlinet2011reproducing}. Boundedness of the kernel implies existence of the mean embedding as well as Bochner integrability, which permits us to exchange the expectation and inner product. Mean embeddings facilitate the evaluation of expectations of RKHS functions: for $\gamma \in \mathcal{H}$,
$E_{\text{\normalfont q}}\{\gamma(W)\}=E_{\text{\normalfont q}}\left\{\langle \gamma,\phi(W) \rangle_{\mathcal{H}}\right\}=\langle \gamma,\mu \rangle_{\mathcal{H}}$. The final expression foreshadows how we will use the technique of mean embeddings to decouple the nonparametric regression step from the nonparametric reweighting step in the estimation of causal functions. A natural question is whether the embedding $\text{\normalfont q}\mapsto E_{\text{\normalfont q}}\{\phi(W)\}$ is injective, i.e. whether the RKHS element representation is unique. This is called the characteristic property of the kernel $k$, and it holds for commonly used RKHSs e.g. the exponentiated quadratic kernel \cite{sriperumbudur2010relation}.

The tensor product RKHS is one way to construct an RKHS for functions with multiple arguments. Consider the RKHSs $\mathcal{H}_{1}$ and $\mathcal{H}_{2}$ with positive definite kernels $k_{1}:\mathcal{W}_1\times\mathcal{W}_1\rightarrow \mathbb{R}$ and  
$k_{2}:\mathcal{W}_2\times\mathcal{W}_2\rightarrow \mathbb{R}$, respectively. An element $\gamma_1\in \mathcal{H}_{1}$ is a function $\gamma_1:\mathcal{W}_1\rightarrow \mathbb{R}$ and an element $\gamma_2\in \mathcal{H}_{2}$ is a function $\gamma_2:\mathcal{W}_2\rightarrow \mathbb{R}$. The tensor product RKHS $\mathcal{H}=\mathcal{H}_{1}\otimes \mathcal{H}_{2}$ is the RKHS with the product kernel
$
k:(\mathcal{W}_1\times \mathcal{W}_2) \times (\mathcal{W}_1\times \mathcal{W}_2)\rightarrow \mathbb{R},\; \{(w_1,w_2),(w'_1,w'_2)\}\mapsto k_{1}(w_1,w_1')  k_{2}(w_2,w_2')
$. Equivalently, the tensor product RKHS $\mathcal{H}$ has feature map $\phi(w_1)\otimes \phi(w_2)$ such that $\|\phi(w_1)\otimes \phi(w_2)\|_{\mathcal{H}}=\|\phi(w_1)\|_{\mathcal{H}_1}\|\phi(w_2)\|_{\mathcal{H}_2}$. Formally, tensor product notation means $(a\otimes b)c=a \langle b,c\rangle$. An element of the tensor product RKHS $\gamma \in\mathcal{H}$ is a function $\gamma:\mathcal{W}_1\times \mathcal{W}_2 \rightarrow \mathbb{R}$. We assume that the regression function $\gamma_0(w_1,w_2)=E(Y \mid W=w_1,w_2=w_2)$ is an element of a tensor product RKHS, i.e. $\gamma_0\in \mathcal{H}$. As such, the different arguments of $\gamma_0$ are decoupled, which we exploit when calculating partial means. %We will estimate $\gamma_0$ by a kernel ridge regression in $\mathcal{H}$.

Finally, we introduce the RKHS $\mathcal{L}_2(\mathcal{H}_{1},\mathcal{H}_{2})$ that we employ for conditional expectation operators. Rather than being a space of real-valued functions, it is a space of Hilbert--Schmidt operators from one RKHS to another. If the operator $E$ is an element of $ \mathcal{L}_2(\mathcal{H}_{1},\mathcal{H}_{2})$, then $E:\mathcal{H}_{1} \rightarrow \mathcal{H}_{2}$. Formally, it can be shown that $\mathcal{L}_2(\mathcal{H}_{1},\mathcal{H}_{2})$ is an RKHS in its own right with an appropriately defined kernel and feature map. $\mathcal{L}_2(\mathcal{H}_{1},\mathcal{H}_{2})$ is an example of a vector-valued RKHS; see \cite{micchelli2005learning} for a more general discussion. In the present work, we assume the conditional expectation operator $E_0:\gamma_1(\cdot)\mapsto E\{\gamma_1(W_1) \mid W_2=\cdot\}$ is an element of this RKHS, i.e. $E_0\in \mathcal{L}_2(\mathcal{H}_{1},\mathcal{H}_{2})$. We estimate $E_0$ by a kernel ridge regression in $\mathcal{L}_2(\mathcal{H}_{1},\mathcal{H}_{2})$, which coincides with estimating the conditional mean embedding $\mu_{w_1}(w_2)=E\{\phi(W_1) \mid W_2=w_2\}$ via the kernel ridge regression of $\phi(W_1)$ on $\phi(W_2)$; see the derivation of Algorithm~\ref{algorithm:treatment} below.

\subsection{Decoupled representation}

Lemma~\ref{theorem:id_treatment} makes precise how each causal function is identified as a partial mean of the form $\int \gamma_0(d,x)\mathrm{d}\text{\normalfont q}$ for some distribution $\text{\normalfont q}$. To facilitate estimation, we now assume that $\gamma_0$ is an element of an RKHS. %, which is a dense subset of $L^2$. See Appendices~\ref{sec:gentle} and~\ref{sec:formal} for background on the RKHS, which is a canonical setting for machine learning.
In our construction, we define scalar valued RKHSs for treatment $D$ and covariates $(V,X)$, then assume that the regression is an element of the tensor product space. Let $k_{\mathcal{D}}:\mathcal{D}\times \mathcal{D}\rightarrow \mathbb{R}$, $k_{\mathcal{V}}:\mathcal{V}\times \mathcal{V}\rightarrow \mathbb{R}$, and $k_{\mathcal{X}}:\mathcal{X}\times \mathcal{X}\rightarrow \mathbb{R}$ be measurable positive definite kernels corresponding to scalar valued RKHSs $\mathcal{H}_{\mathcal{D}}$, $\mathcal{H}_{\mathcal{V}}$, and $\mathcal{H}_{\mathcal{X}}$. Denote the feature maps
$
\phi_{\mathcal{D}}:\mathcal{D}\rightarrow \mathcal{H}_{\mathcal{D}}, \; d\mapsto k_{\mathcal{D}}(d,\cdot ); \;
\phi_{\mathcal{V}}:\mathcal{V}\rightarrow \mathcal{H}_{\mathcal{V}}, \; v\mapsto k_{\mathcal{V}}(v,\cdot );
 \; \phi_{\mathcal{X}}:\mathcal{X}\rightarrow \mathcal{H}_{\mathcal{X}}, \; x\mapsto k_{\mathcal{X}}(x,\cdot )
$.
To lighten notation, we suppress subscripts when arguments are provided. %, e.g. $\phi(d)=\phi_{\mathcal{D}}(d)$.

For $\theta_0^{ATE}$, $\theta_0^{DS}$, and $\theta_0^{ATT}$, we assume the regression $\gamma_0$ is an element of the RKHS $\mathcal{H}$ with the kernel $k(d,x;d',x')=k_{\mathcal{D}}(d,d')k_{\mathcal{X}}(x,x')$. We appeal to the fact that the product of positive definite kernels for $\mathcal{H}_{\mathcal{D}}$ and $\mathcal{H}_{\mathcal{X}}$ defines a new positive definite kernel for $\mathcal{H}$. The product construction provides a rich composite basis; $\mathcal{H}$ has the tensor product feature map $\phi(d)\otimes \phi(x)$ and $\mathcal{H}=\mathcal{H}_{\mathcal{D}}\otimes \mathcal{H}_{\mathcal{X}}$. In this RKHS,
$
\gamma_0(d,x)=\langle \gamma_0, \phi(d)\otimes \phi(x)\rangle_{\mathcal{H}} 
$.
Likewise for $\theta_0^{CATE}$ we assume $\gamma_0\in \mathcal{H}=\mathcal{H}_{\mathcal{D}}\otimes\mathcal{H}_{\mathcal{V}}\otimes  \mathcal{H}_{\mathcal{X}}$. We place regularity conditions on this RKHS construction in order to represent causal functions as inner products in $\mathcal{H}$. In anticipation of counterfactual distributions in Supplement~\ref{sec:distribution}, we also include conditions for an outcome RKHS in parentheses.

\begin{assumption}[RKHS regularity conditions]\label{assumption:RKHS}
Assume 
\begin{enumerate}
    \item $k_{\mathcal{D}}$, $k_{\mathcal{V}}$, $k_{\mathcal{X}}$ (and $k_{\mathcal{Y}}$) are continuous and bounded. Formally,
    $
    \sup_{d\in\mathcal{D}}\| \phi(d)\|_{\mathcal{H}_{\mathcal{D}}}\leq \kappa_d$, $ \sup_{v\in\mathcal{V}}\|\phi(v)\|_{\mathcal{H}_{\mathcal{V}}}\leq \kappa_v$, $ \sup_{x\in\mathcal{X}}\|\phi(x)\|_{\mathcal{H}_{\mathcal{X}}}\leq \kappa_x
    $ \{and $ \sup_{y\in\mathcal{Y}}\|\phi(y)\|_{\mathcal{H}_{\mathcal{Y}}}\leq \kappa_y$\}.
    \item $\phi(d)$, $\phi(v)$, $\phi(x)$ \{and $\phi(y)$\} are measurable.
    \item $k_{\mathcal{X}}$ (and $k_{\mathcal{Y}}$) are characteristic.
\end{enumerate}
For incremental functions, further assume $\mathcal{D}\subset \mathbb{R}$ is an open set and $\nabla_d  \nabla_{d'} k_{\mathcal{D}}(d,d')$ exists and is continuous, hence $\sup_{d\in\mathcal{D}}\|\nabla_d\phi(d)\|_{\mathcal{H}}\leq \kappa_d'$.
\end{assumption}
Commonly used kernels are continuous and bounded. Measurability is a similarly weak condition. The characteristic property ensures injectivity of the mean embeddings.

\begin{theorem}[Decoupling via kernel mean embeddings]\label{theorem:representation_treatment}
Suppose the conditions of Lemma~\ref{theorem:id_treatment}, Assumption~\ref{assumption:RKHS}, and $\gamma_0\in\mathcal{H}$ hold. Then
\begin{enumerate}
    \item $\theta_0^{ATE}(d)=\langle \gamma_0, \phi(d)\otimes \mu_x\rangle_{\mathcal{H}} $ where $\mu_x=\int\phi(x) \mathrm{d}\text{\normalfont pr}(x) $;
    \item $\theta_0^{DS}(d,\tilde{\text{\normalfont pr}})=\langle \gamma_0, \phi(d)\otimes \nu_x\rangle_{\mathcal{H}} $ where $\nu_x=\int\phi(x) \mathrm{d}\tilde{\text{\normalfont pr}}(x) $;
    \item $\theta_0^{ATT}(d,d')=\langle \gamma_0, \phi(d')\otimes \mu_x(d)\rangle_{\mathcal{H}} $ where $\mu_x(d)=\int\phi(x) \mathrm{d}\text{\normalfont pr}(x \mid d)$;
    \item $\theta_0^{CATE}(d,v)=\langle \gamma_0, \phi(d)\otimes \phi(v)\otimes \mu_{x}(v)\rangle_{\mathcal{H}} $ where $\mu_{x}(v)= \int \phi(x) \mathrm{d}\text{\normalfont pr}(x \mid v)$.
\end{enumerate}
Likewise for incremental functions, e.g. $\theta_0^{\nabla:ATE}(d)=\langle \gamma_0, \nabla_d\phi(d)\otimes \mu_x\rangle_{\mathcal{H}} $.
\end{theorem}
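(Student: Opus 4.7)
The plan is to chain together three facts for each of the four identities: (i) the identification formula from Lemma~\ref{theorem:id_treatment}, which expresses the causal function as an integral of $\gamma_0$ against a marginal or conditional law; (ii) the reproducing property on the tensor-product RKHS, which factorizes pointwise evaluations of $\gamma_0$ into inner products against elementary tensors of features; and (iii) Bochner integrability, which lets us pull the integral inside the inner product, turning integrated features into a (conditional) kernel mean embedding.

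First I would handle $\theta_0^{ATE}$ as the template. By Lemma~\ref{theorem:id_treatment}, $\theta_0^{ATE}(d) = \int \gamma_0(d,x)\,\mathrm{d}\mathbb{P}(x)$. Under $\gamma_0 \in \mathcal{H} = \mathcal{H}_{\mathcal{D}} \otimes \mathcal{H}_{\mathcal{X}}$, the reproducing property gives $\gamma_0(d,x) = \langle \gamma_0,\phi(d)\otimes \phi(x)\rangle_{\mathcal{H}}$. Assumption~\ref{assumption:RKHS} ensures $\sup_x \|\phi(x)\|_{\mathcal{H}_{\mathcal{X}}} \le \kappa_x < \infty$ and $\phi(x)$ is measurable, so $\phi(X)$ is Bochner integrable under $\mathbb{P}$ and $\mu_x := \int \phi(x)\,\mathrm{d}\mathbb{P}(x)$ is a well-defined element of $\mathcal{H}_{\mathcal{X}}$. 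Bochner integrability lets the inner product commute with the integral:
\begin{equation*}
\int \langle \gamma_0,\phi(d)\otimes \phi(x)\rangle_{\mathcal{H}}\,\mathrm{d}\mathbb{P}(x) = \Big\langle \gamma_0, \phi(d)\otimes \int \phi(x)\,\mathrm{d}\mathbb{P}(x)\Big\rangle_{\mathcal{H}} = \langle \gamma_0,\phi(d)\otimes \mu_x\rangle_{\mathcal{H}}.
\end{equation*}
The $\theta_0^{DS}$ case is identical except that we integrate against $\tilde{\mathbb{P}}$, and Bochner integrability still applies because the feature map is uniformly bounded on the common support guaranteed by Assumption~\ref{assumption:covariate}.

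For $\theta_0^{ATT}$ and $\theta_0^{CATE}$, the new ingredient is the conditional mean embedding. Using Lemma~\ref{theorem:id_treatment} and the same reproducing-property step, I arrive at
\begin{equation*}
\theta_0^{ATT}(d,d') = \int \langle \gamma_0,\phi(d')\otimes \phi(x)\rangle_{\mathcal{H}}\,\mathrm{d}\mathbb{P}(x\mid d),
\end{equation*}
and analogously for $\theta_0^{CATE}$ with the triple tensor $\phi(d)\otimes \phi(v) \otimes \phi(x)$. Boundedness of $k_{\mathcal{X}}$ implies that for each fixed $d$ (resp.\ $v$), $\int \phi(x)\,\mathrm{d}\mathbb{P}(x\mid d)$ exists as a Bochner integral, defining $\mu_x(d)$ (resp.\ $\mu_x(v)$) in $\mathcal{H}_{\mathcal{X}}$. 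Exchanging integral and inner product as before yields the claimed inner-product representation. For the triple tensor case, I would note that $\phi(d)\otimes \phi(v)$ is fixed with respect to the integration variable $x$, so the argument reduces to the bivariate case applied inside the outer tensor slot.

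Finally, for the incremental variants, the added regularity $\nabla_d \nabla_{d'} k_{\mathcal{D}}(d,d')$ continuous (so $\sup_d \|\nabla_d \phi(d)\|_{\mathcal{H}_{\mathcal{D}}} \le \kappa_d'$) ensures $\nabla_d \phi(d) \in \mathcal{H}_{\mathcal{D}}$ and, by differentiating the reproducing identity under the integral, $\nabla_d \gamma_0(d,x) = \langle \gamma_0, \nabla_d \phi(d)\otimes \phi(x)\rangle_{\mathcal{H}}$; the same Bochner argument then carries through with $\phi(d)$ replaced by $\nabla_d \phi(d)$. I expect the main technical obstacle to be the Bochner step, i.e.\ verifying measurability and integrability of the Hilbert-space-valued map $x\mapsto \phi(x)$ (and its conditional analogues) with enough care to legitimately commute the integral with the inner product; all other steps are bookkeeping in the tensor-product RKHS.
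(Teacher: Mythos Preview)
Your proposal is correct and matches the paper's proof essentially step for step: identification formula from Lemma~\ref{theorem:id_treatment}, reproducing property on the tensor-product RKHS, then Bochner integrability (from the bounded, measurable kernels in Assumption~\ref{assumption:RKHS}) to swap integral and inner product, with the incremental case handled by invoking the derivative reproducing property under the extra smoothness condition on $k_{\mathcal{D}}$. The paper cites \citet[Lemma 4.34]{steinwart2008support} to justify existence and Bochner integrability of $\nabla_d\phi(d)$, which is exactly the ingredient you flagged as the technical point to verify.
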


\begin{proof}[Sketch] 
Consider $\theta_0^{CATE}(d,v)$. Boundedness of the kernel implies Bochner integrability, which allows us to exchange the integral and inner product:
\begin{align*}
    \int \gamma_0(d,v,x)\mathrm{d}\text{\normalfont pr}(x \mid v)=\int \langle \gamma_0, \phi(d)\otimes \phi(v)\otimes \phi(x)\rangle_{\mathcal{H}}  \mathrm{d}\text{\normalfont pr}(x \mid v) =\langle \gamma_0, \phi(d)\otimes \mu_x(v) \rangle_{\mathcal{H}}.
\end{align*}
\end{proof}
See Supplement~\ref{sec:derivation} for the full proof. $\mu_x(v)=\int\phi(x) \text{\normalfont pr}(x \mid v)$ is the mean embedding of the conditional distribution $\text{\normalfont pr}(x \mid v)$. It encodes the distribution $\text{\normalfont pr}(x \mid v)$ as a function $\mu_x(v)\in\mathcal{H}_{\mathcal{X}}$ such that the causal function $\theta_0^{CATE}(d,v)$ can be expressed as an inner product in $\mathcal{H}$.

\subsection{Closed form solution}

The representation in Theorem~\ref{theorem:representation_treatment}  is essential to the algorithm derivation. In particular, the representation cleanly separates the three steps necessary to estimate a causal function: estimating a nonlinear regression, which may involve many covariates; estimating the distribution for reweighting; and using the nonparametric distribution to integrate the nonparametric regression. For example, for $\theta_0^{CATE}(d,v)$, our estimator is $\hat{\theta}^{CATE}(d,v)=\langle \hat{\gamma}, \phi(d)\otimes \phi(v)\otimes \hat{\mu}_x(v)\rangle_{\mathcal{H}}$. The nonlinear regression estimator $\hat{\gamma}$ is a standard kernel ridge regression of $Y$ on $\phi(D)\otimes \phi(V)\otimes \phi(X)$; the reweighting distribution estimator $\hat{\mu}_x(v)$ is a generalized kernel ridge regression of $\phi(X)$ on $\phi(V)$; and the latter can be used to integrate the former by simply multiplying the two. This algorithmic insight is a key innovation of the present work, and the reason why our estimators have simple closed form solutions despite complicated causal integrals.
\begin{algorithm}[Estimation of causal functions]\label{algorithm:treatment}
Denote the empirical kernel matrices
$
K_{DD}, K_{VV}, K_{XX}\in\mathbb{R}^{n\times n}
$ calculated from observations drawn from population $\text{\normalfont pr}$. Let $\tilde{X}_i$ $(i=1,...,\tilde{n})$ be observations drawn from population $\tilde{\text{\normalfont pr}}$. Denote by $\odot$ the elementwise product. Causal function estimators have the closed form solutions
\begin{enumerate}
    \item $\hat{\theta}^{ATE}(d)=n^{-1}\sum_{i=1}^n Y^{\top}(K_{DD}\odot K_{XX}+n\lambda  I )^{-1}(K_{Dd}\odot K_{Xx_i})  $;
     \item $\hat{\theta}^{DS}(d,\tilde{\text{\normalfont pr}})=\tilde{n}^{-1}\sum_{i=1}^{\tilde{n}} Y^{\top}(K_{DD}\odot K_{XX}+n\lambda  I )^{-1}(K_{Dd}\odot K_{X\tilde{x}_i}) $;
    \item $\hat{\theta}^{ATT}(d,d')=Y^{\top}(K_{DD}\odot K_{XX}+n\lambda  I )^{-1}[K_{Dd'}\odot \{K_{XX}(K_{DD}+n\lambda_1  I )^{-1}K_{Dd}\}]$;
    \item $\hat{\theta}^{CATE}(d,v)=Y^{\top}(K_{DD}\odot K_{VV}\odot K_{XX} +n\lambda  I )^{-1}[K_{Dd}\odot K_{Vv}\odot \{K_{XX}(K_{VV}+n\lambda_2  I )^{-1}K_{Vv} \}] $;
\end{enumerate}
where $(\lambda,\lambda_1,\lambda_2)$ are ridge regression penalty hyperparameters.  Likewise for incremental functions, e.g. $\hat{\theta}^{\nabla:ATE}(d)=n^{-1}\sum_{i=1}^n Y^{\top}(K_{DD}\odot K_{XX}+n\lambda  I )^{-1}(\nabla_d K_{D{d}}\odot K_{Xx_i})  $ where $(\nabla_d  K_{D{d}})_i=\nabla_d k(D_i,d)$.
\end{algorithm}

\begin{proof}[Sketch]
Consider $\theta_0^{CATE}(d,v)$. Analogously to~\eqref{eq:cef_loss}, the kernel ridge regression estimators of the regression $\gamma_0$ and the conditional mean embedding $\mu_x(v)$ are given by
\begin{align*}
    \hat{\gamma}&=\argmin_{\gamma \in\mathcal{H}} n^{-1}\sum_{i=1}^n \{Y_i-\langle\gamma, \phi(D_i)\otimes \phi(V_i) \otimes\phi (X_i)\rangle_{\mathcal{H}}\}^2 + \lambda \|\gamma\|^2_{\mathcal{H}}, \\
    \hat{E}&=\argmin_{E\in\mathcal{L}_2(\mathcal{H}_{\mathcal{X}},\mathcal{H}_{\mathcal{V}})} n^{-1}\sum_{i=1}^n \{\phi(X_i)-E^*\phi(V_i)\}^2 + \lambda_2 \|E\|^2_{\mathcal{L}_2(\mathcal{H}_{\mathcal{X}},\mathcal{H}_{\mathcal{V}})},
\end{align*}
where $\hat{\mu}_x(v)=\hat{E}^*\phi(v)$ and $E^*$ is the adjoint of $E$.
%and $\mathcal{L}_2(\mathcal{H}_{\mathcal{X}},\mathcal{H}_{\mathcal{V}})$ is an RKHS with elements that are operators. 
Analogously to~\eqref{eq:cef_form}, the closed forms are
\begin{align*}
    \hat{\gamma}(d,v,\cdot)&=Y^{\top}(K_{DD}\odot K_{VV}\odot K_{XX}+n\lambda  I )^{-1}\{K_{Dd}\odot K_{Vv}\odot K_{X(\cdot)}\},\\
    [\hat{\mu}_x(v)](\cdot)&=K_{(\cdot) X}(K_{VV}+n\lambda_2  I )^{-1}K_{Vv}.
\end{align*}
To arrive at the main result, match the empty arguments $(\cdot)$ of the kernel ridge regressions.
\end{proof}
See Supplement~\ref{sec:derivation} for the full derivation and a comparison to series estimation. We give theoretical values for $(\lambda,\lambda_1,\lambda_2)$ that optimally balance bias and variance in Theorem~\ref{theorem:consistency_treatment} below. Supplement~\ref{sec:tuning} gives practical tuning procedures based on generalized and leave-one-out cross validation to empirically balance bias and variance, the former of which is asymptotically optimal. %$\hat{\theta}^{DS}$ requires observations of covariates from $\tilde{\text{\normalfont pr}}$.

\section{Comparison to kernel methods for causal scalars}\label{sec:detail}

%\subsection{Overview}

We now connect our kernel methods for causal functions with related kernel methods for treatment effects. Recall the definition $\theta_0^{ATE}(d)=E\{Y^{(d)}\}$. We allow treatment to be continuous, so $\theta_0^{ATE}$ is a causal function called the dose response. In related work, treatment is binary, so $\theta_0^{ATE}$ is a vector of two causal scalars $\theta^{ATE}_0(1),\theta^{ATE}_0(0)$ whose difference is the treatment effect. 

We clarify three points. (i) There is a sense in which our algorithms generalize known estimators for treatment effects to new estimators for causal functions. (ii) A treatment effect is a bounded functional over $\mathbb{L}^2$ with a balancing weight representation, while a response curve is not. Our key insight is that a response curve is a bounded functional over the RKHS $\mathcal{H}$, which is a subset of $\mathbb{L}^2$. (iii) Our theoretical contribution is a new, uniform analysis of response curves. The analysis goes beyond pointwise approximation of response curves by local treatment effects.

%\subsection{Balancing weight representation}

We begin by reviewing the theory of balancing weights, which are popular in causal inference with binary treatments. For clarity, in this section we emphasize a fixed treatment value by writing $d^*\in \mathcal{D}$.  The following representation is well known.

\begin{proposition}[Existence for treatment effects; Point 3.1 of \cite{hernan2010causal}]\label{prop:balance_exists}
Suppose selection on observables (stated in Supplement~\ref{sec:id}) holds and treatment is binary. Fix $d^*\in\mathcal{D}$. If $\text{\normalfont pr}(D=d^* \mid X)$ is bounded away from zero almost surely, then there exists balancing weight $\alpha_0\in \mathbb{L}^2$ such that for all $\gamma\in \mathbb{L}^2$, $\int \gamma(d^*,x)\mathrm{d}\text{\normalfont pr}(x)=\langle \gamma,\alpha_0 \rangle_{\mathbb{L}^2}$. In particular, $\theta_0^{ATE}(d^*)=\int y\alpha_0(d,x)\mathrm{d}\text{\normalfont pr}(d,x,y)=\langle \gamma_0,\alpha_0 \rangle_{\mathbb{L}^2}$ and the balancing weight is $\alpha_0(d,x)=1(d=d^*)/\text{\normalfont pr}(D=d^* \mid x)$.
\end{proposition}

In summary, a treatment effect has two representations: the primal representation of Lemma~\ref{theorem:id_treatment} as a partial mean of the regression $\gamma_0(d,x)=E(Y \mid D=d,X=x)$, and the dual representation of Proposition~\ref{prop:balance_exists} as a reweighting of the outcome $Y$ using the balancing weight $\alpha_0(d,x)=1(d=d^*)/\text{\normalfont pr}(D=d^* \mid x)$. Clearly, the two representations are related by the law of iterated expectations. Moreover, from the closed form of $\alpha_0$, we require $\text{\normalfont pr}(D=d^* \mid X)>0$ for $\alpha_0$ to exist. This property keenly relies on the treatment being discrete. Indeed, it is well known that a balancing weight representation does not exist for response curves.

\begin{proposition}[Non-existence for response curves \cite{van1991differentiable,newey1994asymptotic}]\label{prop:balance_dne}
Suppose selection on observables (stated in Supplement~\ref{sec:id}) holds and treatment is continuous. Fix $d^*\in\mathcal{D}$. Even if the density $f(d^* \mid X)$ is bounded away from zero almost surely, there does not exist a balancing weight $\alpha_0\in \mathbb{L}^2$ such that for all $\gamma\in \mathbb{L}^2$, $\int \gamma(d^*,x)\mathrm{d}\text{\normalfont pr}(x)=\langle \gamma,\alpha_0 \rangle_{\mathbb{L}^2}$. In particular, without further restrictions, there does not exist $\alpha_0\in \mathbb{L}^2$ such that  $\theta_0^{ATE}(d^*)=\int y\alpha_0(d,x)\mathrm{d}\text{\normalfont pr}(d,x,y)=\langle \gamma_0,\alpha_0 \rangle_{\mathbb{L}^2}$.
\end{proposition}

 Whereas a binary treatment effect is a bounded functional over $\mathbb{L}^2$ with a balancing weight representation, a dose response is not a bounded functional over $\mathbb{L}^2$ and does not have a balancing weight representation in the classic sense. From a functional analytic perspective, this discrepancy is the reason why the problems we study are nonparametric whereas previous work on kernel methods for treatment effects are semiparametric. See Supplement~\ref{sec:balancing} for discussion.

Our key insight is that the dose response is a bounded functional over the RKHS $\mathcal{H}$, which is a subset of $\mathbb{L}^2$. This fact follows from three simple observations: (i) the dose response is a partial mean; (ii) in the RKHS, a partial mean can be reformulated as a kind of evaluation; and (iii) the RKHS $\mathcal{H}$ is the subset of $\mathbb{L}^2$ for which evaluation is a bounded functional. Through this lens, Theorem~\ref{theorem:representation_treatment} shows that there can exist a function $\tilde{\alpha}_0\in\mathcal{H}$ such that $\theta_0^{ATE}(d)=\langle \gamma_0,\tilde{\alpha}_0 \rangle_{\mathcal{H}}$ even when there does not exist a function $\alpha_0\in \mathbb{L}^2$ such that $\theta_0^{ATE}(d)=\langle \gamma_0,\alpha_0 \rangle_{\mathbb{L}^2}$.

%\subsection{Closed form solution}

What is the relationship between between our kernel methods for causal functions and existing kernel methods for treatment effects? There is a sense in which our dose response estimator, which is the simplest case of our framework, is a relaxation of kernel balancing weight estimators from binary treatment to continuous treatment. We formalize this connection as follows.

\begin{corollary}[Relaxation of balancing weight estimators]\label{cor:connect}
Suppose treatment is binary, and take $k_{\mathcal{D}}(d,d')=1(d=d')$ to be the treatment kernel. Then $\hat{\theta}^{ATE}(d)=n^{-1}\sum_{i=1}^n Y_i \hat{\alpha}_i$, where $\hat{\alpha}_i=\hat{\alpha}(D_i,X_i)$ and $\hat{\alpha}$ is a ridge regularized estimator of $\alpha_0 \in \mathbb{L}^2$.
\end{corollary}

See Supplement~\ref{sec:balancing} for the proof. The balancing weight estimator $\hat{\alpha}$ minimizes a generalized balancing weight loss with ridge regularization; see \cite[eq. 8]{kallus2020generalized}, \cite[eq. 1]{hirshberg2019minimax}, and \cite[Definition 3.2]{singh2021debiased} for various formulations. Corollary~\ref{cor:connect} provides intuition for our tensor product RKHS construction. Our product kernel construction ensures that using the binary treatment kernel amounts to subsetting, which recovers previous algorithms. The tensor product RKHS provides a natural way to relax binary treatment to continuous treatment while retaining computational tractability.

As argued in Proposition~\ref{prop:balance_dne}, the balancing weight $\alpha_0\in \mathbb{L}^2$ does not exist for the dose response. Nonetheless, our key insight in Theorem~\ref{theorem:representation_treatment} is that a function $\tilde{\alpha}_0\in\mathcal{H}$ does exist to serve a similar purpose. By combining the partial mean perspective with the technique of kernel mean embedding, we demonstrate that our framework easily extends to conditional nonparametric causal functions, e.g. the heterogeneous response curve $\theta_0^{CATE}(d,v)$, which are substantially more challenging than unconditional nonparametric causal functions, e.g. the dose response $\theta^{ATE}_0(d)$.

Perhaps the most surprising consequence of our construction is the closed form solution for causal functions. In particular, each closed form solution is a reweighting of the observed outcomes with empirical weights that we characterize even though a population balancing weight in $\mathbb{L}^2$ does not exist. 
In sum, previous work \cite{kallus2020generalized,hirshberg2019minimax,singh2021debiased} on kernel methods for treatment effects is inherently tied to the $\mathbb{L}^2$ population balancing weight perspective; our algorithms apply the conceptual framework of kernel methods to new classes of causal functions. The following corollary reinterprets Algorithm~\ref{algorithm:treatment} through this lens. 

\begin{corollary}[Closed form reweighting even when balancing weight does not exist]\label{cor:extend}
Suppose treatment is continuous, with $k_{D}$ that is continuous and bounded. Then 
$\hat{\theta}^{ATE}(d)=n^{-1}\sum_{i=1}^n Y_i\hat{\alpha}_i^{ATE}$,
$\hat{\theta}^{DS}(d,\tilde{\text{\normalfont pr}})=n^{-1}\sum_{i=1}^{n}Y_i\hat{\alpha}_i^{DS}$,  $\hat{\theta}^{ATT}(d,d')=n^{-1}\sum_{i=1}^n Y_i\hat{\alpha}_i^{ATT}$, and  
$\hat{\theta}^{CATE}(d,v)=n^{-1}\sum_{i=1}^n Y_i\hat{\alpha}_i^{CATE} $, where the weights have closed form solutions given in Supplement~\ref{sec:balancing}. Likewise for incremental functions, e.g. $\hat{\theta}^{\nabla:ATE}(d)=n^{-1}\sum_{i=1}^n Y_i\hat{\alpha}_i^{\nabla:ATE}$.
\end{corollary}

%\subsection{Global versus local estimation}

Each of our proposed causal function estimators is global. In particular, within Corollary~\ref{cor:extend}, the weights $(\hat{\alpha}_j^{ATE},\hat{\alpha}_j^{DS},\hat{\alpha}_j^{ATT},\hat{\alpha}_j^{CATE})$ $(j=1,...,n)$ depend on all of the observations as refracted through the ridge regularized empirical covariance and the kernel evaluations $k(D_i,d)$. This approach departs from a localization approach to causal functions whereby the weight assigned to each observation is determined by Nadaraya--Watson smoothing \cite{kennedy2017nonparametric,kallus2018policy,colangelo2020double,chernozhukov2022debiased}. In the localization approach, the weight is $k^{NW}\{(D_i-d)/h\}$ where $k^{NW}$ is a Nadaraya--Watson kernel and $h$ is a vanishing bandwidth. By contrast, we consider a fixed kernel and vanishing ridge regularization.

The global perspective has three main advantages. First, our estimators can be computed once and evaluated at any value of a continuous treatment.  By contrast, a localized estimator is a computationally intensive procedure that must be reimplemented at any treatment value. Second, our estimators are constructed from function classes with designed-in smoothness properties, which leads to smoother and therefore more plausible response curves. We compare our smooth estimate with a jagged localizing estimate in the program evaluation of Supplement~\ref{sec:experiments}. Third, we prove uniform consistency of response curves, whereas localizations of previous results would only lead to pointwise consistency. These uniform guarantees are the focus of the next section.

\section{Uniform consistency}\label{sec:consistency}

\subsection{RKHS background}

In Section~\ref{sec:problem}, we defined the causal functions of interest, and identified them as partial means. In Section~\ref{sec:algorithm}, we introduced the tensor product RKHS as the function space in which the three steps of nonparametric causal estimation may be decoupled. We then proposed estimators based on kernel ridge regression with closed form solutions. In Section~\ref{sec:detail}, we demonstrated that our estimators generalize known estimators for the binary treatment case. In this section, we prove uniform consistency of the estimators, with finite sample rates that combine minimax optimal rates. To do so, we define our key approximation assumptions, which are standard in RKHS learning theory: smoothness and spectral decay.

To state our key assumptions, we must introduce a certain eigendecomposition. Recall the example of a generic RKHS $\mathcal{H}$ with kernel $k:\mathcal{W}\times \mathcal{W}\rightarrow \mathbb{R}$ consisting of functions $\gamma:\mathcal{W}\rightarrow \mathbb{R}$. Let $\nu$ be any Borel measure on $\mathcal{W}$. We denote by $\mathbb{L}^2_{\nu}(\mathcal{W})$ the space of square integrable functions with respect to measure $\nu$. Given the kernel, define the integral operator
$
L:\mathbb{L}_{\nu}^2(\mathcal{W})\rightarrow \mathbb{L}_{\nu}^2(\mathcal{W}),\; \gamma \mapsto \int k(\cdot,w)\gamma(w)\mathrm{d}\nu(w)
$. 
If the kernel $k$ is defined on $\mathcal{W}\subset\mathbb{R}^d$ and shift invariant, then $L$ is a convolution of $k$ and $\gamma$. If $k$ is smooth, then $L\gamma$ is a smoothed version of $\gamma$. $L$ is a self adjoint, positive, compact operator, so by the spectral theorem we can denote its countable eigenvalues by $(\eta_j)$ and its countable eigenfunctions, which are equivalence classes, by $\{(\varphi_j)_{\nu}\}$:
$$
L\gamma =\sum_{j=1}^{\infty} \eta_j\langle (\varphi_j)_{\nu},\gamma \rangle_{\mathbb{L}^2_{\nu}(\mathcal{W})}  (\varphi_j)_{\nu},\quad (\varphi_j)_{\nu}=\{f:\nu(f\neq \varphi_j)=0\}.
$$
Without loss of generality, $\eta_j\geq \eta_{j+1}$, and these are also the eigenvalues of the feature covariance operator $T=E\{\phi(W)\otimes \phi(W)\}$. For simplicity, we assume $(\eta_j)>0$ in this discussion; see \cite[Remark 3]{cucker2002mathematical} for the more general case. $\{(\varphi_j)_{\nu}\}$ form an orthonormal basis of $\mathbb{L}_{\nu}^2(\mathcal{W})$. By the generalized Mercer's Theorem for Polish spaces \cite[Corollary 3.5]{steinwart2012mercer}, we can express the kernel as $k(w,w')=\sum_{j=1}^{\infty}\eta_j \varphi_j(w)\varphi_j(w')$, where $(w,w')$ are in the support of $\nu$, $\varphi_j$ is a continuous element in the equivalence class $(\varphi_j)_{\nu}$, and the convergence is absolute and uniform.
% \footnote{In the classic statement of Mercer's Theorem, $\mathcal{W}$ is compact and $\nu$ is supported on $\mathcal{W}$, which ensures the strong form of convergence. \cite[Corollary 3.5]{steinwart2012mercer} prove a generalization of Mercer's Theorem in which $\mathcal{W}$ may not be compact but rather Polish, which we use in the present work.}

With this notation, we express $\mathbb{L}_{\nu}^2(\mathcal{W})$ and the RKHS $\mathcal{H}$ in terms of the series $\{(\varphi_j)_{\nu}\}$. If $\gamma\in \mathbb{L}_{\nu}^2(\mathcal{W})$, then $\gamma$ can be uniquely expressed as
$
\gamma=\sum_{j=1}^{\infty}\gamma_j(\varphi_j)_{\nu}
$
and the partial sums $\sum_{j=1}^J \gamma_j (\varphi_j)_{\nu}$ converge to $\gamma$ in $\mathbb{L}^2_{\nu}(\mathcal{W})$. Indeed, for $\gamma=\sum_{j=1}^{\infty}\gamma_j(\varphi_j)_{\nu}$ and $\gamma'=\sum_{j=1}^{\infty}\gamma_j'(\varphi_j)_{\nu}$,
$$
 \mathbb{L}^2_{\nu}(\mathcal{W})=\left\{\gamma=\sum_{j=1}^{\infty}\gamma_j(\varphi_j)_{\nu}:\; \sum_{j=1}^{\infty}\gamma_j^2<\infty\right\},\quad \langle \gamma,\gamma' \rangle_{\mathbb{L}^2_{\nu}(\mathcal{W})}=\sum_{j=1}^{\infty} \gamma_j\gamma_j'.
$$
By \cite[Theorem 4]{cucker2002mathematical}, the RKHS $\mathcal{H}$ can be explicitly represented as
$$
\mathcal{H}=\left(\gamma=\sum_{j=1}^{\infty}\gamma_j\varphi_j:\;\sum_{j=1}^{\infty} \frac{\gamma_j^2}{\eta_j}<\infty\right),\quad \langle \gamma,\gamma' \rangle_{\mathcal{H}}=\sum_{j=1}^{\infty} \frac{\gamma_j\gamma_j'}{\eta_j}.
$$
To interpret this result, recall that $(\eta_j)$ is a weakly decreasing sequence. The RKHS $\mathcal{H}$ is the subset of functions in $\mathbb{L}^2_{\nu}(\mathcal{W})$ which are continuous and for which higher order terms in the series $\{(\varphi_j)_{\nu}\}$ have a smaller contribution. The RKHS inner product penalizes higher order coefficients, and the magnitude of the penalty corresponds to how small the eigenvalue is.

We have seen how to conduct kernel ridge regression with the RKHS $\mathcal{H}$. To analyze the bias from ridge regularization, we place a smoothness assumption called the source condition on the regression function $\gamma_0(w)=E(Y \mid W=w)$ \cite{smale2007learning,caponnetto2007optimal,carrasco2007linear}. 
Formally, we place assumptions of the form
\begin{equation}\label{eq:prior}
    \gamma_0\in \mathcal{H}^c=\left(f=\sum_{j=1}^{\infty}\gamma_j\varphi_j:\;\sum_{j=1}^{\infty} \frac{\gamma_j^2}{\eta^c_j}<\infty\right)\subset \mathcal{H},\quad c\in(1,2].
\end{equation}
While $c=1$ recovers correct specification $\gamma_0\in \mathcal{H}$, $c\in(1,2]$ is a stronger condition: $\gamma_0$ is a particularly smooth element of $\mathcal{H}$, well approximated by the leading terms in the series $\{(\varphi_j)_{\nu}\}$. Smoothness delivers uniform consistency. A larger value of $c$ corresponds to a smoother target $\gamma_0$ and a faster convergence rate for $\hat{\gamma}$. Rates do not further improve for $c>2$, which is known as the saturation effect for ridge regularization. %\cite{bauer2007regularization}.
%: nonparametric regression with ridge regularization cannot adapt to smoothness any further 
%.

%Note that $c$ is a joint assumptions on the kernel and data distribution, though the RKHS depends on only the kernel. 

%In Supplement~\ref{sec:consistency}, we use an equivalent formulation that is more convenient for analysis.

To analyze the variance of kernel ridge regression, we place a spectral decay assumption called the effective dimension of the basis $(\varphi_j)$ for the RKHS $\mathcal{H}$. 
% Recall the definition of the integral operator $
% L:\mathbb{L}_{\nu}^2(\mathcal{W})\rightarrow \mathbb{L}_{\nu}^2(\mathcal{W}),\; \gamma\mapsto \int k(\cdot,w)\gamma(w)\mathrm{d}\nu(w)
% $, which has the eigendecomposition $
% L\gamma =\sum_{j=1}^{\infty} \eta_j\langle \varphi_j,\gamma \rangle_{\mathbb{L}^2_{\nu}(\mathcal{W})}\cdot  \varphi_j
% $. 
To obtain faster convergence rates, we place a direct assumption on the rate at which the eigenvalues $(\eta_j)$, and hence the importance of the eigenfunctions $(\varphi_j)$, decay: we assume there exists some constant $C$ such that for all $j$
\begin{equation}\label{eq:b}
  \eta_j\leq C j^{-b},\quad b\geq 1.
\end{equation} 
A bounded kernel, which we have already assumed, implies $b=1$ \cite[Lemma 10]{fischer2017sobolev}. The limit $b\rightarrow \infty$ may be interpreted as a finite dimensional RKHS \cite{caponnetto2007optimal}. For intermediate values of $b$, the polynomial rate of spectral decay quantifies the effective dimension of the RKHS $\mathcal{H}$ in light of the measure $\nu$. Intuitively, a higher value of $b$ corresponds to a lower effective dimension and a faster convergence rate for $\hat{\gamma}$.

For intuition, we relate the source condition and effective dimension to a familiar notion of smoothness in the Sobolev space. The restriction that defines an RKHS generalizes higher order smoothness in a Sobolev space. Indeed, certain Sobolev spaces are RKHSs. Let $\mathcal{W}\subset \mathbb{R}^p$. Denote by $\mathbb{H}_2^s$ the Sobolev space with $s>p/2$ derivatives that are square integrable. This space can be generated by the Mat\`ern kernel, which converges to the popular exponentiated quadratic kernel as $s\rightarrow \infty$. Suppose $\mathcal{H}=\mathbb{H}_2^s$ is chosen as the RKHS for estimation. Suppose the measure $\nu$ supported on $\mathcal{W}$ is absolutely continuous with respect to the uniform distribution and bounded away from zero. If $\gamma_0\in \mathbb{H}_2^{s_0}$, then $c=s_0/s$ \cite{pillaud2018statistical}. Written another way, $(\mathbb{H}_2^{s})^c=\mathbb{H}_2^{s_0}$. In this sense, $c$ precisely quantifies the additional smoothness of $\gamma_0$ relative to $\mathcal{H}$. Moreover, in this Sobolev space, $b=2s/p>1$ \cite{fischer2017sobolev}. The effective dimension is increasing in the input dimension $p$ and decreasing in the degree of smoothness $s$. The minimax optimal rate in Sobolev norm is $n^{-(c-1)/\{2(c+1/b)\}}=n^{-(s_0-s)/(2s_0+p)}$, which is achieved by kernel ridge regression with the rate optimal regularization $\lambda=n^{-1/(c+1/b)}=n^{-2s/(2s_0+p)}$. Our analysis applies to Sobolev spaces over $\mathbb{R}^p$ as a special case; our results are much more general, allowing treatment and covariates to be in Polish spaces. 

\subsection{Finite sample rates}

Towards a guarantee of uniform consistency, we place regularity conditions on the original spaces. In anticipation of counterfactual distributions in Supplement~\ref{sec:distribution}, we also include conditions for the outcome space in parentheses.
\begin{assumption}[Original space regularity conditions]\label{assumption:original}
Assume $\mathcal{D}$, $\mathcal{V}$, $\mathcal{X}$ (and $\mathcal{Y}$) are Polish spaces. Further assume $\mathcal{Y}\subset \mathbb{R}$, $\int y^2 \mathrm{d}\text{\normalfont pr}(y)< \infty$, and a moment condition holds: there exist constants $\sigma,\tau$ such that for all $m\geq 2$, $\int |y-\gamma_0(D,X)|^m\mathrm{d}\text{\normalfont pr}(y \mid D,X)\leq m! \sigma^2 \tau^{m-2}/2$ almost surely. For $\theta_0^{CATE}$, replace $X$ with $(V,X)$.
\end{assumption}
A Polish space is a separable and completely metrizable topological space. Random variables with support in a Polish space may be discrete or continuous and may even be infinite dimensional. Bounded $Y$ implies the moment condition.

%$Y$ is bounded almost surely A sufficient condition for the moment condition is that .
%For simplicity of argument, we require that outcome $Y\in\mathbb{R}$ is bounded. 

Next, we assume the regression $\gamma_0$ is smooth in the sense of~\eqref{eq:prior}, and $\mathcal{H}$ has low effective dimension in the sense of~\eqref{eq:b}. Denote the $j$th eigenvalue of the convolution operator for $\mathcal{H}$ by $\eta_j(\mathcal{H})$. Recall that $\eta_j(\mathcal{H})$ is also the $j$th eigenvalue of the feature covariance operator.
\begin{assumption}[Smoothness and spectral decay for regression]\label{assumption:smooth_gamma}
Assume $\gamma_0\in\mathcal{H}^c$ with $c\in(1,2]$, and $\eta_j(\mathcal{H})\leq C j^{-b}$ with $b\geq 1$.
\end{assumption}
See Supplement~\ref{sec:proof} for alternative ways of writing and interpreting Assumption~\ref{assumption:smooth_gamma}. We place similar smoothness and spectral decay conditions on the conditional mean embeddings $\mu_x(d)$ and $\mu_x(v)$,  which are generalized conditional expectation functions. We articulate this assumption abstractly for the conditional mean embedding $\mu_{a}(b)=\int \phi(a)\mathrm{d}\text{\normalfont pr}(a \mid b)$ where $a\in\mathcal{A}_{\ell}$ and $b\in\mathcal{B}_{\ell}$. All one has to do is specify $\mathcal{A}_{\ell}$ and $\mathcal{B}_{\ell}$ to specialize the assumption. For $\mu_x(d)$, $\mathcal{A}_1=\mathcal{X}$ and $\mathcal{B}_1=\mathcal{D}$; for $\mu_x(v)$, $\mathcal{A}_2=\mathcal{X}$ and $\mathcal{B}_2=\mathcal{V}$. For fixed $\mathcal{A}_{\ell}$ and $\mathcal{B}_{\ell}$, we parametrize smoothness by $c_{\ell}$ and spectral decay by $b_{\ell}$.

Formally, define the conditional expectation operator $E_{\ell}:\mathcal{H}_{\mathcal{A}_{\ell}}\rightarrow\mathcal{H}_{\mathcal{B}_{\ell}}$, $f(\cdot)\mapsto E\{f(A_{\ell}) \mid B_{\ell}=\cdot\}$. By construction, $E_{\ell}$ encodes the same information as $\mu_{a}(b)$ since
$$
\{\mu_{a}(b)\}(\cdot)=\int \phi(a)\mathrm{d}\text{\normalfont pr}(a \mid b) =\{E_{\ell}\phi(\cdot)\}(b) =\{E_{\ell}^* \phi(b)\}(\cdot),\quad a\in\mathcal{A}_{\ell}, \quad b\in\mathcal{B}_{\ell},
$$
where $E_{\ell}^*$ is the adjoint of $E_{\ell}$. We denote the space of Hilbert--Schmidt operators between $\mathcal{H}_{\mathcal{A}_{\ell}}$ and $\mathcal{H}_{\mathcal{B}_{\ell}}$ by $\mathcal{L}_2(\mathcal{H}_{\mathcal{A}_{\ell}},\mathcal{H}_{\mathcal{B}_{\ell}})$. \cite{grunewalder2013smooth,singh2019kernel} prove that $\mathcal{L}_2(\mathcal{H}_{\mathcal{A}_{\ell}},\mathcal{H}_{\mathcal{B}_{\ell}})$ is an RKHS in its own right, for which we can assume smoothness in the sense of~\eqref{eq:prior} and spectral decay in the sense of~\eqref{eq:b}.

% \begin{assumption}[Smoothness and spectral decay for mean embedding]\label{assumption:smooth_op}
% Assume $E_{\ell}\in [\mathcal{L}_2(\mathcal{H}_{\mathcal{A}_{\ell}},\mathcal{H}_{\mathcal{B}_{\ell}})]^{c_{\ell}}$ and $\eta_{\ell}[\mathcal{L}_2(\mathcal{H}_{\mathcal{A}_{\ell}},\mathcal{H}_{\mathcal{B}_{\ell}})]\leq C j^{-b_{\ell}}$.
% \end{assumption}

\begin{assumption}[Smoothness and spectral decay for mean embedding]\label{assumption:smooth_op}
Assume the following: $E_{\ell}\in \mathcal{L}_2(\mathcal{H}_{\mathcal{A}_{\ell}},\mathcal{H}^{c_{\ell}}_{\mathcal{B}_{\ell}})$ with $c_{\ell}\in(1,2]$, and $\eta_{\ell}(\mathcal{H}_{\mathcal{B}_{\ell}})\leq C j^{-b_{\ell}}$ with $b_{\ell}\geq 1$.
\end{assumption}
Just as we place approximation assumptions for $\gamma_0$ in terms of $\mathcal{H}$, which provides the features onto which we project $Y$, we place approximation assumptions for $E_{\ell}$ in terms of $\mathcal{H}_{\mathcal{B}_{\ell}}$, which provides the features $\phi(B_{\ell})$ onto which we project $\phi(A_{\ell})$. Under these conditions, we arrive at our main theoretical guarantee.
\begin{theorem}[Uniform consistency of causal functions]\label{theorem:consistency_treatment}
Suppose the conditions of Lemma~\ref{theorem:id_treatment} hold, as well as Assumptions~\ref{assumption:RKHS},~\ref{assumption:original}, and~\ref{assumption:smooth_gamma}. Set $(\lambda,\lambda_1,\lambda_2)=\{n^{-1/(c+1/b)},n^{-1/(c_1+1/b_1)},n^{-1/(c_2+1/b_2)}\}$, which is rate optimal regularization.
\begin{enumerate}
    \item Then with high probability
    $
    \|\hat{\theta}^{ATE}-\theta_0^{ATE}\|_{\infty}=O\left[n^{-(c-1)/\{2(c+1/b)\}}\right]$ and $\|\hat{\theta}^{DS}(\cdot,\tilde{\text{\normalfont pr}})-\theta_0^{DS}(\cdot,\tilde{\text{\normalfont pr}})\|_{\infty}=O\left[ n^{-(c-1)/\{2(c+1/b)\}}+\tilde{n}^{-1/2}\right].
    $
    \item If in addition Assumption~\ref{assumption:smooth_op} holds with $\mathcal{A}_1=\mathcal{X}$ and $\mathcal{B}_1=\mathcal{D}$, then with high probability
      $
    \|\hat{\theta}^{ATT}-\theta_0^{ATT}\|_{\infty}=O\left[n^{-(c-1)/\{2(c+1/b)\}}+n^{-(c_1-1)/\{2(c_1+1/b_1)\}}\right].
    $
    \item If in addition Assumption~\ref{assumption:smooth_op} holds with $\mathcal{A}_2=\mathcal{X}$ and $\mathcal{B}_2=\mathcal{V}$, then with high probability
      $
    \|\hat{\theta}^{CATE}-\theta_0^{CATE}\|_{\infty}=O\left[n^{-(c-1)/\{2(c+1/b)\}}+n^{-(c_2-1)/\{2(c_2+1/b_2)\}}\right].
    $
\end{enumerate}
Likewise for incremental functions, e.g. $
    \|\hat{\theta}^{\nabla:ATE}-\theta_0^{\nabla:ATE}\|_{\infty}=O\left[n^{-(c-1)/\{2(c+1/b)\}}\right].
    $
\end{theorem}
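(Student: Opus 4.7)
My approach would exploit the inner product representations from Theorem~\ref{theorem:representation_treatment}, which rewrite each causal function as $\langle \gamma_0, \psi \rangle_{\mathcal{H}}$ for an appropriate ``reweighting'' element $\psi$ built from $\phi(d)$ (or $\nabla_d \phi(d)$), $\phi(v)$, and a (possibly conditional) mean embedding. The estimator has the identical form with $\gamma_0$ replaced by the KRR estimator $\hat{\gamma}$ and the embedding replaced by its empirical version. So for each of the four cases I would add and subtract a hybrid term to obtain a two-piece error decomposition, e.g.\ for $\theta_0^{CATE}$,
\begin{align*}
\hat{\theta}^{CATE}(d,v)-\theta_0^{CATE}(d,v)
&=\langle \hat{\gamma}-\gamma_0,\, \phi(d)\otimes\phi(v)\otimes \mu_x(v)\rangle_{\mathcal{H}}\\
&\qquad+\langle \hat{\gamma},\, \phi(d)\otimes\phi(v)\otimes(\hat{\mu}_x(v)-\mu_x(v))\rangle_{\mathcal{H}}.
\end{align*}

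\textbf{Key steps.} First I would apply Cauchy--Schwarz together with Assumption~\ref{assumption:RKHS} (bounded kernels $\kappa_d,\kappa_v,\kappa_x$) and the tensor product norm identity to convert the two pieces into (i) $\|\hat{\gamma}-\gamma_0\|_{\mathcal{H}}$ multiplied by a uniform-in-argument bound on the reweighting element, and (ii) $\|\hat{\gamma}\|_{\mathcal{H}}$ multiplied by $\sup_v\|\hat{\mu}_x(v)-\mu_x(v)\|_{\mathcal{H}_{\mathcal{X}}}$. Taking the supremum over $(d,v)$ then yields a uniform bound, since the kernels are bounded independently of $(d,v)$ and (for the incremental case) $\nabla_d\phi$ is bounded by $\kappa_d'$.

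Second, I would invoke standard KRR learning theory (of the Caponnetto--De Vito / Smale--Zhou type) under Assumption~\ref{assumption:smooth_gamma}: the source condition $\gamma_0\in\mathcal{H}^c$ with $c\in(1,2]$ and spectral decay $\eta_j(\mathcal{H})\lesssim j^{-b}$ yield, at the stated optimal choice $\lambda=n^{-1/(c+1/b)}$, the RKHS-norm rate $\|\hat{\gamma}-\gamma_0\|_{\mathcal{H}}=O(n^{-\frac{1}{2}\frac{c-1}{c+1/b}})$ with high probability, and also give boundedness of $\|\hat\gamma\|_{\mathcal H}$. The moment condition in Assumption~\ref{assumption:original} supplies the Bernstein-type tail used in those proofs.

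Third, I would handle the embeddings separately for each case. For $\theta^{ATE}$ the reweighting element is the (estimated by empirical average) unconditional mean embedding $\mu_x$; a standard Hilbert-space Hoeffding/Bennett argument gives an $O(n^{-1/2})$ rate, which is dominated by the slower regression rate. The analogous $\tilde n^{-1/2}$ term appears for $\theta^{DS}$. For $\theta^{ATT}$ and $\theta^{CATE}$ the embedding is \emph{conditional}: writing $\hat{\mu}_x(\cdot)=\hat{E}_\ell^\ast\phi(\cdot)$ and applying $\|\hat{\mu}_x(\cdot)-\mu_x(\cdot)\|_{\mathcal{H}_{\mathcal{X}}}\leq \kappa\cdot\|\hat{E}_\ell-E_\ell\|_{\mathcal{L}_2}$ reduces the $\sup$-norm control to a Hilbert--Schmidt rate for the conditional expectation operator. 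Under Assumption~\ref{assumption:smooth_op} with $(c_\ell,b_\ell)$ and the stated $\lambda_\ell$, this is $O(n^{-\frac{1}{2}\frac{c_\ell-1}{c_\ell+1/b_\ell}})$ — the ``improved rate for conditional expectation operators'' advertised in the introduction. Combining the regression and embedding rates gives precisely the claimed sums. The incremental case is identical, using $\nabla_d\phi(d)$ in place of $\phi(d)$ and the bound $\kappa_d'$ from Assumption~\ref{assumption:RKHS}.

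\textbf{Main obstacle.} The routine ingredient is the KRR RKHS-norm rate, but the non-routine ingredient — and the bottleneck of the whole argument — is the uniform control of the conditional mean embedding. Converting a pointwise-in-$v$ KRR guarantee into a $\sup_v$ guarantee is not free, and the cleanest route is to upgrade the kernel regression of $\phi(X)$ on $\phi(V)$ to an operator-level (Hilbert--Schmidt norm) guarantee for $\hat{E}_\ell-E_\ell$, from which $\sup_v$ control follows by operator-norm domination. Establishing the HS-norm rate $O(n^{-\frac{1}{2}\frac{c_\ell-1}{c_\ell+1/b_\ell}})$ under the source/spectral assumptions requires adapting the Caponnetto--De Vito analysis to the vector-valued RKHS $\mathcal{L}_2(\mathcal{H}_{\mathcal{A}_\ell},\mathcal{H}_{\mathcal{B}_\ell})$ with an appropriate noise decomposition for $\phi(A_\ell)-E_\ell^\ast\phi(B_\ell)$; this is where I would expect to spend most of the technical effort.
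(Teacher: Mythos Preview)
Your proposal is correct and follows essentially the same route as the paper. The only cosmetic difference is that the paper further splits your second term by writing $\hat{\gamma}=(\hat{\gamma}-\gamma_0)+\gamma_0$, yielding a three-term decomposition rather than two; since you already note that $\|\hat{\gamma}\|_{\mathcal{H}}$ is bounded (via $\|\hat{\gamma}\|_{\mathcal{H}}\le\|\gamma_0\|_{\mathcal{H}}+\|\hat{\gamma}-\gamma_0\|_{\mathcal{H}}$), the two decompositions are equivalent and produce identical rates. You have also correctly identified the crux of the argument---the Hilbert--Schmidt rate $O(n^{-\frac{1}{2}\frac{c_\ell-1}{c_\ell+1/b_\ell}})$ for $\hat{E}_\ell-E_\ell$, which the paper proves separately as its main technical contribution and then invokes exactly as you describe.
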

Explicit constants hidden by the $O(\cdot)$ notation, as well as explicit specializations of Assumption~\ref{assumption:smooth_op}, are indicated in Appendices~\ref{sec:proof} and~\ref{sec:operator}. These rates approach $n^{-1/4}$ when $(c,c_1,c_2)=2$ and $(b,b_1,b_2)\rightarrow \infty$, i.e. when the regressions are smooth and when the effective dimensions are finite. Interestingly, each rate combines minimax optimal rates in RKHS norm: $n^{-(c-1)/\{2(c+1/b)\}}$ for standard nonparametric regression \cite[Theorem 2]{fischer2017sobolev}; $\tilde{n}^{-1/2}$ for unconditional mean embeddings \cite[Theorem 1]{tolstikhin2017minimax}; and, in contemporaneous work, $n^{-(c_{\ell}-1)/\{2(c_{\ell}+1/b_{\ell})\}}$ for conditional mean embeddings \cite[Theorem 3]{li2022optimal}.

\begin{remark}[Technical innovation]
Our conditional mean embedding rate builds on original analysis of conditional expectation operators in Supplement~\ref{sec:operator} that is of independent interest. We improve the rate from $n^{-(c_{\ell}-1)/\{2(c_{\ell}+1)\}}$ \cite[Theorem 2]{singh2019kernel} to $n^{-(c_{\ell}-1)/\{2(c_{\ell}+1/b_{\ell})\}}$. Our consideration of Hilbert--Schmidt norm departs from \cite{park2020measure} and \cite{talwai2022sobolev}, who study surrogate risk and operator norm, respectively. Our assumptions also depart from \cite[Hypothesis 5]{singh2019kernel}, \cite[Theorem 4.5]{park2020measure}, and \cite[Assumptions 3 and 4]{talwai2022sobolev}. Instead, Assumption~\ref{assumption:smooth_op} directly generalizes \cite[Conditions SRC and EVD]{fischer2017sobolev} from RKHS functions to Hilbert--Schmidt operators.
\end{remark}

Overall, rates slower than $n^{-1/4}$ reflect the challenge of a $\sup$ norm guarantee, which is stronger than a mean square error guarantee and encodes caution about worst case scenarios when informing policy decisions. For comparison, the minimax optimal Sobolev norm rate for learning an $s_0$-smooth regression, using $\mathbb{H}_2^s$ over $\mathbb{R}^p$, is $n^{-(c-1)/\{2(c+1/b)\}}=n^{-(s_0-s)/(2s_0+p)}$.

%\newpage

%\input{8_broader}

% \begin{ack}
% The National Science Foundation provided partial financial support via
% grants 1559172 and 1757140. Rahul Singh thanks the Jerry Hausman Dissertation Fellowship.
% \end{ack}

%\newpage

% \bibliographystyle{plain} %RS: I will change it back to plainnat after cleaning the theory. it's a bit hard to look at in the meantime.
% \bibliography{bib}
\newpage

%\tableofcontents

\appendix

\section{Identification}\label{sec:id}

In seminal work, \cite{rosenbaum1983central,robins1986new} state sufficient conditions under which causal functions, philosophical quantities defined in terms of potential outcomes $\{Y^{(d)}\}$, can be measured from empirical quantities such as outcomes $Y$, treatments $D$, and covariates $(V,X)$. Colloquially, this collection of sufficient conditions is known as selection on observables. We assume selection on observables in the main text, and Pearl's front and back door criteria in Supplement~\ref{sec:graphical}.

\begin{assumption}[Selection on observables]\label{assumption:selection}
Assume
\begin{enumerate}
    \item No interference: if $D=d$ then $Y=Y^{(d)}$.
    \item Conditional exchangeability: $\{Y^{(d)}\}\indep D \mid X$.
    \item Overlap: if $f(x)>0$ then $f(d \mid x)>0$, where $f(x)$ and $f(d \mid x)$ are densities. 
\end{enumerate}
For $\theta_0^{CATE}$, replace $X$ with $(V,X)$.
\end{assumption}

No interference is also called the stable unit treatment value assumption. It rules out network effects, also called spillovers. Conditional exchangeability states that conditional on covariates $X$, treatment assignment is as good as random. Overlap ensures that there is no covariate stratum $X=x$ such that treatment has a restricted support.
%, similar to \cite{imbens2009identification}. %Figure~\ref{dag:te} presents a representative DAG \cite{pearl1993comment}. 
 To handle $\theta_0^{DS}$, we place a standard assumption in transfer learning.
\begin{assumption}[Distribution shift]\label{assumption:covariate}
Assume
\begin{enumerate}
    \item $\tilde{\text{\normalfont pr}}(Y,D,X)=\text{\normalfont pr}(Y \mid D,X)\tilde{\text{\normalfont pr}}(D,X)$;
    \item $\tilde{\text{\normalfont pr}}(D,X)$ is absolutely continuous with respect to $\text{\normalfont pr}(D,X)$.
\end{enumerate}
\end{assumption}
Populations $\text{\normalfont pr}$ and $\tilde{\text{\normalfont pr}}$ differ only in the distribution of treatments and covariates. Moreover, the support of $\text{\normalfont pr}$ contains the support of $\tilde{\text{\normalfont pr}}$.  An immediate consequence is that the regression $\gamma_0(d,x)=E(Y \mid D=d,X=x)$ remains the same across the different populations $\text{\normalfont pr}$ and $\tilde{\text{\normalfont pr}}$. 

\section{Simulations and program evaluation}\label{sec:experiments}

\subsection{Simulations}

\begin{figure}[ht]
\begin{centering}
     \begin{subfigure}[b]{0.48\textwidth}
         \centering
         \includegraphics[width=\textwidth]{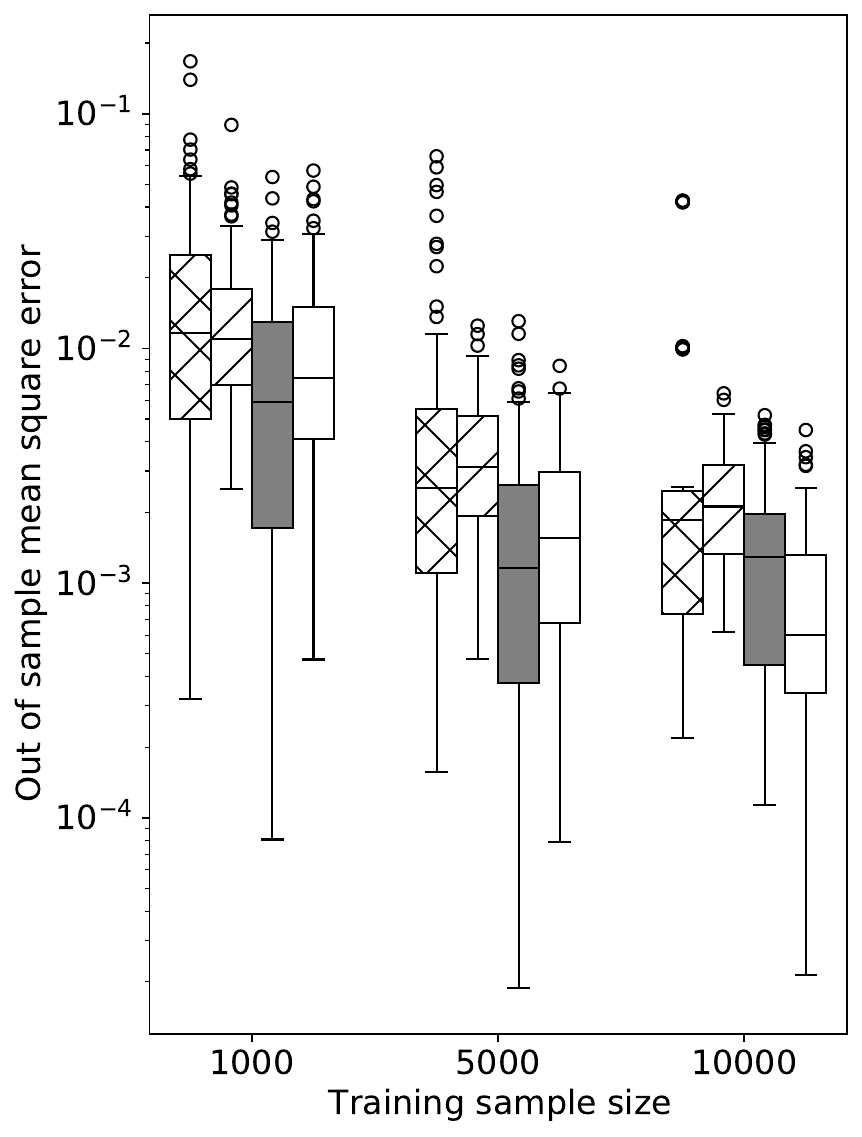}
         %\vspace{-20pt}
         \caption{Dose response curve.}
     \end{subfigure}
     \hfill
     \begin{subfigure}[b]{0.48\textwidth}
         \centering
         \includegraphics[width=\textwidth]{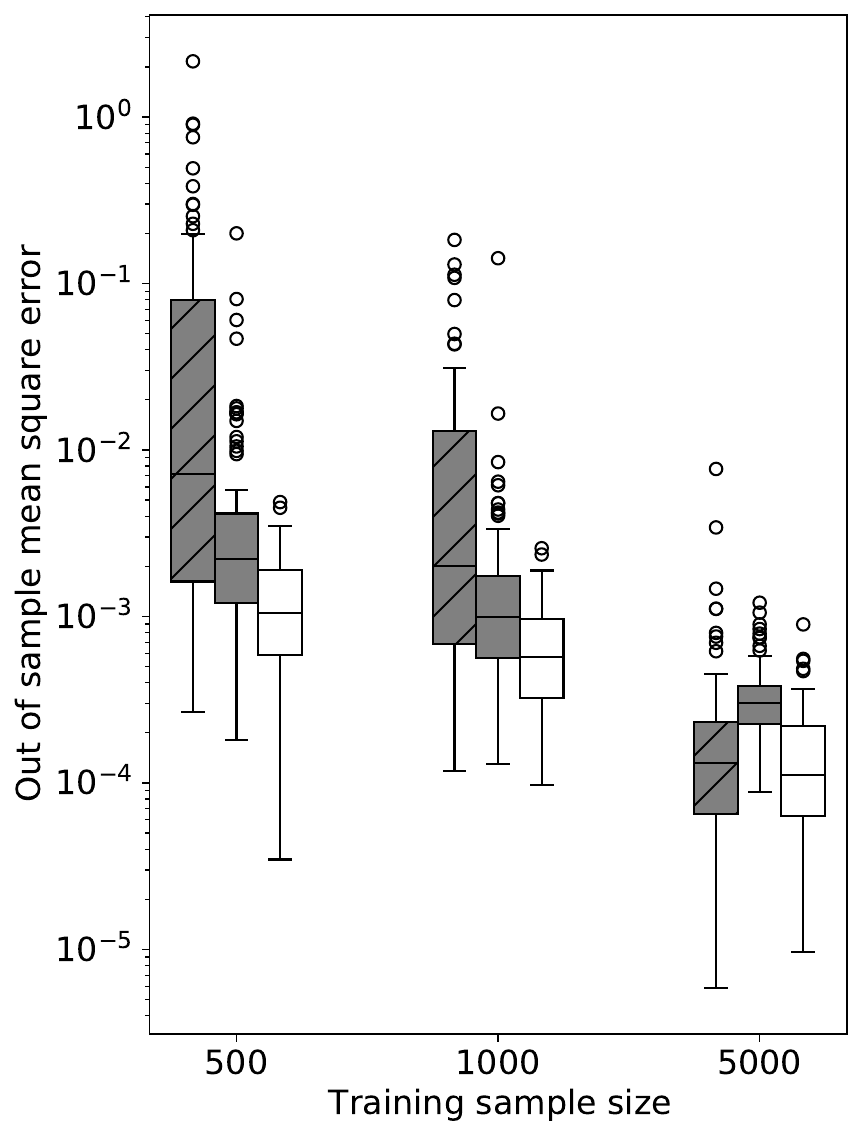}
         %\vspace{-20pt}
         \caption{Heterogeneous treatment effect.}
     \end{subfigure}
\par
%\vspace{-10pt}
\caption{\label{fig:cont}
Nonparametric causal function simulations. We implement the estimators of \cite{abrevaya2015estimating} (\texttt{IPW}, lined gray), \cite{kennedy2017nonparametric} (\texttt{DR1}, checkered white), \cite{colangelo2020double} (\texttt{DR2}, lined white), and \cite{semenova2021debiased} (\texttt{DR-series}, gray), in addition to our own (\texttt{RKHS}, white).}
\end{centering}
\end{figure}
We demonstrate that our nonparametric causal function estimators outperform some leading alternatives in nonlinear simulations with many covariates, despite the relative simplicity of our proposed approach. For each causal function design and sample size, we implement 100 simulations and calculate mean square error with respect to the true causal function. Figure~\ref{fig:cont} visualizes results. A lower mean square error is desirable. See Supplement~\ref{sec:simulations} for a full exposition of the data generating processes and implementation details.  

The dose response curve design \cite{colangelo2020double} involves learning the causal function $\theta_0^{ATE}(d)=1.2d+d^2$. A single observation consists of the triple $(Y,D,X)$ for outcome, treatment, and high dimensional covariates where $Y,D\in\mathbb{R}$ and $X\in\mathbb{R}^{100}$. In addition to our one-line nonparametric estimator (\texttt{RKHS}, white), we implement the estimators of \cite{kennedy2017nonparametric} (\texttt{DR1}, checkered white), \cite{colangelo2020double} (\texttt{DR2}, lined white), and \cite{semenova2021debiased} (\texttt{DR-series}, gray). \texttt{DR1} and \texttt{DR2} are local estimators that involve Nadaraya--Watson smoothing around doubly robust estimating equations. \texttt{DR-series} uses series regression with debiased pseudo outcomes, and we give it the advantage of correct specification as a quadratic function. By the Wilcoxon rank sum test, \texttt{RKHS} significantly outperforms alternatives at sample size 10,000, with p value less than $10^{-3}$, despite its relative simplicity.

Though our approach allows for heterogeneous response of a continuous treatment, we implement a design for heterogeneous effect of a binary treatment in order to facilitate comparison with existing methods. The heterogeneous treatment effect design \cite{abrevaya2015estimating} involves learning the causal functions $\theta_0^{CATE}(0,v)=0$ and $\theta_0^{CATE}(1,v)=v(1+2v)^2(v-1)^2$. A single observations consists of the tuple $(Y,D,V,X)$ for outcome, treatment, covariate of interest, and other covariates. In this design, $Y,D,V\in\mathbb{R}$ and  $X\in\mathbb{R}^3$. In addition to our one-line nonparametric estimator (\texttt{RKHS}, white), we implement the estimators of \cite{abrevaya2015estimating} (\texttt{IPW}, lined gray) and \cite{semenova2021debiased} (\texttt{DR-series}, gray). The former involves Nadaraya--Watson smoothing around an inverse propensity estimator, and the latter involves (correctly specified) series regression with a debiased pseudo outcome. The R learner \cite{nie2021quasi} cannot be implemented since $V\neq X$. The simple \texttt{RKHS} approach significantly outperforms alternatives at sample sizes 500 and 1,000 by the Wilcoxon rank sum test, with p values less than $10^{-5}$. 
\subsection{Program evaluation: US Job Corps}

\begin{figure}[ht]
\begin{centering}
     \begin{subfigure}[b]{0.45\textwidth}
         \centering
         \includegraphics[width=\textwidth]{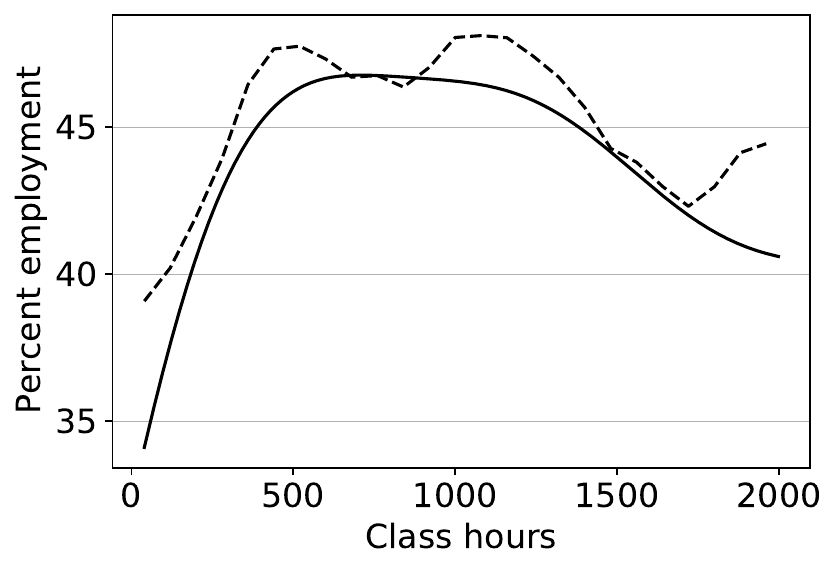}
         %\vspace{-20pt}
         \caption{Dose response curve.}
     \end{subfigure}
     \hfill
     \begin{subfigure}[b]{0.45\textwidth}
         \centering
         \includegraphics[width=\textwidth]{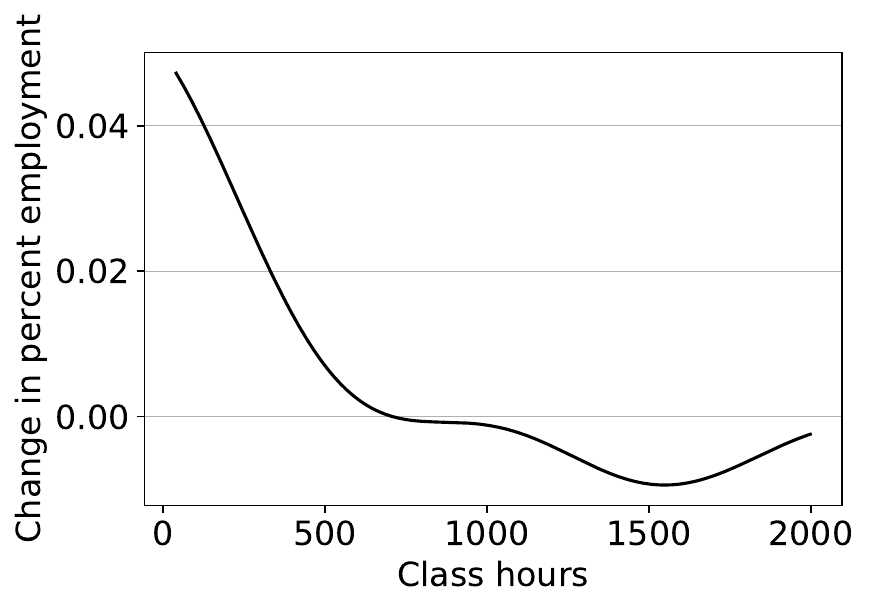}
         %\vspace{-20pt}
         \caption{Incremental response curve.}
     \end{subfigure}
     %\vskip\baselineskip
     \begin{subfigure}[b]{0.45\textwidth}
         \centering
         \includegraphics[width=\textwidth]{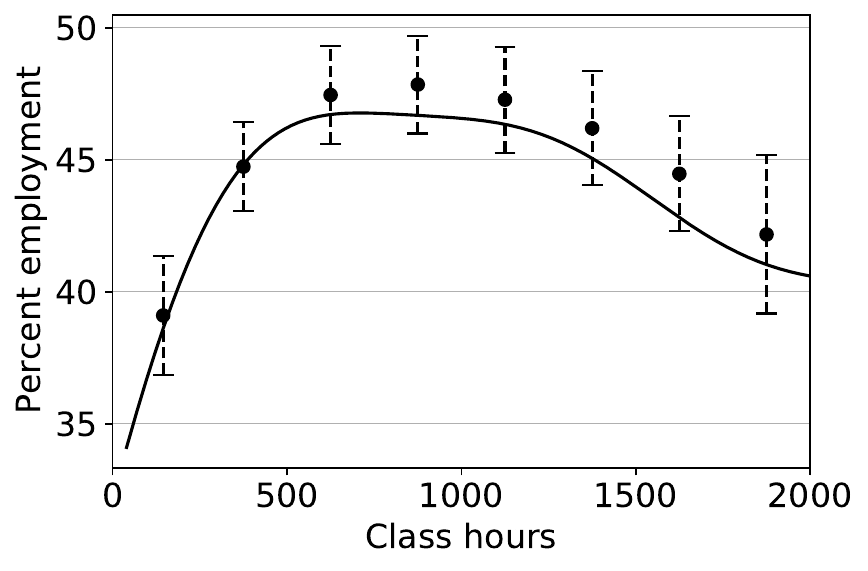}
         %\vspace{-20pt}
         \caption{Discrete treatment effects.}
     \end{subfigure}
     \hfill
     \begin{subfigure}[b]{0.45\textwidth}
         \centering
         \includegraphics[width=\textwidth]{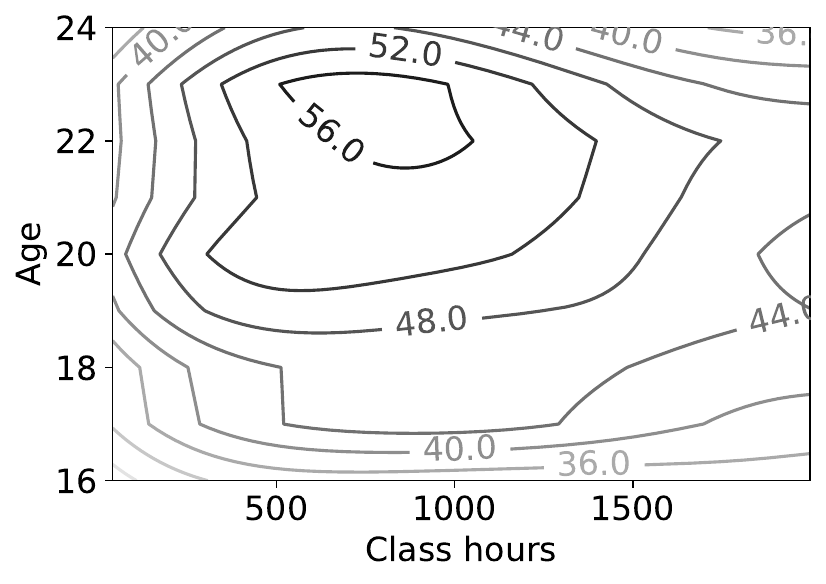}
         %\vspace{-20pt}
         \caption{Heterogeneous response curve.}
     \end{subfigure}
\par
%\vspace{-10pt}
\caption{\label{fig:JC}
Effect of job training on employment. We implement our estimators for dose, heterogeneous, and incremental response curves (\texttt{RKHS}, solid). For comparison, we also implement the dose response curve estimator of \cite{colangelo2020double} (\texttt{DR2}, dashes) as well as the discrete treatment effects of \cite{singh2021debiased} (\texttt{DR3}, vertical bars).}
\end{centering}
\end{figure}

To demonstrate how kernel methods for causal functions are a practical addition to the empirical economic toolkit, we conduct a real world program evaluation. Specifically, we estimate dose, heterogeneous, and incremental response curves of the Jobs Corps, the largest job training program for disadvantaged youth in the US. The Job Corps is financed by the US Department of Labor, and it serves about 50,000 participants annually. Participation is free for individuals who meet low income requirements. Access to the program was randomized from November 1994 to February 1996; see \cite{schochet2008does} for details. Many studies focus on data from this period to evaluate the effect of job training on employment \cite{flores2012estimating,colangelo2020double}. Though access to the program was randomized, individuals could decide whether to participate and for how many hours. From a causal perspective, we assume selection on observables: conditional on observed covariates, participation was exogenous on the extensive and intensive margins. From a statistical perspective, we assume that different intensities of job training have smooth effects on counterfactual employment, and that those effects are smoothly modified by age, assumptions motivated by labor market theory.

In this setting, the continuous treatment $D\in\mathbb{R}$ is total hours spent in academic or vocational classes in the first year after randomization, and the continuous outcome $Y\in\mathbb{R}$ is the proportion of weeks employed in the second year after randomization. The covariates $X\in\mathbb{R}^{40}$ include age, gender, ethnicity, language competency, education, marital
status, household size, household income, previous receipt of social aid, family background, health, and health related behavior at base line. As in \cite{colangelo2020double}, we focus on the $n=3,906$ observations for which $D\geq 40$, i.e. individuals who completed at least one week of training. We implement various causal parameters in Figure~\ref{fig:JC}: the dose response curve; the incremental response curve; the discrete treatment effects with confidence intervals of \cite{singh2021debiased}; and the heterogeneous response curve with respect to age. For the discrete effects, we discretize treatment into roughly equiprobable bins: $[40,250]$, $(250,500]$, $(500,750]$ $(750,1000]$, $(1000,1250]$, $(1250,1500]$, $(1500,1750]$, and $(1750,2000]$ class hours. As far as we know, the heterogeneous response of class hours, a continuous treatment, has not been previously studied in this empirical setting. In Supplement~\ref{sec:application}, we provide implementation details and verify that our results are robust to the choice of sample.

The dose response curve plateaus and achieves its maximum around $d=500$, corresponding to 12.5 weeks of classes. Our global estimate (\texttt{RKHS}, solid) has the same overall shape but is smoother and slightly lower than the collection of local estimates from \cite{colangelo2020double} (\texttt{DR2}, dashes). The smoothness of our estimator is a consequence of the RKHS assumptions, and we see how it is a virtue for empirical economic research; a smooth dose response curve is more economically plausible in this setting. The first 12.5 weeks of classes confer most of the gain in employment: from 35\% employment to more than 47\% employment for the average participant. The incremental response curve (\texttt{RKHS}, solid) is the derivative of the dose response curve, and it visualizes where the greatest gain happens. The discrete treatment effects of \cite{singh2021debiased} (\texttt{DR3}, vertical bars) corroborate our dose response curve, and the 95\% confidence intervals contain the dose response curve of \cite{colangelo2020double} (\texttt{DR2}, dashes) as well as our own (\texttt{RKHS}, solid). 
%By measuring continuous treatment effects rather than binary treatment effects, we are able to determine how many class hours are optimal. 
Finally, the heterogeneous response curve (\texttt{RKHS}, solid) shows that age plays a substantial role in the effectiveness of the intervention. For the youngest participants, the intervention has a small effect: employment only increases from 28\% to at most 36\%. For older participants, the intervention has a large effect: employment increases from 40\% to 56\%. %Again, the first 12 weeks of classes confer the greatest gain. By measuring heterogeneous treatment effects, we are able to determine the subpopulation for whom class hours are most effective. 
Our policy recommendation is therefore 12--14 weeks of classes targeting individuals 21--23 years old.

\section{Counterfactual distributions}\label{sec:distribution}

\subsection{Definition}

In the main text, we study causal functions defined as means of potential outcomes. In this section, we extend the estimators and analyses presented in the main text to counterfactual distributions of potential outcomes. A counterfactual distribution can be encoded by a kernel mean embedding using a new feature map $\phi(y)$ for a new scalar valued RKHS $\mathcal{H}_{\mathcal{Y}}$. We now allow $\mathcal{Y}$ to be a Polish space (Assumption~\ref{assumption:original}).
\begin{definition}[Counterfactual distributions and embeddings] We define %the distribution effects and their mean embeddings by
\begin{enumerate}
    \item Counterfactual distribution: $
\theta_0^{D:ATE}(d)=\text{\normalfont pr}\{Y^{(d)}\}
$ is the counterfactual distribution of outcomes given intervention $D=d$ for the entire population.
    \item Counterfactual distribution with distribution shift: $
\theta_0^{D:DS}(d,\tilde{\text{\normalfont pr}})=\tilde{\text{\normalfont pr}}\{Y^{(d)}\}
$ is the counterfactual distribution of outcomes given intervention $D=d$ for an alternative population with data distribution $\tilde{\text{\normalfont pr}}$ (elaborated in Assumption~\ref{assumption:covariate}).
    \item Conditional counterfactual distribution: $
\theta_0^{D:ATT}(d,d')=\text{\normalfont pr}\{Y^{(d')} \mid D=d\}
$ is the counterfactual distribution of outcomes given intervention $D=d'$ for the subpopulation who actually received treatment $D=d$.
    \item Heterogeneous counterfactual distribution: $
\theta_0^{D:CATE}(d,v)=\text{\normalfont pr}\{Y^{(d)} \mid V=v\}
$ is the counterfactual distribution of outcomes given intervention $D=d$ for the subpopulation with covariate value $V=v$.
\end{enumerate}
Likewise we define counterfactual distribution embeddings, e.g. $
\check{\theta}_0^{D:ATE}(d)=E\{\phi(Y^{(d)})\}.
$
\end{definition}

Our strategy is to estimate the embedding of a  counterfactual distribution. At that point, the analyst may use the embedding to (i) estimate moments of the counterfactual distribution \cite{kanagawa2014recovering} or (ii) sample from the counterfactual distribution \cite{welling2009herding}. Since we already analyze means in the main text, we focus on (ii) in this supplement.

\subsection{Identification}

The same identification results apply to counterfactual distributions.

\begin{lemma}[Identification of counterfactual distributions]\label{theorem:id_treatment_dist}
If Assumption~\ref{assumption:selection} holds,
\begin{enumerate}
     \item $\{\theta_0^{D:ATE}(d)\}(y)=\int \text{\normalfont pr}(y \mid d,x)\mathrm{d}\text{\normalfont pr}(x)$.
    \item If in addition Assumption~\ref{assumption:covariate} holds, then $\{\theta_0^{D:DS}(d,\tilde{\text{\normalfont pr}})\}(y)=\int \text{\normalfont pr}(y \mid d,x)\mathrm{d}\tilde{\text{\normalfont pr}}(x)$.
    \item $\{\theta_0^{D:ATT}(d,d')\}(y)=\int \text{\normalfont pr}(y \mid d',x)\mathrm{d}\text{\normalfont pr}(x \mid d)$ \cite{chernozhukov2013inference}.
    \item $\{\theta_0^{D:CATE}(d,v)\}(y)=\int \text{\normalfont pr}(y \mid d,v,x)\mathrm{d}\text{\normalfont pr}(x \mid v)$.
\end{enumerate}
Likewise for embeddings of counterfactual distributions. For example, if in addition Assumption~\ref{assumption:RKHS} holds, then $\check{\theta}_0^{D:ATE}(d)=\int E\{\phi(Y) \mid D=d,X=x\}\mathrm{d}\text{\normalfont pr}(x)
$.
\end{lemma}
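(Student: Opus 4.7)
The plan is to mirror the proof of Lemma~\ref{theorem:id_treatment} essentially verbatim, replacing the real-valued outcome $Y$ by a suitable test function of $Y$. In the first three parts the ``outcome'' is the indicator $\mathbf{1}\{Y\in B\}$ for a measurable set $B\subset\mathcal{Y}$ (so that we identify the whole measure $\mathbb{P}\{Y^{(d)}\in\cdot\}$ as a set function), and in the embedding statement the ``outcome'' is the RKHS feature $\phi(Y)$.

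First, for $\theta_0^{D:ATE}(d)$, I would apply the law of total probability, conditional exchangeability, and no interference in that order:
\begin{align*}
\mathbb{P}\{Y^{(d)}\in B\}
&=\int \mathbb{P}\{Y^{(d)}\in B\mid X=x\}\,\mathrm{d}\mathbb{P}(x)\\
&=\int \mathbb{P}\{Y^{(d)}\in B\mid D=d,X=x\}\,\mathrm{d}\mathbb{P}(x)\\
&=\int \mathbb{P}\{Y\in B\mid D=d,X=x\}\,\mathrm{d}\mathbb{P}(x),
\end{align*}
where the second equality uses Assumption~\ref{assumption:selection}(2) (well defined by overlap, Assumption~\ref{assumption:selection}(3)), and the third uses Assumption~\ref{assumption:selection}(1). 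Letting $B$ shrink to $\{y\}$ in the sense of densities (or interpreting the equality at the level of measures) yields part~1. Part~2 follows by the identical argument on the shifted population $\tilde{\mathbb{P}}$, after invoking Assumption~\ref{assumption:covariate} to guarantee that $\mathbb{P}(Y\mid D,X)=\tilde{\mathbb{P}}(Y\mid D,X)$ and that the outer marginal integration is well defined. Parts~3 and 4 are analogous: one conditions additionally on $D=d$ (for ATT) or $V=v$ (for CATE), using exchangeability in the enlarged conditioning set $(V,X)$ for CATE as prescribed in Assumption~\ref{assumption:selection}.

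For the embedding version $\check{\theta}_0^{D:ATE}(d)=\mathbb{E}\{\phi(Y^{(d)})\}$, I would combine the distributional identification with Bochner integrability granted by Assumption~\ref{assumption:RKHS}(1). Concretely,
\begin{align*}
\check{\theta}_0^{D:ATE}(d)
&=\int \phi(y)\,\mathrm{d}\mathbb{P}\{Y^{(d)}\}(y)
=\int\!\!\int \phi(y)\,\mathrm{d}\mathbb{P}(y\mid d,x)\,\mathrm{d}\mathbb{P}(x)\\
&=\int \mathbb{E}\{\phi(Y)\mid D=d,X=x\}\,\mathrm{d}\mathbb{P}(x),
\end{align*}
where the first equality substitutes the identification from part~1, and the exchange of integrals in the last line uses boundedness of $k_{\mathcal{Y}}$ (hence Bochner integrability of $\phi(Y)$ under the relevant conditional and marginal measures). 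The analogous embedding identities for DS, ATT, and CATE follow by the same Bochner exchange applied to parts~2--4.

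The main obstacle is not conceptual but measure-theoretic: one must be careful that $\mathbb{P}(y\mid d,x)$ denotes a regular conditional distribution that exists under the Polish-space assumption (Assumption~\ref{assumption:original}(1)), and that overlap ensures the conditional is well defined $\mathbb{P}$-almost surely on the support of $X$ (and of $D$ or $V$ for parts~3--4). Everything else is a direct translation of the mean-identification template from Lemma~\ref{theorem:id_treatment}, which is why the result follows essentially for free.
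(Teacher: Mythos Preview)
Your proposal is correct and follows exactly the natural route. Note, however, that the paper does not actually supply its own proof of this lemma: it states the identities as standard identification results (citing \cite{rosenbaum1983central,robins1986new} via Lemma~\ref{theorem:id_treatment}, and \cite{chernozhukov2013inference} for part~3), treating them as the distributional analogues of Lemma~\ref{theorem:id_treatment} that follow by replacing $Y$ with $\mathbf{1}\{Y\in B\}$ or $\phi(Y)$. Your argument is precisely this replacement carried out explicitly, so there is nothing to compare---you have simply written out what the paper leaves implicit.
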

The identification results for embeddings of counterfactual distributions resemble those presented in the main text. Define the generalized regressions
$
\gamma_0(d,x)=E\{\phi(Y) \mid D=d,X=x\}
$ and $\gamma_0(d,v,x)=E\{\phi(Y) \mid D=d,V=v,X=x\}$. Then we can express these results in the familiar form, e.g. $\check{\theta}_0^{D:ATE}(d)=\int \gamma_0(d,x)\mathrm{d}\text{\normalfont pr}(x)$.

\subsection{Closed form solution}

To estimate counterfactual distributions, we extend the RKHS construction in Section~\ref{sec:algorithm}. As before, define scalar valued RKHSs for treatment $D$ and covariates $X$. Define an additional scalar valued RKHS for outcome $Y$. Because the regression $\gamma_0$ is now a conditional mean embedding, we present a construction involving a conditional expectation operator. Define the conditional expectation operator
$
E_3:\mathcal{H}_{\mathcal{Y}}\rightarrow \mathcal{H}_{\mathcal{D}}\otimes \mathcal{H}_{\mathcal{X}},\; f(\cdot)\mapsto E\{f(Y) \mid D=\cdot,X=\cdot \}
$. By construction
$
\gamma_0(d,x)=E_3^*\{\phi(d)\otimes \phi(x)\}
$.  As before, we replace $X$ with $(V,X)$ for $\theta_0^{D:CATE}$. We place regularity conditions on this RKHS construction, similar to those in Section~\ref{sec:algorithm}, to represent counterfactual distributions as evaluations of $E_3^*$. This representation allows for continuous treatment, unlike the representation in \cite[eq. 16, 17, 20]{muandet2021counterfactual}.

\begin{theorem}[Decoupling via kernel mean embeddings]\label{theorem:representation_dist}
Suppose the conditions of Lemma~\ref{theorem:id_treatment_dist} hold. Further suppose Assumption~\ref{assumption:RKHS} holds and $E_3\in\mathcal{L}_2(\mathcal{H}_{\mathcal{Y}},\mathcal{H}_{\mathcal{D}}\otimes \mathcal{H}_{\mathcal{X}})$. Then
\begin{enumerate}
    \item $\check{\theta}_0^{D:ATE}(d)=E_3^*\{\phi(d)\otimes \mu_x\} $ where $\mu_x=\int\phi(x) \mathrm{d}\text{\normalfont pr}(x) $.
    \item $\check{\theta}_0^{D:DS}(d,\tilde{\text{\normalfont pr}})=E_3^*\{\phi(d)\otimes \nu_x\}$ where $\nu_x=\int\phi(x) \mathrm{d}\tilde{\text{\normalfont pr}}(x) $.
    \item $\check{\theta}_0^{D:ATT}(d,d')=E_3^*\{\phi(d')\otimes \mu_x(d)\} $  where $\mu_x(d)=\int\phi(x) \mathrm{d}\text{\normalfont pr}(x \mid d)$.
    \item $\check{\theta}_0^{D:CATE}(d,v)=E_3^*\{\phi(d)\otimes \phi(v)\otimes \mu_{x}(v)\}$ where $\mu_{x}(v)= \int \phi(x) \mathrm{d}\text{\normalfont pr}(x \mid v)$.
\end{enumerate}
For $\theta_0^{D:CATE}$, we instead assume $E_3\in\mathcal{L}_2(\mathcal{H}_{\mathcal{Y}},\mathcal{H}_{\mathcal{D}}\otimes\mathcal{H}_{\mathcal{V}} \otimes  \mathcal{H}_{\mathcal{X}})$.
\end{theorem}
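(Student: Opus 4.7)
The plan is to mimic the strategy used for Theorem~\ref{theorem:representation_treatment}, but now treating the \emph{generalized} regression $\gamma_0(d,x):=\mathbb{E}\{\phi(Y)\mid D=d,X=x\}$ as a Hilbert-space-valued object represented by the conditional expectation operator $E_3$. Concretely, I start from Lemma~\ref{theorem:id_treatment_dist}, which already identifies each counterfactual distribution embedding as an integral of $\gamma_0$ against a marginal or conditional distribution over $x$ (and possibly $v$). The role of Theorem~\ref{theorem:representation_dist} is to push that integral inside $E_3^*$ and recognise the resulting integrated feature as a (conditional) mean embedding, so that the embedding is an evaluation of $E_3^*$ at a decoupled tensor.

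First, for each case I would rewrite $\gamma_0$ as $\gamma_0(d,x)=E_3^*\{\phi(d)\otimes\phi(x)\}$, using the definition of $E_3$ and of its adjoint in the tensor-product RKHS; this is immediate from the reproducing property together with the defining identity of the conditional expectation operator. Second, I would invoke Bochner integrability: since Assumption~\ref{assumption:RKHS} gives $\sup_x\|\phi(x)\|_{\mathcal{H}_{\mathcal{X}}}\leq\kappa_x<\infty$ (and analogously for $\phi(v)$), the feature-valued maps $x\mapsto\phi(d)\otimes\phi(x)$, etc., are Bochner integrable with respect to $\mathbb{P}(x)$, $\tilde{\mathbb{P}}(x)$, $\mathbb{P}(x\mid d)$, and $\mathbb{P}(x\mid v)$. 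Because $E_3^*$ is a bounded linear (indeed Hilbert-Schmidt) operator, it commutes with Bochner integrals, yielding
\begin{equation*}
\int E_3^*\{\phi(d)\otimes\phi(x)\}\,\mathrm{d}\mathbb{Q}(x) \;=\; E_3^*\!\left\{\phi(d)\otimes\int\phi(x)\,\mathrm{d}\mathbb{Q}(x)\right\}
\end{equation*}
for any of the relevant measures $\mathbb{Q}$ (the $\phi(d)$ factor may be pulled out of the $x$-integral since it is constant in $x$). Identifying the integrated $x$-feature with the appropriate (conditional) mean embedding $\mu_x$, $\nu_x$, $\mu_x(d)$, or $\mu_x(v)$ delivers parts 1--3 of the theorem directly.

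For part 4, the argument is identical after an additional factoring: by the tensor-product RKHS assumption $E_3\in\mathcal{L}_2(\mathcal{H}_{\mathcal{Y}},\mathcal{H}_{\mathcal{D}}\otimes\mathcal{H}_{\mathcal{V}}\otimes\mathcal{H}_{\mathcal{X}})$ the generalized regression decouples as $\gamma_0(d,v,x)=E_3^*\{\phi(d)\otimes\phi(v)\otimes\phi(x)\}$, and since $v$ is held fixed under $\mathbb{P}(x\mid v)$ both $\phi(d)$ and $\phi(v)$ pull out of the integral. Boundedness of $k_{\mathcal{Y}}$ ensures that the conditional mean embedding $\gamma_0$ exists as a Bochner integral in $\mathcal{H}_{\mathcal{Y}}$ in the first place, so that Lemma~\ref{theorem:id_treatment_dist} may be read at the level of feature maps rather than pointwise in $y$.

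The main obstacle is the exchange of integration and the operator $E_3^*$: one must verify that the integrands are Bochner integrable in the appropriate tensor-product Hilbert space and that $E_3^*$ is bounded so that linearity extends from finite sums to Bochner integrals. Both follow from Assumption~\ref{assumption:RKHS} (bounded kernels, hence bounded features, and Bochner integrability) together with the Hilbert--Schmidt hypothesis on $E_3$; once this technical point is in hand, the remainder of the proof is bookkeeping parallel to Theorem~\ref{theorem:representation_treatment}. A minor subtlety worth flagging is that the identification step for $\check{\theta}_0^{D:ATT}$ and $\check{\theta}_0^{D:CATE}$ implicitly requires the conditional distributions $\mathbb{P}(x\mid d)$ and $\mathbb{P}(x\mid v)$ to be regular enough that $\mu_x(d)$ and $\mu_x(v)$ are well-defined Bochner integrals; the bounded-feature condition again suffices.
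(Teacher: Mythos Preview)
Your proposal is correct and follows essentially the same route as the paper: start from Lemma~\ref{theorem:id_treatment_dist}, write $\gamma_0(d,x)=E_3^*\{\phi(d)\otimes\phi(x)\}$, use Bochner integrability from Assumption~\ref{assumption:RKHS} to pull the integral through the bounded operator $E_3^*$, and identify the resulting integrated feature as the relevant (conditional) mean embedding. If anything, you are slightly more explicit than the paper about why $E_3^*$ commutes with the Bochner integral and why the conditional mean embeddings are well defined.
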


See Supplement~\ref{sec:derivation} for the proof. The mean embeddings are the same as in Theorem~\ref{theorem:representation_treatment}. They encode the reweighting distributions as elements in the RKHS such that the counterfactual distribution embeddings can be expressed as evaluations of $E_3^*$. 

As in Section~\ref{sec:algorithm}, the abstract representation helps to define estimators with closed form solutions that can be easily computed. In particular, the representation separates the three steps necessary to estimate a counterfactual distribution: estimating a conditional distribution, which may involve many covariates; estimating the distribution for reweighting; and using one distribution to integrate another. For example, for $\check{\theta}_0^{D:CATE}(d,v)$, our estimator is $\hat{\theta}^{D:CATE}(d,v)=\hat{E}_3^*\{\phi(d)\otimes \phi(v) \otimes \hat{\mu}_x(v)\}$. $\hat{E}_3$ and $\hat{\mu}_x(v)$ are generalized kernel ridge regressions, and the latter can be used to integrate the former by simply multiplying the two. This algorithmic insight is a key innovation of the present work, and the reason why our estimators have simple closed form solutions despite complicated causal integration.

\begin{algorithm}[Estimation of counterfactual distribution embeddings]\label{algorithm:dist}
Denote the empirical kernel matrices
$
K_{DD}, K_{XX}, K_{YY}\in\mathbb{R}^{n\times n}
$. Let $(\tilde{X}_i)$ $(i=1,...,\tilde{n})$ be observations drawn from population $\tilde{\text{\normalfont pr}}$. Denote by $\odot$ the elementwise product. The distribution embedding estimators have the closed form solutions
\begin{enumerate}
    \item $\{\hat{\theta}_0^{D:ATE}(d)\}(y)=n^{-1}\sum_{i=1}^n K_{yY}(K_{DD}\odot K_{XX}+n\lambda_3  I )^{-1}(K_{Dd}\odot K_{Xx_i})$;
    \item $\{\hat{\theta}_0^{D:DS}(d)\}(y)=\tilde{n}^{-1}\sum_{i=1}^{\tilde{n}} K_{yY}(K_{DD}\odot K_{XX}+n\lambda_3  I )^{-1}(K_{Dd}\odot K_{X\tilde{x}_i})$;
    \item $\{\hat{\theta}_0^{D:ATT}(d,d')\}(y)=K_{yY}(K_{DD}\odot K_{XX}+n\lambda_3  I )^{-1}[K_{Dd'}\odot \{K_{XX}(K_{DD}+n\lambda_1  I )^{-1}K_{Dd}\}]$;
     \item $\{\hat{\theta}_0^{D:CATE}(d,v)\}(y)=K_{yY}(K_{DD}\odot K_{VV}\odot K_{XX}+n\lambda_3  I )^{-1}[K_{Dd}\odot K_{Vv} \odot \{K_{XX}(K_{VV}+n\lambda_2  I )^{-1}K_{Vv}\}]$;
\end{enumerate}
where $(\lambda_1,\lambda_2,\lambda_3)$ are ridge regression penalty hyperparameters.%\footnote{Note that $[\hat{\theta}_0^{D:ATE}(d)](\cdot):\mathcal{Y}\rightarrow \mathbb{R}$ is a function, so $K_{(\cdot) Y}:\mathcal{Y}\rightarrow\mathbb{R}^n$ is a function.} 
\end{algorithm}
We derive these estimators in Supplement~\ref{sec:derivation}. We give theoretical values for $(\lambda_1,\lambda_2,\lambda_3)$ that optimally balance bias and variance in Theorem~\ref{theorem:consistency_treatment} below. Supplement~\ref{sec:tuning} gives practical tuning procedures, one of which is asymptotically optimal. 
%Note that $\hat{\theta}^{D:DS}$ requires observations of covariates from the alternative population $\tilde{\text{\normalfont pr}}$. 
We avoid the estimation and inversion of propensity scores in \cite[eq. 21]{muandet2021counterfactual}.

Algorithm~\ref{algorithm:dist} estimates counterfactual distribution embeddings. The ultimate parameters of interest are counterfactual distributions. We present a deterministic procedure that uses the distribution embedding to provide samples $(\tilde{Y}_j)$ from the distribution. In Theorem~\ref{theorem:conv_dist} below, we prove that these samples converge in distribution to the counterfactual distribution. The procedure is a variant of kernel herding  \cite{welling2009herding,muandet2021counterfactual}.

\begin{algorithm}[Estimation of counterfactual distributions]\label{algorithm:herding}
Recall that $\hat{\theta}_0^{D:ATE}(d)$ is a mapping from $\mathcal{Y}$ to $\mathbb{R}$. 
Given $\hat{\theta}_0^{D:ATE}(d)$, calculate
\begin{enumerate}
    \item $\tilde{Y}_1=\argmax_{y\in\mathcal{Y}} \left[ \{\hat{\theta}_0^{D:ATE}(d)\}(y)\right]$;
    \item $\tilde{Y}_{j}=\argmax_{y\in\mathcal{Y}} \left[ \{\hat{\theta}_0^{D:ATE}(d)\}(y)-(j+1)^{-1}\sum_{\ell=1}^{j-1}k_{\mathcal{Y}}(\tilde{Y}_{\ell},y)\right]$ for $j>1$.
\end{enumerate}
Likewise for the other counterfactual distributions, replacing $\hat{\theta}_0^{D:ATE}(d)$ with the other quantities in Algorithm~\ref{algorithm:dist}.
\end{algorithm}
By this procedure, samples from counterfactual distributions are straightforward to compute. With such samples, one may visualize a histogram as an estimator of the counterfactual density of potential outcomes. Alternatively, one may test statistical hypotheses.

\subsection{Convergence in distribution}

Towards a guarantee of uniform consistency, we place regularity conditions on the original spaces as in Assumption~\ref{assumption:original}. Importantly, we relax the condition that $\mathcal{Y}\subset \mathbb{R}$; instead, we assume $\mathcal{Y}$ is a Polish space. Next, we assume the regression $\gamma_0$ is smooth and quantify the spectral decay of its RKHS, parameterized in terms of the conditional expectation operator $E_3$. Likewise we assume the conditional mean embeddings $\mu_x(d)$ and $\mu_x(v)$ are smooth and quantify their spectral decay. With these assumptions, we arrive at our next main result.
\begin{theorem}[Uniform consistency of counterfactual distribution embeddings]\label{theorem:consistency_dist}
Suppose Assumptions~\ref{assumption:selection},~\ref{assumption:RKHS},~\ref{assumption:original}, and~\ref{assumption:smooth_op} hold with $\mathcal{A}_3=\mathcal{Y}$ and $\mathcal{B}_3=\mathcal{D}\times \mathcal{X}$ (or $\mathcal{B}_3=\mathcal{D}\times \mathcal{V}\times \mathcal{X}$ for $\theta_0^{D:CATE}$). Set $(\lambda_1,\lambda_2,\lambda_3)=\{n^{-1/(c_1+1/b_1)},n^{-1/(c_2+1/b_2)},n^{-1/(c_3+1/b_3)}\}$, which is rate optimal regularization.
\begin{enumerate}
    \item Then with high probability
    $$
    \sup_{d\in\mathcal{D}}\|\hat{\theta}^{D:ATE}(d)-\check{\theta}_0^{D:ATE}(d)\|_{\mathcal{H}_{\mathcal{Y}}}=O\left[n^{-(c_3-1)/\{2(c_3+1/b_3)\}}\right].
$$
 \item If in addition Assumption~\ref{assumption:covariate} holds, then with high probability
      $$
     \sup_{d\in\mathcal{D}}\|\hat{\theta}^{D:DS}(d,\tilde{\text{\normalfont pr}})-\check{\theta}_0^{D:DS}(d,\tilde{\text{\normalfont pr}})\|_{\mathcal{H}_{\mathcal{Y}}}=O\left[ n^{-(c_3-1)/\{2(c_3+1/b_3)\}}+\tilde{n}^{-1/2}\right].
    $$
    \item If in addition Assumption~\ref{assumption:smooth_op} holds with $\mathcal{A}_1=\mathcal{X}$ and $\mathcal{B}_1=\mathcal{D}$, then with high probability
        $$
    \sup_{d,d'\in\mathcal{D}}\|\hat{\theta}^{D:ATT}(d,d')-\check{\theta}_0^{D:ATT}(d,d')\|_{\mathcal{H}_{\mathcal{Y}}}=O\left[n^{-(c_3-1)/\{2(c_3+1/b_3)\}}+n^{-(c_1-1)/\{2(c_1+1/b_1)\}}\right].
$$
 \item If in addition Assumption~\ref{assumption:smooth_op} holds with $\mathcal{A}_2=\mathcal{X}$ and $\mathcal{B}_2=\mathcal{V}$, then with high probability
      $$
     \sup_{d\in\mathcal{D},v\in\mathcal{V}}\|\hat{\theta}^{D:CATE}(d,v)-\check{\theta}_0^{D:CATE}(d,v)\|_{\mathcal{H}_{\mathcal{Y}}}=O\left[n^{-(c_3-1)/\{2(c_3+1/b_3)\}}+n^{-(c_2-1)/\{2(c_2+1/b_2)\}}\right].
    $$
\end{enumerate}
\end{theorem}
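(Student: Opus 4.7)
The plan is to parallel the proof of Theorem~\ref{theorem:consistency_treatment}, with the scalar-valued kernel ridge regression $\hat{\gamma}$ of $Y$ on features replaced by the Hilbert-Schmidt operator regression $\hat{E}_3$ of $\phi(Y)$ on features. Starting from the decoupled representation in Theorem~\ref{theorem:representation_dist}, I write each estimation error as an operator applied to a tensor of (estimated) feature maps and (estimated) mean embeddings, then add and subtract intermediate terms to split the error into an ``operator error'' in $\hat{E}_3$ and a ``mean embedding error.''

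For the ATE case the decomposition is
\begin{align*}
\hat{\theta}^{D:ATE}(d) - \check{\theta}_0^{D:ATE}(d)
= (\hat{E}_3^* - E_3^*)\{\phi(d) \otimes \mu_x\}
+ \hat{E}_3^*\{\phi(d) \otimes (\hat{\mu}_x - \mu_x)\}.
\end{align*}
Taking the $\mathcal{H}_{\mathcal{Y}}$-norm, applying $\|A^* z\|_{\mathcal{H}_{\mathcal{Y}}} \le \|A\|_{\mathcal{L}_2} \|z\|$, and using the uniform bounds $\|\phi(d)\| \le \kappa_d$ and $\|\mu_x\| \le \kappa_x$ from Assumption~\ref{assumption:RKHS}, I obtain a deterministic upper bound independent of $d$; the supremum over $d$ is therefore automatic. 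The three remaining cases follow the same two-term telescoping pattern after substituting $\nu_x$ for $\mu_x$ (DS), $\mu_x(d) = E_1^* \phi(d)$ in place of $\mu_x$ (ATT), or $\phi(v) \otimes \mu_x(v) = \phi(v) \otimes E_2^* \phi(v)$ (CATE). Crucially, $\|\hat{\mu}_x(d) - \mu_x(d)\|_{\mathcal{H}_{\mathcal{X}}} \le \kappa_d \|\hat{E}_1 - E_1\|_{\mathcal{L}_2}$ and similarly for $\hat{\mu}_x(v)$, so the suprema over $d,d',v$ again collapse to deterministic operator-norm bounds.

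The final step is to substitute rates into the deterministic bounds. For the empirical mean embedding, $\|\hat{\mu}_x - \mu_x\| = O_p(n^{-1/2})$ by standard Bochner concentration, and $\|\hat{\nu}_x - \nu_x\| = O_p(\tilde{n}^{-1/2})$. For the operator estimators, the rate $\|\hat{E}_\ell - E_\ell\|_{\mathcal{L}_2} = O_p(n^{-\frac{1}{2}\frac{c_\ell-1}{c_\ell+1/b_\ell}})$ under Assumption~\ref{assumption:smooth_op} with $\lambda_\ell = n^{-1/(c_\ell+1/b_\ell)}$ is the technical innovation developed in Appendix~\ref{sec:operator} and flagged in the Remark; it generalizes \cite{fischer2017sobolev} from RKHS functions to Hilbert-Schmidt operators. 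Substituting and absorbing the faster $n^{-1/2}$ empirical-mean term into the dominant operator rate (since $\frac{c_\ell-1}{c_\ell+1/b_\ell}<1$) yields the four stated rates, with the high-probability conclusion following from a union bound over the component concentration events.

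The principal obstacle is establishing the Hilbert-Schmidt rate for $\hat{E}_3$, since $E_3$ maps into a tensor-product RKHS and its response $\phi(Y)$ lives in the infinite-dimensional $\mathcal{H}_{\mathcal{Y}}$; the argument must propagate the source-condition and effective-dimension hypotheses through a vector-valued rather than scalar regression, and must avoid a dimension-dependent penalty from the tensor input. Once that rate is in hand, the remainder is a bounded-feature telescoping that is structurally a near carbon copy of the proof of Theorem~\ref{theorem:consistency_treatment}.
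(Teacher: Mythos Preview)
Your proposal is correct and follows essentially the same approach as the paper: telescope the error into an operator-error piece and a mean-embedding-error piece, bound each via the feature bounds in Assumption~\ref{assumption:RKHS} together with Propositions~\ref{theorem:mean} and~\ref{theorem:conditional}, and take a union bound. The only cosmetic difference is that the paper splits into three terms (separating $(\hat{E}_3-E_3)$ applied to $(\hat{\mu}_x-\mu_x)$ from $E_3$ applied to $(\hat{\mu}_x-\mu_x)$) whereas you keep two terms with $\hat{E}_3$ in the second; since $\|\hat{E}_3\|_{\mathcal{L}_2}\le\|E_3\|_{\mathcal{L}_2}+\|\hat{E}_3-E_3\|_{\mathcal{L}_2}$ this collapses to the same bound, though note your resulting upper bound is random (not ``deterministic'' as you wrote), merely $d$-independent.
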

Explicit constants hidden by the $O(\cdot)$ notation are indicated in Supplement~\ref{sec:proof}, as well as explicit specializations of Assumption~\ref{assumption:smooth_op}. Again, these rates approach $n^{-1/4}$ when $(c_1,c_2,c_3)=2$ and $(b_1,b_2,b_3)\rightarrow \infty$, i.e. when the regressions are smooth and when the effective dimensions are finite. Our assumptions do not include an assumption on the smoothness of an explicit density ratio, which appears in \cite[Theorem 11]{fukumizu2013kernel} and \cite[Assumption 3]{muandet2021counterfactual}. Finally, we state an additional regularity condition under which we can prove that the samples $(\tilde{Y}_j)$ calculated from the distribution embeddings weakly converge to the desired distribution.
\begin{assumption}[Additional regularity]\label{assumption:regularity}
Assume
\begin{enumerate}
    \item $\mathcal{Y}$ is locally compact.
    \item $\mathcal{H}_{\mathcal{Y}}\subset\mathcal{C}_0$, where $\mathcal{C}_0$ is the space of bounded, continuous, real valued functions that vanish at infinity.
\end{enumerate}
\end{assumption}
As discussed by \cite{simon2020metrizing}, the combined assumptions that $\mathcal{Y}$ is Polish and locally compact impose weak restrictions. In particular, if $\mathcal{Y}$ is a Banach space, then to satisfy both conditions it must be finite dimensional. Trivially, $\mathcal{Y}=\mathbb{R}^{dim(Y)}$ satisfies both conditions. We arrive at our final result of this section.
\begin{theorem}[Convergence in distribution of counterfactual distributions]\label{theorem:conv_dist}
Suppose the conditions of Theorem~\ref{theorem:consistency_dist} hold, as well as Assumption~\ref{assumption:regularity}. Suppose samples $(\tilde{Y}_j)$ are calculated for $\theta_0^{D:ATE}(d)$ as described in Algorithm~\ref{algorithm:herding}. Then $(\tilde{Y}_j)\rightsquigarrow \theta_0^{D:ATE}(d)$. Likewise for the other counterfactual distributions, replacing $\hat{\theta}_0^{D:ATE}(d)$ with the other quantities in Algorithm~\ref{algorithm:dist}.
\end{theorem}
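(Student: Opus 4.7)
The plan is to combine three ingredients: (i) the RKHS-norm consistency of the embedding estimator established in Theorem~\ref{theorem:consistency_dist}; (ii) the standard convergence guarantee for kernel herding, which is a conditional gradient procedure; and (iii) the fact that, under Assumption~\ref{assumption:regularity} together with the characteristic property of $k_{\mathcal{Y}}$ (Assumption~\ref{assumption:RKHS}), the MMD induced by $\mathcal{H}_{\mathcal{Y}}$ metrizes weak convergence of Borel probability measures on $\mathcal{Y}$. I outline the argument for $\theta_0^{D:ATE}(d)$; the other cases are identical modulo which embedding enters the triangle inequality.

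Let $\hat{\mathbb{Q}}_J := J^{-1}\sum_{j=1}^J \delta_{\tilde{Y}_j}$ with mean embedding $\hat{\mu}_J := J^{-1}\sum_{j=1}^J \phi(\tilde{Y}_j) \in \mathcal{H}_{\mathcal{Y}}$. The recurrence in Algorithm~\ref{algorithm:herding} is the Frank--Wolfe/conditional gradient iteration applied to minimizing the convex objective $g \mapsto \|\hat{\theta}^{D:ATE}(d) - g\|_{\mathcal{H}_{\mathcal{Y}}}^2$ over the closed convex hull $\mathcal{M}$ of $\{\phi(y):y\in\mathcal{Y}\}$. Boundedness of $k_{\mathcal{Y}}$ (Assumption~\ref{assumption:RKHS}) yields the standard rate $\|\hat{\mu}_J - \Pi_{\mathcal{M}}\hat{\theta}^{D:ATE}(d)\|_{\mathcal{H}_{\mathcal{Y}}} = O(J^{-1/2})$, where $\Pi_{\mathcal{M}}$ denotes RKHS projection onto $\mathcal{M}$. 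Since $\check{\theta}_0^{D:ATE}(d) = \mathbb{E}\{\phi(Y^{(d)})\}$ is itself an element of $\mathcal{M}$, nonexpansiveness of the projection gives $\|\Pi_{\mathcal{M}}\hat{\theta}^{D:ATE}(d) - \check{\theta}_0^{D:ATE}(d)\|_{\mathcal{H}_{\mathcal{Y}}} \leq \|\hat{\theta}^{D:ATE}(d) - \check{\theta}_0^{D:ATE}(d)\|_{\mathcal{H}_{\mathcal{Y}}}$, and the latter is $o_p(1)$ uniformly in $d$ by Theorem~\ref{theorem:consistency_dist}. A triangle inequality then yields, with high probability,
\begin{equation*}
\|\hat{\mu}_J - \check{\theta}_0^{D:ATE}(d)\|_{\mathcal{H}_{\mathcal{Y}}}
\leq O(J^{-1/2}) + O\!\left(n^{-\frac{1}{2}\frac{c_3-1}{c_3+1/b_3}}\right),
\end{equation*}
which vanishes as $n,J \to \infty$.

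To convert the RKHS convergence of mean embeddings into weak convergence of measures, observe that $\hat{\mu}_J$ is the mean embedding of $\hat{\mathbb{Q}}_J$ and $\check{\theta}_0^{D:ATE}(d)$ is the mean embedding of $\theta_0^{D:ATE}(d)$. Under Assumption~\ref{assumption:regularity}, $\mathcal{Y}$ is Polish and locally compact and $\mathcal{H}_{\mathcal{Y}}\subset \mathcal{C}_0$; together with $k_{\mathcal{Y}}$ being bounded, continuous, and characteristic, a result of Simon-Gabriel and Sch\"olkopf shows that the MMD $\mathbb{Q}\mapsto \|\mu(\mathbb{Q}) - \mu(\mathbb{Q}')\|_{\mathcal{H}_{\mathcal{Y}}}$ metrizes the topology of weak convergence on the space of Borel probability measures. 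Hence $\hat{\mathbb{Q}}_J \Rightarrow \theta_0^{D:ATE}(d)$, which is the claimed $\overset{d}{\rightarrow}$ convergence of the herded samples. The argument for $\theta_0^{D:DS}$, $\theta_0^{D:ATT}$, and $\theta_0^{D:CATE}$ is identical, using the corresponding part of Theorem~\ref{theorem:consistency_dist}.

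The main obstacle is the second step: $\hat{\theta}^{D:ATE}(d)$, being an output of kernel ridge regression composed with another regression, is not in general a valid mean embedding — its expansion coefficients need not be nonnegative nor sum to one, so it may lie outside $\mathcal{M}$. Classical herding convergence to the target therefore cannot be invoked directly; one has to route through $\Pi_{\mathcal{M}}\hat{\theta}^{D:ATE}(d)$ and use the RKHS-norm estimate from Theorem~\ref{theorem:consistency_dist} to absorb the mismatch. Once that projection step is in place, the remainder of the argument — Frank--Wolfe convergence plus MMD-metrization of weak convergence — is essentially off the shelf.
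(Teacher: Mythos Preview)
Your proposal is correct and follows essentially the same three-step skeleton as the paper's proof: (i) invoke Theorem~\ref{theorem:consistency_dist} for $\|\hat{\theta}^{D:ATE}(d)-\check{\theta}_0^{D:ATE}(d)\|_{\mathcal{H}_{\mathcal{Y}}}\to 0$; (ii) cite the Frank--Wolfe/herding rate from \cite{bach2012equivalence} for $\|\hat{\mu}_J-\hat{\theta}^{D:ATE}(d)\|_{\mathcal{H}_{\mathcal{Y}}}=O(J^{-1/2})$; (iii) combine by triangle inequality and conclude weak convergence via the MMD-metrization result (the paper cites \cite{sriperumbudur2016optimal} through \cite{simon2020metrizing}).

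The one genuine difference is your projection step. The paper simply writes $\|\hat{\theta}^{D:ATE}(d)-\frac{1}{m}\sum_j\phi(\tilde{Y}_j)\|_{\mathcal{H}_{\mathcal{Y}}}=O(m^{-1/2})$ by direct appeal to \cite{bach2012equivalence}, implicitly treating $\hat{\theta}^{D:ATE}(d)$ as if it lies in the marginal polytope $\mathcal{M}$. You correctly flag that the regression output need not be a bona fide mean embedding, route through $\Pi_{\mathcal{M}}\hat{\theta}^{D:ATE}(d)$, and use nonexpansiveness of the projection together with $\check{\theta}_0^{D:ATE}(d)\in\mathcal{M}$ to close the gap. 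This is a legitimate refinement: the Frank--Wolfe suboptimality bound $f(g_J)-f(g^\star)=O(J^{-1})$ combined with the obtuse-angle property of projection onto a convex set indeed yields $\|g_J-\Pi_{\mathcal{M}}\hat{\theta}\|_{\mathcal{H}_{\mathcal{Y}}}=O(J^{-1/2})$ even when $\hat{\theta}\notin\mathcal{M}$. So your argument is slightly more careful than the paper's on this point, but the underlying strategy is the same.
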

See Supplement~\ref{sec:proof} for the proof. Samples are drawn for given value $d$. Though our nonparametric consistency result is uniform across treatment values, this convergence in distribution result is for a fixed treatment value.
\section{Graphical models}\label{sec:graphical}

In the main text, we study causal functions defined in the potential outcomes framework and identified by selection on observables. In this supplement, we study causal functions and counterfactual distributions defined in the directed acyclic graph (DAG) framework and identified by Pearl's front and back door criteria. We derive estimators, then prove uniform consistency and convergence in distribution.

\subsection{DAG background}

DAGs provide another popular language for causal inference \cite{pearl2009causality}. Rather than reasoning about $\text{\normalfont pr}\{Y^{(d)}\}$, one reasons about $\text{\normalfont pr}\{Y \mid do(D=d)\}$, where both expressions are concerned with the distribution of outcome $Y$ given intervention $D=d$. For a specific setting, graphical criteria in terms of the DAG can help verify conditional independence statements in terms of potential outcomes. In this section, we provide results in terms of causal DAGs, analogous to the results in terms of potential outcomes given in the main text. In particular, we focus on the front and back door criteria, which are the fundamental building blocks of DAG-based causal inference. 

Assume the analyst has access to a causal DAG $G$ with vertex set $W$, partitioned into four disjoint sets $W=(Y,D,X,U)$. $Y$ is the outcome, $D$ is the set of treatments, $X$ is the set of covariates, and $U$ is the set of unobserved variables. Since counterfactual inquiries involve intervention on the graph $G$, we require notation for graph modification. Denote by $G_{\bar{D}}$ the graph obtained by deleting from $G$ all arrows pointing into nodes in $D$. Denote by $G_{\underline{D}}$ the graph obtained by deleting from $G$ all arrows emerging from nodes in $D$. We denote $d$-separation by $\indep_d$. $d$-separation implies statistical independence. Throughout this section, we make the standard faithfulness assumption: $d$-connection implies statistical dependence. 

\subsection{Identification}

We define causal functions and counterfactual distributions in terms of the $do$ operator on the DAG. For clarity of exposition, we focus on the case where $(D,Y)$ are nodes rather than sets.
\begin{definition}[Causal function and counterfactual distribution: DAG]
$\theta_0^{do}(d)=E\{Y \mid do(D=d)\}$ is the counterfactual mean outcome given intervention $D=d$ for the entire population. Likewise we define the counterfactual distribution $\theta_0^{D:do}(d)=\text{\normalfont pr}\{Y \mid do(D=d)\}$ and counterfactual distribution embedding $
\check{\theta}_0^{D:do}(d)=E\{\phi(Y) \mid do(D=d)\}
$ as in Supplement~\ref{sec:distribution}.
\end{definition}
In seminal works, \cite{pearl1993comment,pearl1995causal} states sufficient conditions under which such effects, philosophical quantities defined in terms of interventions on the graph, can be measured from empirical quantities such as outcomes $Y$, treatments $D$, and covariates $X$. We present two sets of sufficient conditions, known as the back door and front door criteria.

\begin{assumption}[Back door criterion]\label{assumption:back-door}
Assume
\begin{enumerate}
    \item No node in $X$ is a descendent of $D$.
    \item $X$ blocks every path between $D$ and $Y$ that contains an arrow into $D$:
$
(Y\indep_d D \mid X)_{G_{\underline{D}}}.
$
\end{enumerate}
\end{assumption}
Intuitively, the analyst requires sufficiently many and sufficiently well placed covariates $X$ in the context of the graph $G$. Assumption~\ref{assumption:back-door} is satisfied if there is no unobserved confounder $U$, or if any unobserved confounder $U$ with a back door path into treatment $D$ is blocked by $X$. 

\begin{assumption}[Front door criterion]\label{assumption:front-door}
Assume
\begin{enumerate}
    \item $X$ intercepts all directed paths from $D$ to $Y$.
    \item There is no unblocked back door path from $D$ to $X$.
    \item All back door paths from $X$ to $Y$ are blocked by $D$.
    \item $\text{\normalfont pr}(D,X)>0$ almost surely.
\end{enumerate}
\end{assumption}
Intuitively, these conditions ensure that $X$ serves to block all spurious paths from $D$ to $Y$; to leave all directed paths unperturbed; and to create no new spurious paths. As before, define the regression
$
\gamma_0(d,x)=E(Y \mid D=d,X=x)
$.
\begin{lemma}[Identification of causal function: DAG \cite{pearl1993comment,pearl1995causal}]\label{theorem:id_treatment_dag}
Depending on which criterion holds, the causal parameter $\theta_0^{do}(d)$ has different expressions.
\begin{enumerate}
    \item If Assumption~\ref{assumption:back-door} holds then
$
\theta_0^{do}(d)=\int \gamma_0(d,x)\mathrm{d}\text{\normalfont pr}(x).
$
    \item If Assumption~\ref{assumption:front-door} holds then
$
\theta_0^{do}(d)=\int \gamma_0(d',x)\mathrm{d}\text{\normalfont pr}(d')\mathrm{d}\text{\normalfont pr}(x \mid d).
$
\end{enumerate}
If in addition Assumption~\ref{assumption:RKHS} holds then the analogous result holds for counterfactual distribution embeddings using $\gamma_0(d,x)=E\{\phi(Y) \mid D=d,X=x\}$ instead, as in Supplement~\ref{sec:distribution}.
\end{lemma}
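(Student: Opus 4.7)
The plan is to derive both identification formulas from Pearl's do-calculus. The two rules needed are Rule 2 (action/observation exchange), which replaces $do(d)$ with ordinary conditioning on $d$ whenever an appropriate $d$-separation holds in $G_{\underline{D}}$, and Rule 3 (insertion/deletion of actions), which removes $do(d)$ entirely whenever $D$ has no relevant descendants in the appropriate mutilated graph. The back door and front door conditions are precisely what is needed to license these substitutions.

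For the back door case, I would first write the law of total expectation after intervention,
\begin{align*}
\theta_0^{do}(d) = \int \mathbb{E}\{Y|X=x, do(D=d)\}\,\mathrm{d}\mathbb{P}(x|do(D=d)).
\end{align*}
The condition $(Y \indep_d D|X)_{G_{\underline{D}}}$ licenses Rule 2 on the inner expectation, yielding $\gamma_0(d,x)$; the condition that no node in $X$ is a descendant of $D$ licenses Rule 3 on the outer distribution, yielding $\mathbb{P}(x)$. Substituting produces the back door formula.

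For the front door case the derivation is more intricate. First decompose as above, treating $X$ as a mediator. Next, Assumption~\ref{assumption:front-door}.2 (no unblocked back door from $D$ to $X$) licenses Rule 2 to give $\mathbb{P}(x|do(D=d)) = \mathbb{P}(x|d)$. The key step is then to show that $\mathbb{E}\{Y|X=x, do(D=d)\} = \int \gamma_0(d',x)\,\mathrm{d}\mathbb{P}(d')$. Here Assumption~\ref{assumption:front-door}.1 (all directed $D$-to-$Y$ paths intercepted by $X$) permits insertion of an intervention $do(X=x)$ and then removal of $do(D=d)$ via Rule 3, while Assumption~\ref{assumption:front-door}.3 (all back door $X$-to-$Y$ paths blocked by $D$) recasts what remains as an inner back door adjustment for the $X \to Y$ effect with adjustment set $\{D\}$. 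Combining the two pieces yields the front door formula.

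The main obstacle is carefully tracking which mutilated graph is relevant at each do-calculus step in the front door derivation, since each rule requires checking a distinct $d$-separation in a different graph; positivity (Assumption~\ref{assumption:front-door}.4) ensures that each conditional probability encountered is well defined. The extension to counterfactual distribution embeddings $\check{\theta}_0^{D:do}(d)$ is then immediate: the identical manipulations apply to $\mathbb{E}\{\phi(Y)|do(D=d)\}$, with Bochner integrability (following from boundedness of $k_{\mathcal{Y}}$ in Assumption~\ref{assumption:RKHS}) justifying exchange of $\phi$ with every expectation and integral, so that each scalar regression $\mathbb{E}(Y|\cdot)$ simply becomes the vector-valued regression $\mathbb{E}\{\phi(Y)|\cdot\}$.
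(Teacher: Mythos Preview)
Your proposal is correct and follows the standard do-calculus derivation originally given by Pearl. Note, however, that the paper does not actually supply its own proof of this lemma: it is stated as a classical identification result and attributed directly to \cite{pearl1993comment,pearl1995causal}, with no proof or sketch in the text or appendices. So there is no ``paper's approach'' to compare against beyond the citation itself; your Rule~2/Rule~3 derivation is precisely the argument Pearl gives in the cited works, and your remark about Bochner integrability for the embedding extension matches how the paper handles the analogous step elsewhere (e.g., in the proof of Theorem~\ref{theorem:representation_dist}).
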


Comparing Lemma~\ref{theorem:id_treatment_dag} with Lemma~\ref{theorem:id_treatment}, we see that if Assumption~\ref{assumption:back-door} holds then
our dose response estimator $\hat{\theta}^{ATE}(d)$ in Section~\ref{sec:algorithm} is also a uniformly consistent estimator of $\theta_0^{do}(d)$. Similarly our counterfactual distribution estimator $\hat{\theta}^{D:ATE}(d)$ converges in distribution to $\hat{\theta}^{D:do}(d)$. In the remainder of this section, we therefore focus on what happens if Assumption~\ref{assumption:front-door} holds instead. We study the causal function and counterfactual distribution.

\subsection{Closed form solutions}

We maintain notation from Section~\ref{sec:algorithm}.

\begin{theorem}[Decoupling via kernel mean embedding: DAG]\label{theorem:representation_treatment_dag}
Suppose Assumptions~\ref{assumption:RKHS} and~\ref{assumption:front-door} hold.
\begin{enumerate}
    \item If an addition $\gamma_0\in\mathcal{H}$ then
$
\theta_0^{do}(d)=\langle \gamma_0,\mu_d\otimes \mu_{x}(d)\rangle_{\mathcal{H}};
$
    \item If in addition $E_3\in \mathcal{L}_2(\mathcal{H}_{\mathcal{Y}},\mathcal{H}_{\mathcal{D}}\otimes \mathcal{H}_{\mathcal{X}})$ then 
    $
\check{\theta}_0^{do}(d)=E^*_3\{\mu_d\otimes \mu_{x}(d)\};
    $
\end{enumerate}
where $\mu_d=\int \phi(d)\mathrm{d}\text{\normalfont pr}(d)$ and $\mu_{x}(d)=\int \phi(x)\mathrm{d}\text{\normalfont pr}(x \mid d)$. 
\end{theorem}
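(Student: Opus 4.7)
The plan is to follow exactly the decoupling template used for Theorem~\ref{theorem:representation_treatment} and Theorem~\ref{theorem:representation_dist}, adapted to the specific form of the front door identification formula from Lemma~\ref{theorem:id_treatment_dag}, namely the double integral $\theta_0^{do}(d)=\int\gamma_0(d',x)\,\mathrm{d}\mathbb{P}(d')\,\mathrm{d}\mathbb{P}(x|d)$. The two key ingredients are the reproducing property on the tensor product RKHS $\mathcal{H}=\mathcal{H}_{\mathcal{D}}\otimes\mathcal{H}_{\mathcal{X}}$ and Bochner integrability, which is guaranteed by boundedness of $k_{\mathcal{D}}$ and $k_{\mathcal{X}}$ (Assumption~\ref{assumption:RKHS}).

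For part (1), I would start from the identification result and write $\gamma_0(d',x)=\langle\gamma_0,\phi(d')\otimes\phi(x)\rangle_{\mathcal{H}}$ using the tensor product reproducing property. Bochner integrability then lets me exchange the two integrals with the inner product, and the tensor product structure lets the two integrals pass separately onto the two factors:
\begin{align*}
\theta_0^{do}(d)
&=\int\int\langle\gamma_0,\phi(d')\otimes\phi(x)\rangle_{\mathcal{H}}\,\mathrm{d}\mathbb{P}(x|d)\,\mathrm{d}\mathbb{P}(d')\\
&=\Bigl\langle\gamma_0,\Bigl(\int\phi(d')\,\mathrm{d}\mathbb{P}(d')\Bigr)\otimes\Bigl(\int\phi(x)\,\mathrm{d}\mathbb{P}(x|d)\Bigr)\Bigr\rangle_{\mathcal{H}}\\
&=\langle\gamma_0,\mu_d\otimes\mu_x(d)\rangle_{\mathcal{H}}.
\end{align*}
A subtle point worth stating explicitly is that since $\mathbb{P}(d')$ and $\mathbb{P}(x|d)$ are unrelated measures (the outer integration is over the marginal while the inner integration is over a conditional evaluated at the fixed intervention value $d$), the product measure structure on $\mathcal{D}\times\mathcal{X}$ is precisely what mirrors the tensor product of embeddings.

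For part (2), the argument is identical once one rewrites the integrand using $E_3$. Under the assumption $E_3\in\mathcal{L}_2(\mathcal{H}_{\mathcal{Y}},\mathcal{H}_{\mathcal{D}}\otimes\mathcal{H}_{\mathcal{X}})$, the generalized regression satisfies $\gamma_0(d',x)=\mathbb{E}\{\phi(Y)|D=d',X=x\}=E_3^*\{\phi(d')\otimes\phi(x)\}$, as noted in the construction preceding Theorem~\ref{theorem:representation_dist}. Substituting and pulling the bounded linear operator $E_3^*$ outside the Bochner integral gives
\begin{align*}
\check{\theta}_0^{do}(d)
&=\int\int E_3^*\{\phi(d')\otimes\phi(x)\}\,\mathrm{d}\mathbb{P}(x|d)\,\mathrm{d}\mathbb{P}(d')\\
&=E_3^*\Bigl\{\int\phi(d')\,\mathrm{d}\mathbb{P}(d')\otimes\int\phi(x)\,\mathrm{d}\mathbb{P}(x|d)\Bigr\}
=E_3^*\{\mu_d\otimes\mu_x(d)\}.
\end{align*}

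The bulk of the work is verification, not obstacle: boundedness of the kernels gives Bochner integrability of $\phi(d')\otimes\phi(x)$ with respect to both measures; Hilbert--Schmidt implies boundedness of $E_3^*$, which justifies exchanging $E_3^*$ with the integral. The only step that requires any care is the justification that the iterated integral over $\mathbb{P}(d')$ and $\mathbb{P}(x|d)$ may be identified with a single Bochner integral against the product measure so that both integral signs can be collapsed onto the tensor product at once; this is a routine Fubini argument for Bochner integrals once uniform bounds on $\|\phi(d')\otimes\phi(x)\|_{\mathcal{H}}\leq\kappa_d\kappa_x$ are in place. No further machinery beyond what was already deployed for Theorems~\ref{theorem:representation_treatment} and~\ref{theorem:representation_dist} is needed.
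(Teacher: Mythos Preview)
Your proposal is correct and matches the paper's own proof essentially line for line: both parts start from the front door identification formula in Lemma~\ref{theorem:id_treatment_dag}, invoke the reproducing property (respectively the representation $\gamma_0(d',x)=E_3^*\{\phi(d')\otimes\phi(x)\}$), and then use Bochner integrability from Assumption~\ref{assumption:RKHS} to pull the two integrals inside, collapsing them onto $\mu_d\otimes\mu_x(d)$. Your additional remarks about Fubini for Bochner integrals and boundedness of $E_3^*$ are more explicit than the paper's treatment but entirely in the same spirit.
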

See Supplement~\ref{sec:derivation} for the proof. The quantity $\mu_d=\int \phi(d)\mathrm{d}\text{\normalfont pr}(d)$ is the mean embedding of $\text{\normalfont pr}(d)$. The quantity $\mu_{x}(d)=\int \phi(x)\mathrm{d}\text{\normalfont pr}(x \mid d)$ is the conditional mean embedding of $\text{\normalfont pr}(x \mid d)$. This representation helps to derive an estimator with a closed form solution. For $\theta_0^{do}(d)$, our estimator will be $\hat{\theta}^{FD}(d)=\langle \hat{\gamma}, \hat{\mu}_d\otimes \hat{\mu}_x(d)\rangle_{\mathcal{H}}$, where $\hat{\gamma}$ is a standard kernel ridge regression, $\hat{\mu}_d$ is an empirical mean, and $\hat{\mu}_x(d)$ is an appropriately defined kernel ridge regression.

\begin{algorithm}[Estimation of causal functions: DAG]\label{algorithm:dag}
Denote the empirical kernel matrices
$
K_{DD}, K_{XX}, K_{YY}\in\mathbb{R}^{n\times n}
$
calculated from observations drawn from population $\text{\normalfont pr}$. Denote by $\odot$ the elementwise product. The front door criterion estimators have the closed form solutions
\begin{enumerate}
    \item $
\hat{\theta}^{FD}(d)=n^{-1}\sum_{i=1}^n Y^{\top}(K_{DD}\odot K_{XX}+n\lambda  I )^{-1}[K_{Dd_i}\odot \{K_{XX}(K_{DD}+n\lambda_1  I )^{-1}K_{Dd}\}]
$
    \item $
\{\hat{\theta}^{D:FD}(d)\}(y)=n^{-1}\sum_{i=1}^n K_{yY}(K_{DD}\odot K_{XX}+n\lambda_3  I )^{-1}[K_{Dd_i}\odot \{K_{XX}(K_{DD}+n\lambda_1  I )^{-1}K_{Dd}\}]    
$
\end{enumerate}
where $(\lambda,\lambda_1,\lambda_3)$ are ridge regression penalty hyperparameters.
\end{algorithm}
We derive this estimator in Supplement~\ref{sec:derivation}. We give theoretical values for $(\lambda,\lambda_1,\lambda_3)$ that optimally balance bias and variance in Theorem~\ref{theorem:consistency_dag} below. Supplement~\ref{sec:tuning} gives practical tuning procedures, one of which is asymptotically optimal.

\subsection{Uniform consistency and convergence in distribution}

Towards a guarantee of uniform consistency, we place the same assumptions as in Section~\ref{sec:algorithm}.

\begin{theorem}[Uniform consistency of causal functions: DAG]\label{theorem:consistency_dag}
Suppose the conditions of Theorem~\ref{theorem:representation_treatment_dag} hold, as well as Assumptions~\ref{assumption:original} and~\ref{assumption:smooth_op} with $\mathcal{A}_1=\mathcal{X}$ and $\mathcal{B}_1=\mathcal{D}$. Set $(\lambda,\lambda_1,\lambda_3)=\{n^{-1/(c+1/b)},n^{-1/(c_1+1/c_1)},n^{-1/(c_3+1/b_3)}\}$, which is rate optimal regularization.
\begin{enumerate}
    \item If in addition Assumption~\ref{assumption:smooth_gamma} holds then with high probability $$    \|\hat{\theta}^{FD}-\theta_0^{do}\|_{\infty}=O\left[n^{-(c-1)/\{2(c+1/b)\}}+n^{-(c_1-1)/\{2(c_1+1/b_1)\}}\right].
$$
    \item If in addition Assumption~\ref{assumption:smooth_op} holds with $\mathcal{A}_3=\mathcal{Y}$ and $\mathcal{B}_3=\mathcal{D}\times \mathcal{X}$ then with high probability
    $$
    \sup_{d\in\mathcal{D}}\|\hat{\theta}^{D:FD}(d)-\check{\theta}_0^{D:do}(d)\|_{\mathcal{H}_{\mathcal{Y}}}=O\left[n^{-(c_3-1)/\{2(c_3+1/b_3)\}}+n^{-(c_1-1)/\{2(c_1+1/b_1)}\right].
$$
\end{enumerate}

\end{theorem}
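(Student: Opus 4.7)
The plan is to leverage Theorem~\ref{theorem:representation_treatment_dag} to reduce the error analysis to bounding three independent kernel estimation tasks, exactly as in the proof of Theorem~\ref{theorem:consistency_treatment} but with a different combinatorial decomposition tailored to the front door identification formula. For Part 1, I would write
\[
\hat{\theta}^{FD}(d)-\theta_0^{do}(d)=\langle \hat{\gamma}-\gamma_0,\ \hat{\mu}_d\otimes \hat{\mu}_x(d)\rangle_{\mathcal{H}}+\langle \gamma_0,\ \hat{\mu}_d\otimes \hat{\mu}_x(d)-\mu_d\otimes\mu_x(d)\rangle_{\mathcal{H}},
\]
then further split the tensor product difference as $(\hat{\mu}_d-\mu_d)\otimes \hat{\mu}_x(d)+\mu_d\otimes(\hat{\mu}_x(d)-\mu_x(d))$. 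Taking $\sup_{d\in\mathcal{D}}$ and using Cauchy--Schwarz with boundedness (Assumption~\ref{assumption:RKHS}) gives
\[
\|\hat{\theta}^{FD}-\theta_0^{do}\|_{\infty}\lesssim \|\hat{\gamma}-\gamma_0\|_{\mathcal{H}}\cdot\kappa_d\kappa_x+\|\gamma_0\|_{\mathcal{H}}\cdot\kappa_x\cdot\|\hat{\mu}_d-\mu_d\|_{\mathcal{H}_{\mathcal{D}}}+\|\gamma_0\|_{\mathcal{H}}\cdot\kappa_d\sup_{d\in\mathcal{D}}\|\hat{\mu}_x(d)-\mu_x(d)\|_{\mathcal{H}_{\mathcal{X}}}.
\]

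Next I would plug in the three known ingredients. The first term is controlled by the RKHS-norm rate for kernel ridge regression under Assumptions~\ref{assumption:original} and~\ref{assumption:smooth_gamma}, giving $O(n^{-\frac{1}{2}\frac{c-1}{c+1/b}})$ with $\lambda=n^{-1/(c+1/b)}$. The second term is an unconditional mean embedding, bounded by the standard $\tilde{n}^{-1/2}$-type rate from \citet[Theorem 1]{tolstikhin2017minimax} with $\tilde{n}=n$, which is strictly faster than the other two terms and gets absorbed. The third term invokes the new rate established in Appendix~\ref{sec:operator} for the conditional mean embedding: under Assumption~\ref{assumption:smooth_op} with $\mathcal{A}_1=\mathcal{X}$, $\mathcal{B}_1=\mathcal{D}$, choosing $\lambda_1=n^{-1/(c_1+1/b_1)}$ yields $\sup_{d}\|\hat{\mu}_x(d)-\mu_x(d)\|_{\mathcal{H}_{\mathcal{X}}}=O(n^{-\frac{1}{2}\frac{c_1-1}{c_1+1/b_1}})$, which follows from the Hilbert--Schmidt rate on $\hat{E}_1-E_1$ composed with $\sup_d\|\phi(d)\|_{\mathcal{H}_{\mathcal{D}}}\le\kappa_d$. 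Summing the dominant two contributions gives the claimed rate.

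For Part 2, I would mimic this decomposition in the operator-valued setting. Writing $\hat{\theta}^{D:FD}(d)-\check{\theta}_0^{D:do}(d)=(\hat{E}_3-E_3)^*\{\hat{\mu}_d\otimes \hat{\mu}_x(d)\}+E_3^*\{\hat{\mu}_d\otimes \hat{\mu}_x(d)-\mu_d\otimes\mu_x(d)\}$, bounding the first term using $\|\hat{E}_3-E_3\|_{\mathcal{L}_2}$ times the uniform bound $\kappa_d\kappa_x$ on the input, and the second term using the operator norm of $E_3$ against the tensor-product difference handled exactly as in Part 1. The rate for $\|\hat{E}_3-E_3\|_{\mathcal{L}_2}$ under Assumption~\ref{assumption:smooth_op} with $\mathcal{A}_3=\mathcal{Y}$, $\mathcal{B}_3=\mathcal{D}\times\mathcal{X}$ comes from the same Appendix~\ref{sec:operator} result, delivering $O(n^{-\frac{1}{2}\frac{c_3-1}{c_3+1/b_3}})$ with $\lambda_3=n^{-1/(c_3+1/b_3)}$; this replaces the kernel ridge regression term and produces the stated bound.

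The main obstacle will be establishing the uniformity in $d$ of the conditional mean embedding error in Hilbert-Schmidt norm, i.e. controlling $\sup_{d\in\mathcal{D}}\|\hat{\mu}_x(d)-\mu_x(d)\|_{\mathcal{H}_{\mathcal{X}}}$ rather than a pointwise or $L^2$ bound. I would handle this by writing $\hat{\mu}_x(d)-\mu_x(d)=(\hat{E}_1-E_1)^*\phi(d)$ and using $\|(\hat{E}_1-E_1)^*\phi(d)\|_{\mathcal{H}_{\mathcal{X}}}\le\|\hat{E}_1-E_1\|_{\mathrm{op}}\cdot\kappa_d\le\|\hat{E}_1-E_1\|_{\mathcal{L}_2}\cdot\kappa_d$, so the Hilbert--Schmidt rate from Appendix~\ref{sec:operator} directly yields a uniform-in-$d$ bound via boundedness of $\phi(d)$. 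Everything else is a fairly mechanical combination of triangle inequalities, Cauchy--Schwarz, and the independent bounds already proved, so the proof reduces to organizing these pieces carefully and verifying that the hyperparameter choices $(\lambda,\lambda_1,\lambda_3)$ stated in the theorem align across the three sub-problems.
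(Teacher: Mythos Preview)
Your proposal is correct and follows essentially the same route as the paper: decompose the error via Theorem~\ref{theorem:representation_treatment_dag}, split the tensor product difference $\hat{\mu}_d\otimes\hat{\mu}_x(d)-\mu_d\otimes\mu_x(d)$ by triangle inequality, and invoke the regression, unconditional mean embedding, and conditional mean embedding rates (Propositions~\ref{theorem:regression},~\ref{theorem:mean},~\ref{theorem:conditional}). The only minor imprecision is your claim that $\|\hat{\mu}_d\otimes\hat{\mu}_x(d)\|\le\kappa_d\kappa_x$: while $\|\hat{\mu}_d\|\le\kappa_d$ holds trivially, $\|\hat{\mu}_x(d)\|\le\kappa_x$ does not, since the kernel ridge estimator is not a convex combination of feature maps---the paper sidesteps this by using a three-term split with $\langle\hat{\gamma}-\gamma_0,\mu_d\otimes\mu_x(d)\rangle$ instead, but your version is equally easy to repair via $\|\hat{\mu}_x(d)\|\le\|\mu_x(d)\|+\|\hat{\mu}_x(d)-\mu_x(d)\|\le\kappa_x+o_p(1)$.
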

Explicit constants hidden by the $O(\cdot)$ notation are indicated in Supplement~\ref{sec:proof}. The rate is at best $n^{-1/4}$ when $(c,c_1,c_3)=2$ and $(b,b_1,b_3)\rightarrow \infty$, i.e. when the regressions are smooth and when the effective dimensions are finite. 
%The slow rates reflect the challenge of a $\sup$ norm guarantee, which is much stronger than a mean square error guarantee. The $\sup$ norm guarantee encodes caution about worst case scenarios when informing policy decisions. 
 Finally, we present a convergence in distribution result.

\begin{theorem}[Convergence in distribution of counterfactual distributions: DAG]\label{theorem:conv_dist_dag}
Suppose the conditions of Theorem~\ref{theorem:consistency_dag} hold, as well as Assumption~\ref{assumption:regularity}. Suppose samples $(\tilde{Y}_j)$ are calculated for $\theta_0^{D:FD}(d)$ as described in Algorithm~\ref{algorithm:herding}. Then $(\tilde{Y}_j)\rightsquigarrow \theta_0^{D:do}(d)$.
\end{theorem}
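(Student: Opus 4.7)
The plan is to mirror the argument sketched for Theorem~\ref{theorem:conv_dist}, swapping in the front-door estimator and invoking the DAG consistency result in place of the selection-on-observables one. Throughout, fix the treatment value $d$ at which samples are drawn, and let $\hat{\mathbb{Q}}_J = J^{-1}\sum_{j=1}^J \delta_{\tilde{Y}_j}$ denote the empirical distribution of the first $J$ herding samples. The three ingredients I want to assemble are: (i) MMD consistency of the embedding estimator, (ii) convergence of the herding empirical embedding to the embedding that seeds it, and (iii) an MMD-implies-weak-convergence step that uses Assumption~\ref{assumption:regularity}.

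First I would apply Theorem~\ref{theorem:consistency_dag}, part~(2), at the fixed value $d$. Taking the supremum over $d\in\mathcal{D}$ yields in particular
\[
\bigl\|\hat{\theta}^{D:FD}(d) - \check{\theta}_0^{D:do}(d)\bigr\|_{\mathcal{H}_{\mathcal{Y}}} = o_p(1).
\]
Next, I would analyze Algorithm~\ref{algorithm:herding} when fed the finite estimator $\hat{\theta}^{D:FD}(d)\in\mathcal{H}_{\mathcal{Y}}$. Kernel herding with a bounded characteristic kernel produces samples whose empirical mean embedding $\mu_{\hat{\mathbb{Q}}_J} := J^{-1}\sum_{j=1}^J \phi(\tilde{Y}_j)$ satisfies $\|\mu_{\hat{\mathbb{Q}}_J} - \hat{\theta}^{D:FD}(d)\|_{\mathcal{H}_\mathcal{Y}} \to 0$ as $J\to\infty$; boundedness $\sup_y\|\phi(y)\|_{\mathcal{H}_\mathcal{Y}}\le\kappa_y$ from Assumption~\ref{assumption:RKHS} and the greedy selection rule ensure the standard herding decay applies, exactly as in the proof of Theorem~\ref{theorem:conv_dist}. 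Combining with the triangle inequality,
\[
\bigl\|\mu_{\hat{\mathbb{Q}}_J} - \check{\theta}_0^{D:do}(d)\bigr\|_{\mathcal{H}_{\mathcal{Y}}} \;\le\; \bigl\|\mu_{\hat{\mathbb{Q}}_J} - \hat{\theta}^{D:FD}(d)\bigr\|_{\mathcal{H}_{\mathcal{Y}}} + \bigl\|\hat{\theta}^{D:FD}(d) - \check{\theta}_0^{D:do}(d)\bigr\|_{\mathcal{H}_{\mathcal{Y}}} \;\xrightarrow{p}\; 0,
\]
so the empirical distribution $\hat{\mathbb{Q}}_J$ of the herded samples converges to the target counterfactual distribution $\theta_0^{D:do}(d)$ in MMD.

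Finally, I would upgrade MMD convergence to weak convergence. Under Assumption~\ref{assumption:regularity}, $\mathcal{Y}$ is locally compact Polish and $\mathcal{H}_\mathcal{Y}\subset\mathcal{C}_0$; together with the characteristic property of $k_\mathcal{Y}$ (Assumption~\ref{assumption:RKHS}), \citet[]{simon2020metrizing} show that MMD metrizes weak convergence on the space of Borel probability measures. Hence $\mu_{\hat{\mathbb{Q}}_J} \to \check{\theta}_0^{D:do}(d)$ in $\mathcal{H}_\mathcal{Y}$ implies $\hat{\mathbb{Q}}_J \Rightarrow \theta_0^{D:do}(d)$, which is exactly the claim $(\tilde{Y}_j)\overset{d}{\rightarrow}\theta_0^{D:do}(d)$. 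I expect the two routine ingredients (i) and (ii) to be essentially identical to those in the proof of Theorem~\ref{theorem:conv_dist}; the only genuinely new content is plugging in the DAG-based estimator and its consistency bound from Theorem~\ref{theorem:consistency_dag}. The main subtlety, as in the selection-on-observables case, is ensuring that the metrization-of-weak-convergence result applies, which is why Assumption~\ref{assumption:regularity} is imposed.
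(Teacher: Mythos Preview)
Your proposal is correct and follows essentially the same approach as the paper: the paper's proof of this theorem consists of the single sentence ``The argument is identical to the proof of Theorem~\ref{theorem:conv_dist},'' and your three-step plan (consistency of the embedding estimator via Theorem~\ref{theorem:consistency_dag}, herding convergence via \cite{bach2012equivalence}, then MMD-metrizes-weak-convergence via \cite{sriperumbudur2016optimal,simon2020metrizing}) is exactly that argument with the front-door estimator substituted in.
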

See Supplement~\ref{sec:proof} for the proof. %Note that samples are drawn for given value $d$. Though our nonparametric consistency result is uniform across treatment values, this convergence in distribution result is for a fixed treatment value.
\section{Algorithm derivation}\label{sec:derivation}

In this supplement, we derive estimators for (i) causal functions, (ii) counterfactual distributions, and (iii) graphical models. Before we do so, we compare kernel methods to series estimation. For intuition, consider $\hat{\theta}^{ATE}(d)$ with linear kernels $k(d,d')=d d'$ and $k(x,x')=x^{\top}x'$. Then by singular value decomposition,
$$
\hat{\theta}^{ATE}(d)=\left(d n^{-1}\sum_{i=1}^nX_i\right)^{\top}\left(n^{-1}\sum_{i=1}^n D_i^2X_iX_i^{\top}+\lambda  I \right)^{-1}\left(n^{-1}\sum_{i=1}^n D_iX_i Y_i\right).
$$
This formulation is interpretable as a regularized series estimator with basis function $\phi(d,x)=dx$. However, it requires scalar treatment, finite dimensional covariate, linear ridge regression, and computation $O\{dim(X)^3\}$. By contrast, the formulation in Algorithm~\ref{algorithm:treatment} allows for generic treatment, generic covariate, nonlinear ridge regression, and computation $O(n^3)$.

\subsection{Causal functions}

\begin{proof}[Proof of Theorem~\ref{theorem:representation_treatment}]
In Assumption~\ref{assumption:RKHS}, we impose that the scalar kernels are bounded. This assumption has several implications. First, the feature maps are Bochner integrable \cite[Definition A.5.20]{steinwart2008support}. Bochner integrability permits us to interchange expectation and inner product. Second, the mean embeddings exist. Third, the product kernel is also bounded and hence the tensor product RKHS inherits these favorable properties. By Lemma~\ref{theorem:id_treatment} and linearity of expectation,
 \begin{align*}
    \theta_0^{ATE}(d)&= \int \gamma_0(d,x)\mathrm{d}\text{\normalfont pr}(x)\\
    &=\int \langle \gamma_0, \phi(d)\otimes \phi(x)\rangle_{\mathcal{H}}  \mathrm{d}\text{\normalfont pr}(x) \\
    &= \langle \gamma_0, \phi(d)\otimes \int\phi(x) \mathrm{d}\text{\normalfont pr}(x) \rangle_{\mathcal{H}} \\
    &= \langle \gamma_0, \phi(d)\otimes \mu_x \rangle_{\mathcal{H}}.
\end{align*}
Likewise for $\theta_0^{DS}(d,\tilde{\text{\normalfont pr}})$. Next,
\begin{align*}
    \theta_0^{ATT}(d,d')&= \int \gamma_0(d',x)\mathrm{d}\text{\normalfont pr}(x \mid d)\\
    &=\int \langle \gamma_0, \phi(d')\otimes \phi(x)\rangle_{\mathcal{H}}  \mathrm{d}\text{\normalfont pr}(x \mid d) \\
    &= \langle \gamma_0, \phi(d')\otimes \int\phi(x) \mathrm{d}\text{\normalfont pr}(x \mid d) \rangle_{\mathcal{H}}\\
    &= \langle \gamma_0, \phi(d')\otimes \mu_x(d) \rangle_{\mathcal{H}}.
\end{align*}
Finally, 
 \begin{align*}
    \theta_0^{CATE}(d,v)&= \int \gamma_0(d,v,x)\mathrm{d}\text{\normalfont pr}(x \mid v)\\
    &=\int \langle \gamma_0, \phi(d)\otimes \phi(v) \otimes \phi(x)\rangle_{\mathcal{H}}  \mathrm{d}\text{\normalfont pr}(x \mid v) \\
    &= \langle \gamma_0, \phi(d)\otimes \phi(v) \otimes \int\phi(x) \mathrm{d}\text{\normalfont pr}(x \mid v) \rangle_{\mathcal{H}} \\
    &= \langle \gamma_0, \phi(d)\otimes \phi(v) \otimes \mu_{x}(v) \rangle_{\mathcal{H}}.
\end{align*}
\cite[Lemma 4.34]{steinwart2008support} guarantees that the derivative feature map $\nabla_d\phi(d)$ exists, is continuous, and is Bochner integrable since $$\kappa_d'=\left\{\sup_{d,d'\in\mathcal{D}}\nabla_d\nabla_{d'}k(d,d')\right\}^{1/2}<\infty.$$ Therefore the derivations remain valid for incremental functions.
\end{proof}

\begin{proof}[Proof of Algorithm~\ref{algorithm:treatment}]
By standard arguments \cite{kimeldorf1971some}
\begin{align*}
\hat{\gamma}(d,x)&=
\langle \hat{\gamma}, \phi(d)\otimes \phi(x) \rangle_{\mathcal{H}}=Y^{\top}(K_{DD}\odot K_{XX}+n\lambda  I )^{-1}(K_{Dd}\odot K_{Xx}).    
\end{align*}
    The results for $\hat{\theta}^{ATE}(d)$ holds by substitution:
   $$
   \hat{\mu}_x=n^{-1}\sum_{i=1}^n \phi(x_i),\quad \hat{\theta}^{ATE}(d)=\langle \hat{\gamma}, \phi(d)\otimes \hat{\mu}_x\rangle_{\mathcal{H}}.
   $$
    Likewise for $\hat{\theta}^{DS}(d,\tilde{\text{\normalfont pr}})$. 
    
    The results for $\hat{\theta}^{ATT}(d,d')$ and $\hat{\theta}^{CATE}(d,v)$ use the closed form of the conditional mean embedding from \cite[Algorithm 1]{singh2019kernel}. Specifically,
    $$
    \hat{\mu}_x(d)=K_{\cdot X}(K_{DD}+n\lambda_1 I )^{-1}K_{Dd},\quad  \hat{\theta}^{ATT}(d,d')=\langle \hat{\gamma}, \phi(d')\otimes \hat{\mu}_x(d)\rangle_{\mathcal{H}} 
    $$
and
    $$
    \hat{\mu}_{x}(v)=K_{\cdot X}(K_{VV}+n\lambda_2 I )^{-1}K_{Vv},\quad \hat{\theta}^{CATE}(d,v)=\langle \hat{\gamma}, \phi(d)\otimes \phi(v)\otimes \hat{\mu}_{x}(v)\rangle_{\mathcal{H}}.
    $$
For incremental functions, replace $\hat{\gamma}(d,x)$ with
\begin{align*}
\nabla_d\hat{\gamma}(d,x)&=
\langle \hat{\gamma}, \nabla_d \phi(d)\otimes \phi(x) \rangle_{\mathcal{H}}=Y^{\top}(K_{DD}\odot K_{XX}+n\lambda I )^{-1}(\nabla_d K_{Dd}\odot K_{Xx}).
\end{align*}
\end{proof}

\subsection{Counterfactual distributions}

\begin{proof}[Proof of Theorem~\ref{theorem:representation_dist}]
Assumption~\ref{assumption:RKHS} implies Bochner integrability, which permits us to interchange expectation and evaluation. Therefore by Lemma~\ref{theorem:id_treatment} and linearity of expectation,
    \begin{align*}
   \check{\theta}_0^{D:ATE}(d)&= \int \gamma_0(d,x)\mathrm{d}\text{\normalfont pr}(x)\\
   &=\int E_3^*\{\phi(d)\otimes \phi(x)\} \mathrm{d}\text{\normalfont pr}(x) \\
    &= E_3^*\{\phi(d)\otimes \int \phi(x)\mathrm{d}\text{\normalfont pr}(x)\} \\
    &= E_3^*\{\phi(d)\otimes \mu_x\}.
\end{align*}
Likewise for $\check{\theta}_0^{D:DS}(d,\tilde{\text{\normalfont pr}})$. Next,
    \begin{align*}
   \check{\theta}_0^{D:ATT}(d)&= \int \gamma_0(d',x)\mathrm{d}\text{\normalfont pr}(x \mid d)\\
   &=\int E_3^*\{\phi(d')\otimes \phi(x)\} \mathrm{d}\text{\normalfont pr}(x \mid d) \\
    &= E_3^*\{\phi(d')\otimes \int \phi(x)\mathrm{d}\text{\normalfont pr}(x \mid d)\}  \\
    &= E_3^*\{\phi(d')\otimes \mu_x(d)\}. 
\end{align*}
Finally,
    \begin{align*}
   \check{\theta}_0^{D:CATE}(d)&= \int \gamma_0(d,v,x)\mathrm{d}\text{\normalfont pr}(x \mid v) \\
   &=\int E_3^*\{\phi(d)\otimes \phi(v)\otimes \phi(x)\} \mathrm{d}\text{\normalfont pr}(x \mid v) \\
    &= E_3^*\{\phi(d)\otimes \phi(v)\otimes \int \phi(x)\mathrm{d}\text{\normalfont pr}(x \mid v)\}  \\
    &= E_3^*\{\phi(d)\otimes \phi(v) \otimes \mu_x(v)\}.
\end{align*}
\end{proof}

\begin{proof}[Proof of Algorithm~\ref{algorithm:dist}]
By \cite[Algorithm 1]{singh2019kernel},
$$
\hat{\gamma}(d,x)=\hat{E}_3^*\{\phi(d)\otimes \phi(x)\}=K_{\cdot Y}(K_{DD}\odot K_{XX}+n\lambda_3 I )^{-1}(K_{Dd}\odot K_{Xx}).
$$
The result for $\hat{\theta}^{D:ATE}$ follows by substitution: 
$$
\hat{\mu}_x=n^{-1}\sum_{i=1}^n \phi(x_i),\quad \hat{\theta}^{D:ATE}(d)=\hat{E}_3^*\{\phi(d)\otimes \hat{\mu}_x\}.
$$
Likewise for $\hat{\theta}^{D:DS}$. Both $ \hat{\theta}^{D:ATT}$ and $\hat{\theta}^{D:CATE}$ appeal to the closed form for conditional mean embeddings from \cite[Algorithm 1]{singh2019kernel}. Specifically,
\begin{align*}
     \hat{\mu}_x(d)&=K_{\cdot X}(K_{DD}+n\lambda_1 I )^{-1}K_{Dd},\quad \hat{\theta}^{D:ATT}(d,d')=\hat{E}_3^*\{\phi(d')\otimes \hat{\mu}_x(d)\}; \\
         \hat{\mu}_{x}(v)&=K_{\cdot X}(K_{VV}+n\lambda_2 I )^{-1}K_{Vv},\quad \hat{\theta}^{D:CATE}(d,v)=\hat{E}_3^*\{\phi(d)\otimes \phi(v)\otimes \hat{\mu}_{x}(v)\}.
\end{align*}
\end{proof}

\subsection{Graphical models}

\begin{proof}[Proof of Theorem~\ref{theorem:representation_treatment_dag}]
Assumption~\ref{assumption:RKHS} implies Bochner integrability, which permits us to interchange expectation and inner product. Therefore
\begin{align*}
    \theta_0^{do}(d)&=\int \gamma_0(d',x)\mathrm{d}\text{\normalfont pr}(d')\mathrm{d}\text{\normalfont pr}(x \mid d)  \\
    &=\int \langle \gamma_0,\phi(d')\otimes \phi(x) \rangle_{\mathcal{H}} \mathrm{d}\text{\normalfont pr}(d')\mathrm{d}\text{\normalfont pr}(x \mid d) \\
    &=\langle \gamma_0,\int \phi(d')\mathrm{d}\text{\normalfont pr}(d') \otimes \int \phi(x)\mathrm{d}\text{\normalfont pr}(x \mid d) \rangle_{\mathcal{H}}  \\
    &= \langle \gamma_0,\mu_d \otimes \mu_x(d) \rangle_{\mathcal{H}}.
\end{align*}
Similarly,  
\begin{align*}
    \check{\theta}_0^{D:do}(d)&=\int \gamma_0(d',x)\mathrm{d}\text{\normalfont pr}(d')\mathrm{d}\text{\normalfont pr}(x \mid d)  \\
    &=\int E_3^*\{\phi(d')\otimes \phi(x)\} \mathrm{d}\text{\normalfont pr}(d')\mathrm{d}\text{\normalfont pr}(x \mid d) \\
    &=E_3^*\left\{\int \phi(d')\mathrm{d}\text{\normalfont pr}(d') \otimes \int \phi(x)\mathrm{d}\text{\normalfont pr}(x \mid d) \right\}  \\
    &= E_3^*\{\mu_d \otimes \mu_x(d)\}.
\end{align*}
\end{proof}

\begin{proof}[Proof of Algorithm~\ref{algorithm:dag}]
Consider $\hat{\theta}^{do}$. By standard arguments \cite{kimeldorf1971some}
\begin{align*}
\hat{\gamma}(d,x)&=
\langle \hat{\gamma}, \phi(d)\otimes \phi(x) \rangle_{\mathcal{H}}=Y^{\top}(K_{DD}\odot K_{XX}+n\lambda I )^{-1}(K_{Dd}\odot K_{Xx}).  
\end{align*}
By \cite[Algorithm 1]{singh2019kernel}, write the mean embedding and conditional mean embedding as
  $$
  \hat{\mu}_x=n^{-1}\sum_{i=1}^n \phi(x_i),\quad \hat{\mu}_x(d)=K_{\cdot X}(K_{DD}+n\lambda_1 I )^{-1}K_{Dd}.
  $$
    Substitute these quantities to obtain
    $
    \hat{\theta}^{do}(d)=\langle \hat{\gamma}, \hat{\mu}_d\otimes \hat{\mu}_x(d)\rangle_{\mathcal{H}}
    $. 
    Next consider $\hat{\theta}^{D:do}$. By \cite[Algorithm 1]{singh2019kernel}
$$
\hat{\gamma}(d,x)=\hat{E}_3^*\{\phi(d)\otimes \phi(x)\}=K_{\cdot Y}(K_{DD}\odot K_{XX}+n\lambda_3 I )^{-1}(K_{Dd}\odot K_{Xx}).
$$
Substitution of the mean embeddings gives
    $
    \hat{\theta}^{D:do}(d)=\hat{E}_3^* \{\hat{\mu}_d\otimes \hat{\mu}_x(d)\}
    $. 
\end{proof}
\section{Tuning}\label{sec:tuning}

In the present work, we propose a family of novel estimators that are combinations of kernel ridge regressions. As such, the same two kinds of hyperparameters that arise in kernel ridge regressions arise in our estimators: ridge regression penalties and kernel hyperparameters. In this section, we describe practical tuning procedures for such hyperparameters. To simplify the discussion, we focus on the regression of $Y$ on $W$. Recall that the closed form solution of the regression estimator using all observations is
$$
\hat{f}(w)=K_{wW}(K_{WW}+n\lambda  I )^{-1}Y.
$$

\subsection{Ridge penalty}

It is convenient to tune $\lambda$ by leave-one-out cross validation (LOOCV) or generalized cross validation (GCV), since the validation losses have closed form solutions.

\begin{algorithm}[Ridge penalty tuning by LOOCV]
Construct the matrices
$$
H_{\lambda}= I -K_{WW}(K_{WW}+n\lambda  I )^{-1}\in\mathbb{R}^{n\times n},\quad \tilde{H}_{\lambda}=diag(H_{\lambda})\in\mathbb{R}^{n\times n}
$$
where $\tilde{H}_{\lambda}$ has the same diagonal entries as $H_{\lambda}$ and off diagonal entries of zero. Then set
$$
\lambda^*=\argmin_{\lambda \in\Lambda} n^{-1}\|\tilde{H}_{\lambda}^{-1}H_{\lambda} Y\|_2^2,\quad \Lambda\subset\mathbb{R}.
$$
\end{algorithm}

\begin{proof}
We prove that $n^{-1}\|\tilde{H}_{\lambda}^{-1} H_{\lambda}Y\|_2^2$ is the LOOCV loss. By definition, the LOOCV loss is
$
\mathcal{E}(\lambda)=n^{-1}\sum_{i=1}^n \{Y_i-\hat{f}_{-i}(W_i)\}^2
$
where $\hat{f}_{-i}$ is the regression estimator using all observations except the $i$th observation.

Let $\Phi$ be the matrix of features, with $i$th row $\phi(W_i)^{\top}$, and let $Q=\Phi^{\top}\Phi+n\lambda I$. By the regression first order condition,
\begin{align*}
    \hat{f}&=Q^{-1}\Phi^{\top}Y,\quad 
    \hat{f}_{-i}=\{Q-\phi(W_i)\phi(W_i)^{\top}\}^{-1}\{\Phi^{\top}Y-\phi(W_i)Y_i\}.
\end{align*}
Recall the Sherman-Morrison formula for rank one updates:
$$
(A+uv^{\top})^{-1}=A^{-1}-\frac{A^{-1}uv^{\top} A^{-1}}{1+v^{\top}A^{-1}u}.
$$
Hence
$$
\{Q-\phi(W_i)\phi(W_i)^{\top}\}^{-1}=Q^{-1}+\frac{Q^{-1}\phi(W_i)\phi(W_i)^{\top}Q^{-1}}{1-\phi(W_i)^{\top}Q^{-1}\phi(W_i)}.
$$
Let $\beta_i=\phi(W_i)^{\top} Q^{-1} \phi(W_i)$. Then
\begin{align*}
    \hat{f}_{-i}(W_i)&=\phi(W_i)^{\top} \left\{Q^{-1}+\frac{Q^{-1}\phi(W_i)\phi(W_i)^{\top}Q^{-1}}{1-\beta_i}\right\}\{\Phi^{\top}Y-\phi(W_i)Y_i\} \\
    &=\phi(W_i)^{\top} \left\{I+\frac{Q^{-1}\phi(W_i)\phi(W_i)^{\top}}{1-\beta_i}\right\}\{\hat{f}-Q^{-1}\phi(W_i)Y_i\} \\
    &=\left(1 +\frac{\beta_i}{1-\beta_i}\right)\phi(W_i)^{\top}\{\hat{f}-Q^{-1}\phi(W_i)Y_i\}\\
    &=\left(1 +\frac{\beta_i}{1-\beta_i}\right)\{\hat{f}(W_i)-\beta_iY_i\} \\
    &=\frac{1}{1-\beta_i}\{\hat{f}(W_i)-\beta_iY_i\},
\end{align*}
i.e. $\hat{f}_{-i}$ can be expressed in terms of $\hat{f}$.
Note that
\begin{align*}
    Y_i-\hat{f}_{-i}(W_i)&=Y_i-\frac{1}{1-\beta_i}\{\hat{f}(W_i)-\beta_iY_i\} \\
    &=Y_i+\frac{1}{1-\beta_i}\{\beta_iY_i-\hat{f}(W_i)\} \\
    &=\frac{1}{1-\beta_i}\{Y_i-\hat{f}(W_i)\}.
\end{align*}
Substituting back into the LOOCV loss
\begin{align*}
    n^{-1}\sum_{i=1}^n \left\{Y_i-\hat{f}_{-i}(W_i)\right\}^2 
    &=n^{-1}\sum_{i=1}^n \left[\{Y_i-\hat{f}(W_i)\}\left(\frac{1}{1-\beta_i}\right)\right]^2 \\
    &= n^{-1}\|\tilde{H}_{\lambda}^{-1} \{Y-K_{WW}(K_{WW}+n\lambda  I )^{-1}Y\}\|_2^2 \\
    &=n^{-1}\|\tilde{H}_{\lambda}^{-1} H_{\lambda}Y\|_2^2,
\end{align*}
since
$$
(\tilde{H}_{\lambda}^{-1})_{ii}=\frac{1}{(\tilde{H}_{\lambda})_{ii}}=\frac{1}{(H_{\lambda})_{ii}}=\frac{1}{1-\{K_{WW}(K_{WW}+n\lambda  I )^{-1}\}_{ii}}
$$
and
$$
K_{WW}(K_{WW}+n\lambda  I )^{-1}=\Phi\Phi^{\top}(\Phi\Phi^{\top}+n\lambda  I )^{-1}=\Phi(\Phi^{\top}\Phi+n\lambda I)^{-1}\Phi^{\top}=\Phi Q^{-1}\Phi^{\top}.
$$
\end{proof}

\begin{algorithm}[Ridge penalty tuning by GCV]
Construct the matrix
$$
H_{\lambda}= I -K_{WW}(K_{WW}+n\lambda  I )^{-1}\in\mathbb{R}^{n\times n}.
$$
Then set
$$
\lambda^*=\argmin_{\lambda \in\Lambda} n^{-1}\|\{\text{\normalfont tr}(H_{\lambda})\}^{-1} H_{\lambda} Y\|_2^2,\quad \Lambda\subset\mathbb{R}.
$$
\end{algorithm}

\begin{proof}
We match symbols with the classic derivation of \cite{craven1978smoothing}. Observe that
$$
\begin{Bmatrix} \hat{f}(W_1) \\ \vdots \\ f(W_n) \end{Bmatrix}=K_{WW}(K_{WW}+n\lambda  I )^{-1}Y=A_{\lambda}Y,\quad A_{\lambda}=K_{WW}(K_{WW}+n\lambda  I )^{-1}.
$$
Therefore
$$
H_{\lambda}= I -K_{WW}(K_{WW}+n\lambda  I )^{-1}= I -A_{\lambda}.
$$
\end{proof}

GCV can be viewed as a rotation invariant modification of LOOCV. In practice, we find that LOOCV and GCV provide almost identical hyperparameter values.

\subsection{Kernel}

The exponentiated quadratic kernel is the most popular kernel among machine learning researchers:
$$
k(w,w')=\exp\left\{-\frac{1}{2}\frac{(w-w')^2}{\iota^2}\right\}.
$$
Importantly, this kernel satisfies the required properties; it is continuous, bounded, and characteristic.

\cite[Section 4.3]{williams2006gaussian} characterize the exponentiated quadratic RKHS as an attenuated series of the form
$$
\mathcal{H}=\left(f=\sum_{j=1}^{\infty}f_j\varphi_j:\;\sum_{j=1}^{\infty} \frac{f_j^2}{\eta_j}<\infty\right),\quad \langle f,f' \rangle_{\mathcal{H}}=\sum_{j=1}^{\infty} \frac{f_jf_j'}{\eta_j}.
$$
For simplicity, take $\mathcal{W}=\mathbb{R}$ and take the measure $\nu$ to be the standard Gaussian distribution (more generally, it can be the population distribution $\text{\normalfont pr}$). Recall that the generalization of Mercer's Theorem permits $\mathcal{W}$ to be separable. Then the induced RKHS is characterized by
$$
\eta_j=\left(\frac{2\bar{a}}{\bar{A}}\right)^{1/2}\bar{B}^j,\quad \varphi_j(w)=\exp\{-(\bar{c}-\bar{a})w^2\}H_j\{w(2\bar{c})^{1/2}\}.%,\quad H_j(w)=(-1)^k \exp(w^2)\frac{d^k}{dw^k}\exp(-w^2).
$$
$H_j$ is the $j$th Hermite polynomial, and the constants $(\bar{a},\bar{b},\bar{c},\bar{A},\bar{B})>0$ are
$$
\bar{a}=\frac{1}{4},\quad \bar{b}=\frac{1}{2\iota^2},\quad \bar{c}=(\bar{a}^2+2\bar{a}\bar{b})^{1/2},\quad \bar{A}=\bar{a}+\bar{b}+\bar{c},\quad \bar{B}=\frac{\bar{b}}{\bar{A}}<1.
$$
The eigenvalues $(\eta_j)$ geometrically decay, and the series $(\varphi_j)$ consists of weighted Hermite polynomials. For a function to belong to this RKHS, its coefficients on higher order weighted Hermite polynomials must be small.

Observe that the exponentiated quadratic kernel has a hyperparameter: the lengthscale $\iota$. A convenient heuristic is to set the lengthscale equal to the median interpoint distance of $(W_i)$ $(i=1,...,n)$, where the interpoint distance between observations $i$ and $j$ is $\|W_i-W_j\|_{\mathcal{W}}$. When the input $W$ is multidimensional, we use the kernel obtained as the product of scalar kernels for each input dimension. For example, if $\mathcal{W}\subset \mathbb{R}^d$ then
$$
k(w,w')=\prod_{j=1}^d \exp\left\{-\frac{1}{2}\frac{(w_j-w_j')^2}{\iota_j^2}\right\}.
$$
Each lengthscale $\iota_j$ is set according to the median interpoint distance for that input dimension.

In principle, we could instead use LOOCV or GCV to tune kernel hyperparameters in the same way that we use LOOCV or GCV to tune ridge penalties. However, given our choice of product kernel, this approach becomes impractical in high dimensions. For example, in the dose response curve design, $D\in\mathbb{R}$ and $X\in\mathbb{R}^{100}$ leading to a total of 101 lengthscales $(\iota_j)$. Even with a closed form solution for LOOCV and GCV, searching over this high dimensional grid becomes cumbersome.
\section{Balancing weight proof}\label{sec:balancing}

In this section, we provide the proofs to relate our algorithm with the balancing weight algorithms in previous work.

\begin{proof}[Proof of Proposition~\ref{prop:balance_exists}]
This result is standard in causal inference textbooks, e.g.  \cite[Technical Point 3.1]{hernan2010causal}. We state the proof for clarity:
\begin{align*}
 \int \gamma (d,x)\alpha_0(d,x)\mathrm{d}\text{\normalfont pr}(d,x)
    &= \int \int  \gamma(d,x) \frac{1(d=d^*)}{\text{\normalfont pr}(D=d^* \mid x)}  \mathrm{d}\text{\normalfont pr}(d \mid x) \mathrm{d}\text{\normalfont pr}(x)   \\
    &= \int \frac{1}{\text{\normalfont pr}(D=d^* \mid x)}  \int \gamma(d,x)1(d=d^*) \mathrm{d}\text{\normalfont pr}(d \mid x)\mathrm{d}\text{\normalfont pr}(x)  \\
    &= \int \frac{1}{\text{\normalfont pr}(D=d^* \mid x)} \gamma(d^*,x) \text{\normalfont pr}(D=d^* \mid x) \mathrm{d}\text{\normalfont pr}(x) \\
    &=\int \gamma(d^*,x)\mathrm{d}\text{\normalfont pr}(x).
\end{align*}
The variance of $\alpha_0$ is finite since $\text{\normalfont pr}(D=d^* \mid X)$ is bounded away from zero almost surely.
\end{proof}

\begin{proof}[Proof of Proposition~\ref{prop:balance_dne}]
The result follows from the Riesz representation theorem in $\mathbb{L}^2$, e.g. \cite[Theorem 5.3]{luenberger1997optimization} and \cite[Lemma 2.1]{chernozhukov2022debiased}. It is alluded to in e.g. \cite{van1991differentiable,newey1994asymptotic}. We state the proof for clarity.

Consider the functional $F:\gamma \mapsto \int \gamma(d^*,x)\mathrm{d}\text{\normalfont pr}(x)$ over $\mathbb{L}^2$. A Riesz representer $\alpha_0\in \mathbb{L}^2$ exists if and only if the functional $F$ is bounded and linear. Clearly the functional $F$ is linear in the sense that, for any scalar $c\in\mathbb{R}$, $F(c\gamma)=cF(\gamma)$. A linear functional is bounded over $\mathbb{L}^2$  if and only if it is continuous over $\mathbb{L}^2$ \cite[Proposition 5.1]{luenberger1997optimization}. We will show that this functional is not continuous over $\mathbb{L}^2$.

Consider the zero function $\tilde{0}\in \mathbb{L}^2$. The definition of continuity of $F$ at $\tilde{0}$ is as follows: for all $\epsilon>0$, there exists some $\delta>0$ such that for all $\gamma\in \mathbb{L}^2$, $\|\gamma-\tilde{0}\|_{\mathbb{L}^2}<\delta$ implies $|F(\gamma)-F(\tilde{0})|<\epsilon$. To violate continuity, we must show that there exists some $\epsilon>0$ such that for all $\delta>0$, there exists a $\tilde{\gamma}\in \mathbb{L}^2$ whereby $\|\tilde{\gamma}-\tilde{0}\|_{\mathbb{L}^2}<\delta$ yet $|F(\tilde{\gamma})-F(\tilde{0})|>\epsilon$.

To serve as this counterexample, define the function $\tilde{\gamma}$ such that $\tilde{\gamma}(d^*,x)=1$ for any $x$, and $\tilde{\gamma}(d,x)=0$ for any $x$ and any $d\neq d^*$. Observe that, because treatment is continuous, the set of values for which $d=d^*$ is a set with measure zero. Therefore $\|\tilde{\gamma}-\tilde{0}\|_2=0$ yet $|F(\tilde{\gamma})-F(\tilde{0})|=|1-0|=1$.

In summary, we have shown that $F$ is linear but not continuous over $\mathbb{L}^2$ and therefore not bounded over $\mathbb{L}^2$. Therefore its Riesz representer in $\mathbb{L}^2$ does not exist.
\end{proof}

\begin{proof}[Proof of Corollary~\ref{cor:connect}]
We proceed in steps. For clarity, we focus on the formulation of \cite{hirshberg2019minimax}, who consider estimation of $\theta_0^{ATE}(0)$ for binary treatment. We maintain the notation $\gamma_0(d,x)=E(Y \mid D=d,X=x)$.
\begin{enumerate}
    \item Reformulation of \cite{hirshberg2019minimax}. 
    
    The authors propose the estimator
$$
\tilde{\theta}^{ATE}(0)=n^{-1}\sum_{i=1}^n 1(D_i=0)\hat{w}(X_i) Y_i
$$
where $\hat{w}(x)$ is their estimator of $1/\text{\normalfont pr}(d=0 \mid x)$. Define
$$
\hat{\alpha}_i=\hat{\alpha}(D_i,X_i)=1(D_i=0)\hat{w}(X_i)
$$
so that 
$$
\tilde{\theta}^{ATE}(0)=n^{-1}\sum_{i=1}^n \hat{\alpha}_i Y_i.
$$

    \item Equivalence. 
    
    As noted in \cite[Lemma 1]{hirshberg2019minimax},
    $$
    \tilde{\theta}^{ATE}(0)=n^{-1}\sum_{i=1}^n\hat{f}(X_i)
    $$
    where $\hat{f}(x)$ is a kernel ridge regression estimator of $\gamma_0(0,x)$, which is estimated by subsetting to the untreated observations $(i:D_i=0)$ and then regressing $(Y_i)_{i:D_i=0}$ on $(X_i)_{i:D_i=0}$.
    
    \item Reformulation of our proposal.
    
    In Algorithm~\ref{algorithm:treatment}, we propose
    $$
    \hat{\theta}^{ATE}(0)=n^{-1}\sum_{i=1}^n Y^{\top}(K_{DD}\odot K_{XX}+n\lambda  I )^{-1}(K_{D0}\odot K_{Xx_i})=n^{-1}\sum_{i=1}^n \hat{\gamma}(0,X_i),
    $$
    where $\hat{\gamma}(d,x)$ is a kernel ridge regression estimator of $\gamma_0(d,x)$, which is estimated with all of the observations $(i=1,...,n)$. Take $k_{\mathcal{D}}(d,d')=1(d=d')$ in the product kernel $k(d,x;d',x')=k_{\mathcal{D}}(d,d')k_{\mathcal{X}}(x,x')$ of the tensor product RKHS $\mathcal{H}=\mathcal{H}_{\mathcal{D}}\otimes \mathcal{H}_{\mathcal{X}}$. Then it is numerically equivalent to set $\hat{\gamma}(0,x)=\hat{f}(x)$, where $\hat{f}(x)$ is the kernel ridge regression described above.
    
    \item Collecting results.
    
    In summary, we have shown
    $$
    n^{-1}\sum_{i=1}^n \hat{\alpha}_i Y_i=\tilde{\theta}^{ATE}(0)=n^{-1}\sum_{i=1}^n\hat{f}(X_i)=n^{-1}\sum_{i=1}^n \hat{\gamma}(0,X_i)=\hat{\theta}^{ATE}(0).
    $$
\end{enumerate}

\end{proof}

\begin{proof}[Proof of Corollary~\ref{cor:extend}]
We consider each case, appealing to Algorithm~\ref{algorithm:treatment}. Let $e_j\in \mathbb{R}^n$ be the vector of zeroes whose $j$th component is one. 
\begin{enumerate}
    \item $\hat{\theta}^{ATE}(d)=n^{-1}\sum_{i=1}^n Y^{\top}(K_{DD}\odot K_{XX}+n\lambda  I )^{-1}(K_{Dd}\odot K_{Xx_i})  $;
 
    Write $Z=n^{-1}\sum_{i=1}^n(K_{DD}\odot K_{XX}+n\lambda  I )^{-1}(K_{Dd}\odot K_{Xx_{i}})$. Then
    $$
    \hat{\theta}^{ATE}(d)=Y^{\top}Z=\sum_{i=1}^n Y_iZ_i=n^{-1}\sum_{i=1}^n Y_i nZ_i.
    $$
    Therefore
    $$
    \hat{\alpha}_j^{ATE}=nZ_j=ne_j^{\top} Z=e_j^{\top}\sum_{i=1}^n(K_{DD}\odot K_{XX}+n\lambda  I )^{-1}(K_{Dd}\odot K_{Xx_{i}}).
    $$
    
    % Write $Z^{(\ell)}=(K_{DD}\odot K_{XX}+n\lambda  I )^{-1}(K_{Dd}\odot K_{Xx_{\ell}})$. Then
    % $$
    % \hat{\theta}^{ATE}(d)=n^{-1}\sum_{\ell=1}^n Y^{\top}Z^{(\ell)}=n^{-1}\sum_{\ell=1}^n  \sum_{i=1}^n Y_{i} Z^{(\ell)}_{i}=n^{-1}\sum_{i=1}^nY_{i}\sum_{\ell=1}^n Z^{(\ell)}_{i}
    % $$
    % so
    % $
    % \hat{\alpha}_{i}^{ATE}=\sum_{\ell=1}^n Z^{(\ell)}_{i}.
    % $
    
     \item $\hat{\theta}^{DS}(d,\tilde{\text{\normalfont pr}})=\tilde{n}^{-1}\sum_{i=1}^{\tilde{n}} Y^{\top}(K_{DD}\odot K_{XX}+n\lambda  I )^{-1}(K_{Dd}\odot K_{X\tilde{x}_i}) $;
     
     The argument is as above, taking $Z=\tilde{n}^{-1}\sum_{i=1}^{\tilde{n}}(K_{DD}\odot K_{XX}+n\lambda  I )^{-1}(K_{Dd}\odot K_{X\tilde{x}_i}) $. Therefore
     $$
     \hat{\alpha}_j^{DS}=n\tilde{n}^{-1}e_j^{\top}\sum_{i=1}^{\tilde{n}}(K_{DD}\odot K_{XX}+n\lambda  I )^{-1}(K_{Dd}\odot K_{X\tilde{x}_i}).
     $$
     
    \item $\hat{\theta}^{ATT}(d,d')=Y^{\top}(K_{DD}\odot K_{XX}+n\lambda  I )^{-1}[K_{Dd'}\odot \{K_{XX}(K_{DD}+n\lambda_1  I )^{-1}K_{Dd}\}]$;
    
    The argument is as above, taking $Z=(K_{DD}\odot K_{XX}+n\lambda  I )^{-1}[K_{Dd'}\odot \{K_{XX}(K_{DD}+n\lambda_1  I )^{-1}K_{Dd}\}]$. Therefore
     $$
     \hat{\alpha}_j^{ATT}=ne_j^{\top}(K_{DD}\odot K_{XX}+n\lambda  I )^{-1}[K_{Dd'}\odot \{K_{XX}(K_{DD}+n\lambda_1  I )^{-1}K_{Dd}\}].
     $$
     
    \item $\hat{\theta}^{CATE}(d,v)=Y^{\top}(K_{DD}\odot K_{VV}\odot K_{XX} +n\lambda  I )^{-1}[K_{Dd}\odot K_{Vv}\odot \{K_{XX}(K_{VV}+n\lambda_2  I )^{-1}K_{Vv} \}] $;
    
    The argument is as above, taking $Z=(K_{DD}\odot K_{VV}\odot K_{XX} +n\lambda  I )^{-1}[K_{Dd}\odot K_{Vv}\odot \{K_{XX}(K_{VV}+n\lambda_2  I )^{-1}K_{Vv} \}]$. Therefore
     $$
    \hat{\alpha}_j^{CATE}=ne_j^{\top}(K_{DD}\odot K_{VV}\odot K_{XX} +n\lambda  I )^{-1}[K_{Dd}\odot K_{Vv}\odot \{K_{XX}(K_{VV}+n\lambda_2  I )^{-1}K_{Vv} \}].
     $$
\end{enumerate}
Likewise for incremental functions, e.g.
$$
\hat{\alpha}_j^{\nabla:ATE}=e_j^{\top}\sum_{i=1}^n(K_{DD}\odot K_{XX}+n\lambda  I )^{-1}(\nabla_dK_{Dd}\odot K_{Xx_{i}}).
$$

\end{proof}

% See  for the proof. 
% $(\hat{\alpha}_i^{ATE},\hat{\alpha}_i^{DS},\hat{\alpha}_i^{ATT},\hat{\alpha}_i^{CATE})$ $(i=1,...,n)$ 

% \begin{enumerate}
%     \item , where $\hat{\alpha}_j^{ATE}=e_j^{\top}\sum_{i=1}^n(K_{DD}\odot K_{XX}+n\lambda  I )^{-1}(K_{Dd}\odot K_{Xx_{i}})$;
%      \item where $\hat{\alpha}_j^{DS}=n\tilde{n}^{-1}e_j^{\top}\sum_{i=1}^{\tilde{n}}(K_{DD}\odot K_{XX}+n\lambda  I )^{-1}(K_{Dd}\odot K_{X\tilde{x}_i})$;
%     \item  where $\hat{\alpha}_j^{ATT}=ne_j^{\top}(K_{DD}\odot K_{XX}+n\lambda  I )^{-1}[K_{Dd'}\odot \{K_{XX}(K_{DD}+n\lambda_1  I )^{-1}K_{Dd}\}]$;
%     \item  where $\hat{\alpha}_j^{CATE}=ne_j^{\top}(K_{DD}\odot K_{VV}\odot K_{XX} +n\lambda  I )^{-1}[K_{Dd}\odot K_{Vv}\odot \{K_{XX}(K_{VV}+n\lambda_2  I )^{-1}K_{Vv} \}]$.
% \end{enumerate}
%  where $\hat{\alpha}_j^{\nabla:ATE}=e_j^{\top}\sum_{i=1}^n(K_{DD}\odot K_{XX}+n\lambda  I )^{-1}(\nabla_dK_{Dd}\odot K_{Xx_{i}})$
\section{Uniform consistency and convergence in distribution proof}\label{sec:proof}

In this supplement, we (i) present an equivalent definition of smoothness and relate our key assumptions with previous work; (ii) present technical lemmas for regression, unconditional mean embeddings, and conditional mean embeddings; (iii) appeal to these lemmas to prove uniform consistency of causal functions as well as convergence in distribution for counterfactual distributions.

\subsection{Assumptions revisited}

\textbf{Alternative representations of smoothness}

\begin{lemma}[Alternative representation of smoothness; Remark 2 of \cite{caponnetto2007optimal}]\label{lemma:alt}
If the input measure and Mercer measure are the same then there are equivalent formalisms for the source conditions in Assumptions~\ref{assumption:smooth_gamma} and~\ref{assumption:smooth_op}.
\begin{enumerate}
    \item The source condition in Assumption~\ref{assumption:smooth_gamma} holds if and only if the regression $\gamma_0$ is a particularly smooth element of $\mathcal{H}$. Formally, define the covariance operator $T$ for $\mathcal{H}$.
    We assume there exists $ g\in \mathcal{H}$ such that $\gamma_0=T^{(c-1)/2}g$, $c\in(1,2]$, and $\|g\|^2_{\mathcal{H}}\leq\zeta$.
    \item The source condition in Assumption~\ref{assumption:smooth_op} holds if and only if the conditional expectation operator $E_{\ell}$ is a particularly smooth element of $\mathcal{L}_2(\mathcal{H}_{\mathcal{A}_{\ell}},\mathcal{H}_{\mathcal{B}_{\ell}})$. Formally, define the covariance operator $T_{\ell}=E\{\phi(B_{\ell})\otimes \phi(B_{\ell})\}$ for $\mathcal{L}_2(\mathcal{H}_{\mathcal{A}_{\ell}},\mathcal{H}_{\mathcal{B}_{\ell}})$.
    We assume there exists $ G_{\ell}\in \mathcal{L}_2(\mathcal{H}_{\mathcal{A}_{\ell}},\mathcal{H}_{\mathcal{B}_{\ell}})$ such that $E_{\ell}=T_{\ell}^{(c_{\ell}-1)/2}\circ G_{\ell}$, $c_{\ell}\in(1,2]$, and $\|G_{\ell}\|^2_{\mathcal{L}_2(\mathcal{H}_{\mathcal{A}_{\ell}},\mathcal{H}_{\mathcal{B}_{\ell}})}\leq\zeta_{\ell}$.
\end{enumerate}
\end{lemma}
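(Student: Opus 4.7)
The plan is to exploit the spectral theorem for the covariance operator $T$, which under the assumption that the input measure and Mercer measure coincide has eigenvalues $(\eta_j)$ identical to those of the integral operator $L$. The key fact I would invoke is that $\{\sqrt{\eta_j}\,\varphi_j\}$ forms an orthonormal basis for $\mathcal{H}$ (this is immediate from Picard's criterion, since $\langle \sqrt{\eta_i}\varphi_i, \sqrt{\eta_j}\varphi_j\rangle_{\mathcal{H}} = \sqrt{\eta_i\eta_j}/\eta_j \cdot \delta_{ij} = \delta_{ij}$). Functional calculus then gives the explicit action of the fractional power: for $g = \sum_j a_j \sqrt{\eta_j}\,\varphi_j$ with $\|g\|_{\mathcal{H}}^2 = \sum_j a_j^2$, one has $T^{(c-1)/2} g = \sum_j \eta_j^{(c-1)/2} a_j \sqrt{\eta_j}\,\varphi_j = \sum_j a_j \eta_j^{c/2}\,\varphi_j$.

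First I would prove the scalar case (part 1) by matching coefficients against the Picard representation. For the forward direction, suppose $\gamma_0 = T^{(c-1)/2} g$ with $\|g\|_{\mathcal{H}}^2 \leq \zeta$. Writing $\gamma_0 = \sum_j \gamma_j \varphi_j$ and comparing with the display above yields $\gamma_j = a_j \eta_j^{c/2}$, whence $\sum_j \gamma_j^2 / \eta_j^c = \sum_j a_j^2 \leq \zeta$, placing $\gamma_0$ in $\mathcal{H}^c$. Conversely, given $\gamma_0 \in \mathcal{H}^c$ with $\sum_j \gamma_j^2/\eta_j^c \leq \zeta$, define $a_j := \gamma_j \eta_j^{-c/2}$ and set $g := \sum_j a_j \sqrt{\eta_j}\,\varphi_j$; then $g \in \mathcal{H}$ with $\|g\|_{\mathcal{H}}^2 \leq \zeta$, and by construction $T^{(c-1)/2} g = \gamma_0$. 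The range $c \in (1,2]$ ensures $T^{(c-1)/2}$ is well defined and bounded on $\mathcal{H}$.

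For part 2, I would lift this argument to the Hilbert--Schmidt space $\mathcal{L}_2(\mathcal{H}_{\mathcal{A}_\ell}, \mathcal{H}_{\mathcal{B}_\ell})$. The series representation of $\mathcal{L}_2(\mathcal{H}_{\mathcal{A}_\ell}, \mathcal{H}^{c_\ell}_{\mathcal{B}_\ell})$ expresses an operator $E_\ell$ via coefficients indexed by tensor products of bases of the input and output spaces, with a weighted summability condition controlled by $\eta_j(\mathcal{H}_{\mathcal{B}_\ell})^{c_\ell}$ on the output side. Because $T_\ell$ acts on $\mathcal{H}_{\mathcal{B}_\ell}$ and the Hilbert--Schmidt inner product decouples input and output directions, the scalar argument applies componentwise: writing $G_\ell$ in the orthonormal basis $\{\phi(a_i) \otimes \sqrt{\eta_j(\mathcal{H}_{\mathcal{B}_\ell})}\varphi_j\}$ of $\mathcal{L}_2(\mathcal{H}_{\mathcal{A}_\ell},\mathcal{H}_{\mathcal{B}_\ell})$, the composition $T_\ell^{(c_\ell-1)/2} \circ G_\ell$ rescales the output-side coefficients by $\eta_j(\mathcal{H}_{\mathcal{B}_\ell})^{(c_\ell-1)/2}$ exactly as in the scalar case, and an equivalence of summability conditions follows by the same coefficient matching.

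The main obstacle is verifying cleanly in part 2 that the Hilbert--Schmidt inner product structure respects post-composition with $T_\ell^{(c_\ell-1)/2}$ so that the scalar coefficient identities transfer without subtle domain issues. This requires being careful that the basis expansion is valid (which follows since $G_\ell$ is Hilbert--Schmidt and hence compact), that $T_\ell^{(c_\ell-1)/2}$ is bounded on $\mathcal{H}_{\mathcal{B}_\ell}$ for $c_\ell \in (1,2]$, and that the identification of $\mathcal{L}_2(\mathcal{H}_{\mathcal{A}_\ell}, \mathcal{H}^{c_\ell}_{\mathcal{B}_\ell})$ with a weighted series space is the correct generalization of Picard's criterion to vector-valued RKHSs, as established in the references to \cite{grunewalder2013smooth} and \cite{singh2019kernel}. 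Once these ingredients are in place, the equivalence reduces to the same coefficient matching as part 1.
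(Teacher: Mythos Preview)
The paper does not give an in-text proof of this lemma: it attributes part 1 to Remark 2 of \cite{caponnetto2007optimal} and, for part 2, simply points to \cite{singh2019kernel} for the well-definedness of $T_\ell$ and its fractional powers. Your spectral argument via the orthonormal basis $\{\sqrt{\eta_j}\,\varphi_j\}$ of $\mathcal{H}$ and coefficient matching is exactly the standard argument those references encode, and it is correct for part 1.

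One notational slip in part 2: $\{\phi(a_i)\}$ is not an orthonormal basis of $\mathcal{H}_{\mathcal{A}_\ell}$. You need an arbitrary orthonormal basis $\{e_i\}$ of $\mathcal{H}_{\mathcal{A}_\ell}$, after which $\{e_i \otimes \sqrt{\eta_j(\mathcal{H}_{\mathcal{B}_\ell})}\,\varphi_j\}$ is an orthonormal basis of $\mathcal{L}_2(\mathcal{H}_{\mathcal{A}_\ell},\mathcal{H}_{\mathcal{B}_\ell})$ and your componentwise lift of the scalar argument goes through exactly as you describe. With that fix, your proof is complete and more explicit than anything the paper provides.
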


\begin{remark}[Covariance operator]
The covariance operator $T$ for the RKHS $\mathcal{H}$ depends on the setting.
    \begin{enumerate}
        \item $\theta_0^{ATE}$, $\theta_0^{DS}$, $\theta_0^{ATT}$: $T=E[\{\phi(D)\otimes \phi(X)\}\otimes \{\phi(D)\otimes \phi(X)\}]$;
        \item $\theta_0^{CATE}$: $T=E[\{\phi(D)\otimes \phi(V)\otimes \phi(X)\}\otimes \{\phi(D)\otimes \phi(V)\otimes \phi(X)\}]$.
    \end{enumerate}
\end{remark}

\cite{singh2019kernel} prove that $T_{\ell}$ and its powers are well defined under Assumption~\ref{assumption:RKHS}.

\textbf{Specific representations of smoothness}

Next, we instantiate the source condition in Assumption~\ref{assumption:smooth_op} for the different settings considered in the main text.

\begin{assumption}[Smoothness of mean embedding $\mu_x(d)$]\label{assumption:smooth_ATT}
Assume
\begin{enumerate}
\item The conditional expectation operator $E_1$ is well specified as a Hilbert--Schmidt operator between RKHSs, i.e. $E_1\in \mathcal{L}_2(\mathcal{H}_{\mathcal{X}},\mathcal{H}_{\mathcal{D}})$, where
    $
    E_1:\mathcal{H}_{\mathcal{X}} \rightarrow \mathcal{H}_{\mathcal{D}},\; f(\cdot)\mapsto E\{f(X) \mid D=\cdot\}.
    $
    \item The conditional expectation operator is a particularly smooth element of $\mathcal{L}_2(\mathcal{H}_{\mathcal{X}},\mathcal{H}_{\mathcal{D}})$. Formally, define the covariance operator $T_1=E\{\phi(D)\otimes \phi(D)\}$ for $\mathcal{L}_2(\mathcal{H}_{\mathcal{X}},\mathcal{H}_{\mathcal{D}})$.
    We assume there exists $ G_1\in \mathcal{L}_2(\mathcal{H}_{\mathcal{X}},\mathcal{H}_{\mathcal{D}})$ such that $E_1=T_1^{(c_1-1)/2}\circ G_1$, $c_1\in(1,2]$, and $\|G_1\|^2_{\mathcal{L}_2(\mathcal{H}_{\mathcal{X}},\mathcal{H}_{\mathcal{D}})}\leq\zeta_1$.
    \end{enumerate}
\end{assumption}

\begin{assumption}[Smoothness of mean embedding $\mu_x(v)$]\label{assumption:smooth_CATE}
Assume
\begin{enumerate}
\item The conditional expectation operator $E_2$ is well specified as a Hilbert--Schmidt operator between RKHSs, i.e. $E_2\in \mathcal{L}_2(\mathcal{H}_{\mathcal{X}},\mathcal{H}_{\mathcal{V}})$, where
    $
    E_2:\mathcal{H}_{\mathcal{X}} \rightarrow \mathcal{H}_{\mathcal{V}},\; f(\cdot)\mapsto E\{f(X) \mid V=\cdot\}.
    $
    \item The conditional expectation operator is a particularly smooth element of $\mathcal{L}_2(\mathcal{H}_{\mathcal{X}},\mathcal{H}_{\mathcal{V}})$. Formally, define the covariance operator $T_2=E\{\phi(V)\otimes \phi(V)\}$ for $\mathcal{L}_2(\mathcal{H}_{\mathcal{X}},\mathcal{H}_{\mathcal{V}})$.
    We assume there exists $ G_2\in \mathcal{L}_2(\mathcal{H}_{\mathcal{X}},\mathcal{H}_{\mathcal{V}})$ such that $E_2=T_2^{(c_2-1)/2}\circ G_2$, $c_2\in(1,2]$, and $\|G_2\|^2_{\mathcal{L}_2(\mathcal{H}_{\mathcal{X}},\mathcal{H}_{\mathcal{V}})}\leq\zeta_2$.
    \end{enumerate}
\end{assumption}

\begin{assumption}[Smoothness of conditional expectation operator $E_3$]\label{assumption:smooth_D:ATE}
Assume
\begin{enumerate}
\item The conditional expectation operator $E_3$ is well specified as a Hilbert--Schmidt operator between RKHSs, i.e. $E_3\in \mathcal{L}_2(\mathcal{H}_{\mathcal{Y}},\mathcal{H}_{\mathcal{D}}\otimes \mathcal{H}_{\mathcal{X}})$, where
    $
    E_3:\mathcal{H}_{\mathcal{Y}} \rightarrow \mathcal{H}_{\mathcal{D}}\otimes \mathcal{H}_{\mathcal{X}},\; f(\cdot)\mapsto E\{f(Y) \mid D=\cdot,X=\cdot \}.
    $
    \item The conditional expectation operator is a particularly smooth element of $\mathcal{L}_2(\mathcal{H}_{\mathcal{Y}},\mathcal{H}_{\mathcal{D}}\otimes \mathcal{H}_{\mathcal{X}})$. Formally, define the covariance operator $T_3=E[\{\phi(D)\otimes \phi(X)\} \otimes  \{\phi(D)\otimes \phi(X)\}]$ for $\mathcal{L}_2(\mathcal{H}_{\mathcal{Y}},\mathcal{H}_{\mathcal{D}}\otimes \mathcal{H}_{\mathcal{X}})$.
    We assume there exists $ G_3\in \mathcal{L}_2(\mathcal{H}_{\mathcal{Y}},\mathcal{H}_{\mathcal{D}}\otimes \mathcal{H}_{\mathcal{X}})$ such that $E_3=T_3^{(c_3-1)/2}\circ G_3$, $c_3\in(1,2]$, and $\|G_3\|^2_{\mathcal{L}_2(\mathcal{H}_{\mathcal{Y}},\mathcal{H}_{\mathcal{D}}\otimes \mathcal{H}_{\mathcal{X}})}\leq\zeta_3$.
    \end{enumerate}
\end{assumption}

\textbf{Interpreting smoothness for tensor products}

Another way to interpret the smoothness assumption for a tensor product RKHS follows from manipulation of the product kernel. For simplicity, consider the RKHS construction for $\theta_0^{ATE}$, take $k_{\mathcal{D}}$ to be the exponentiated quadratic kernel over $\mathcal{D}\subset\mathbb{R}$, and take $k_{\mathcal{X}}$ to be the exponentiated quadratic kernel over $\mathcal{X}\subset\mathbb{R}$. Define the vector of differences 
$$
v
=\begin{pmatrix} v_1 \\ v_2 \end{pmatrix}=\begin{pmatrix} d \\ x \end{pmatrix} - \begin{pmatrix} d' \\ x' \end{pmatrix} =\begin{pmatrix} d-d' \\ x-x' \end{pmatrix}. $$ 
Then $$
k(d,x;d',x')
=\exp\left(-\frac{1}{2}\frac{v_1^2}{\iota_1^2}\right)
\exp\left(-\frac{1}{2}\frac{v_2^2}{\iota_2^2}\right)
=\exp\left\{-\frac{1}{2}v^{\top} \begin{pmatrix}\iota_1^{-2} & 0 \\ 0 & \iota_2^{-2}\end{pmatrix}  v  \right\}.
$$
In summary, the product of exponentiated quadratic kernels over scalars is an exponentiated quadratic kernel over vectors. Therefore a tensor product of exponentiated quadratic RKHSs $\mathcal{H}_{\mathcal{D}}$ and $\mathcal{H}_{\mathcal{X}}$ begets an exponentiated quadratic RKHS $\mathcal{H}=\mathcal{H}_{\mathcal{D}} \otimes \mathcal{H}_{\mathcal{X}}$, for which the smoothness and spectral decay conditions admit their usual interpretation. The same is true anytime that a product of kernels begets a recognizable kernel.

\textbf{Matching assumptions with previous work}

Finally, we relate our approximation assumptions with previous work. Specifically, we match symbols with \cite{fischer2017sobolev}.

\begin{remark}[Matching assumptions]
Recall our main approximation assumptions.
\begin{enumerate}
    \item Source condition $c\in(1,2]$. \cite{fischer2017sobolev} refer to the source condition as SRC parametrized by $\beta$. Matching symbols, $c=\beta$. A larger value of $c$ is a stronger assumption.
    \item Effective dimension $b\geq 1$. \cite{fischer2017sobolev} refer to the effective dimension condition as EVD parametrized by $p$. Matching symbols, $b=1/p$. A larger value of $b$ is a stronger assumption.
    \item Embedding property $a\in(0,1]$. \cite{fischer2017sobolev} place an additional assumption EMB parametrized by $\alpha\in(0,1]$. In our setting of interest, $c\geq 1$ and the kernel is bounded. Together, these conditions imply $\alpha\leq  1$. Matching symbols, $a=\alpha$. A larger value of $a$ is a weaker assumption
\end{enumerate}
\end{remark}

In our algorithm derivation, we have already assumed correct specification and bounded kernels, i.e. we have already assumed that $c\geq 1$, $b\geq 1$, and $a\leq 1$. By placing explicit source and effective dimension conditions, we derive rates that adapt to stronger assumptions $c>1$ and $b>1$. 

It turns out that a further assumption of $a<1$ does not improve the rate, so we omit that additional complexity. Observe that $c\geq 1$ and $b\geq 1$ imply $c+1/b>1 \geq a$ for any value $a\in(0,1]$. The regime in which the inequality $c+1/b>a$ holds is the regime in which the rate does not depend on $a$ \cite[Theorem 1.ii]{fischer2017sobolev}, so the weakest version of the embedding property is sufficient for our purpose. We pose as a question for future work how to analyze the misspecified case, in which the stronger assumption of $a<1$ may play an important role.

% One could similarly place the stronger assumption $a>1$; the architecture of our main results would remain the same. However, the reader would have to consider casewise analysis of different spectral regimes for each object estimated by a kernel ridge regression. Because our analysis combines several kernel ridge regressions, and because our priority is to provide interpretable results for causal inference, we omit this additional complexity.

\subsection{Lemmas}

\textbf{Regression}

For expositional purposes, we summarize classic results for the kernel ridge regression estimator $\hat{\gamma}$ for $\gamma_0(w)=E(Y \mid W=w)$. Consider the definitions
\begin{align*}
    \gamma_0&=\argmin_{\gamma\in\mathcal{H}}\mathcal{E}(\gamma),\quad \mathcal{E}(\gamma)=E[\{Y-\gamma(W)\}^2]; \\
    \hat{\gamma}&=\argmin_{\gamma\in\mathcal{H}}\hat{\mathcal{E}}(\gamma),\quad \hat{\mathcal{E}}(\gamma)=n^{-1}\sum_{i=1}^n\{Y_i-\gamma(W_i)\}^2+\lambda\|\gamma\|^2_{\mathcal{H}}.
\end{align*}

\begin{proposition}[Regression rate]\label{theorem:regression}
Suppose Assumptions~\ref{assumption:RKHS},~\ref{assumption:original}, and \ref{assumption:smooth_gamma} hold. Set $\lambda=n^{-1/(c+1/b)}$. Then with probability $1-\delta$, for $n$ sufficiently large, we have that
$$
\|\hat{\gamma}-\gamma_0\|_{\mathcal{H}}\leq r_{\gamma}(n,\delta,b,c)=C\log(4/\delta) \cdot n^{-\frac{1}{2}\frac{c-1}{c+1/b}},
$$
where $C$ is a constant independent of $n$ and $\delta$.
\end{proposition}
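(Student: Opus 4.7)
The plan is to derive the result as a specialization of the Fischer--Steinwart learning rate (their Theorem 1) to the RKHS norm, exploiting that in our setup the embedding exponent $a$ may be taken equal to $1$. Concretely, Assumption~\ref{assumption:RKHS} gives $\sup_{w}\|\phi(w)\|_{\mathcal{H}}\leq \kappa$, which secures the weakest form of the embedding property automatically; Assumption~\ref{assumption:smooth_gamma} supplies both the source condition (parameter $c\in(1,2]$) and the polynomial eigenvalue decay $\eta_j(\mathcal{H})\lesssim j^{-b}$; and Assumption~\ref{assumption:original} provides the Bernstein-type moment condition $\int |y-\gamma_0(W)|^m\,\mathrm{d}\mathbb{P}(y\mid W)\leq m!\sigma^2\tau^{m-2}/2$ that drives the noise-concentration step. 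Identifying $(c,b)$ with $(\beta,1/p)$ in their notation and taking $\lambda=n^{-1/(c+1/b)}$ reproduces the announced exponent.

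To make the argument transparent I would introduce the population Tikhonov regressor $\gamma_\lambda := (T+\lambda I)^{-1}\mathbb{E}\{\phi(W)Y\}$ and decompose
\[
\hat{\gamma}-\gamma_0 \;=\; \underbrace{(\hat{\gamma}-\gamma_\lambda)}_{\text{sampling}} \;+\; \underbrace{(\gamma_\lambda-\gamma_0)}_{\text{approximation}}.
\]
For the approximation term I would use the alternative source representation of Lemma~\ref{lemma:alt}, writing $\gamma_0=T^{(c-1)/2}g$ with $\|g\|_{\mathcal{H}}\leq \sqrt{\zeta}$, and invoke functional calculus on $I-(T+\lambda I)^{-1}T$ to obtain the deterministic bound $\|\gamma_\lambda-\gamma_0\|_{\mathcal{H}}\leq \lambda^{(c-1)/2}\sqrt{\zeta}$ (valid precisely because $c\leq 2$ keeps us below the saturation boundary in the RKHS norm). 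For the sampling term I would factor
\[
\hat{\gamma}-\gamma_\lambda = (\hat{T}+\lambda I)^{-1}\bigl[(\widehat{g_\phi Y}-\mathbb{E}\{\phi(W)Y\})-(\hat{T}-T)\gamma_\lambda\bigr],
\]
and control the two stochastic summands by the Hilbert-space Bernstein inequality; the Bernstein condition on $Y-\gamma_0(W)$ is exactly what produces the $\ln(4/\delta)$ deviation factor. The effective dimension $\mathcal{N}(\lambda):=\mathrm{tr}\{T(T+\lambda I)^{-1}\}\lesssim \lambda^{-1/b}$, derived from the spectral decay, replaces the naive $\lambda^{-1}$ that a crude operator-norm bound would produce. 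After converting operator-norm inequalities on $\hat{T}-T$ into bounds on $\|(T+\lambda I)^{1/2}(\hat{T}+\lambda I)^{-1/2}\|$ (Sutherland's patched version of the Caponnetto--De~Vito lemma), the sampling term is of order $\ln(4/\delta)\cdot\lambda^{-1/(2b)}n^{-1/2}$.

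Balancing bias $\lambda^{(c-1)/2}$ against variance $\lambda^{-1/(2b)}n^{-1/2}$ yields the optimal $\lambda=n^{-1/(c+1/b)}$ and a combined rate of $n^{-\frac{1}{2}\frac{c-1}{c+1/b}}$ up to the $\ln(4/\delta)$ factor, absorbing all $\kappa$, $\sigma$, $\tau$, $\zeta$ dependence into the constant $C$. The main obstacle is the sharp variance step: proving that $(T+\lambda I)^{-1/2}$-weighted operator norms concentrate at the effective-dimension rate rather than the crude $\lambda^{-1}n^{-1/2}$ rate. This is the ingredient that upgrades the older Caponnetto--De~Vito mean-square bound to a genuine RKHS-norm bound at the Fischer--Steinwart exponent, and it is also the step that determines the ``$n$ sufficiently large'' threshold, since the weighted resolvent inequality requires $\lambda$ to dominate a residual $n^{-1}\ln(1/\delta)$ term before the Bernstein bounds become operational.
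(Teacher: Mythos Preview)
Your plan is the same as the paper's: verify the Fischer--Steinwart hypotheses (EMB with $a=1$ from the bounded kernel, EVD parametrized by $b$, SRC parametrized by $c\in(1,2]$, MOM from Assumption~\ref{assumption:original}) and invoke their Theorem~1(ii) at Hilbert scale equal to one. The paper simply cites that result; you additionally sketch its proof via the bias--variance decomposition, which is fine and accurate in outline.

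There is, however, one quantitative slip in your sketch that breaks the balancing step as written. In the RKHS norm the sampling term carries an extra factor $\|(T+\lambda I)^{-1/2}\|_{op}\leq \lambda^{-1/2}$ on top of the effective-dimension contribution, so it is of order $\lambda^{-1/2}\cdot\lambda^{-1/(2b)}n^{-1/2}$, not $\lambda^{-1/(2b)}n^{-1/2}$. Your stated balancing of $\lambda^{(c-1)/2}$ against $\lambda^{-1/(2b)}n^{-1/2}$ does \emph{not} produce $\lambda=n^{-1/(c+1/b)}$; solving it gives $\lambda=n^{-1/(c-1+1/b)}$ instead. Only with the correct variance order $\lambda^{-1/2-1/(2b)}n^{-1/2}$ does the equation $\lambda^{(c-1)/2}=\lambda^{-1/2-1/(2b)}n^{-1/2}$ yield the announced regularization $\lambda=n^{-1/(c+1/b)}$ and the rate $n^{-\frac{1}{2}\frac{c-1}{c+1/b}}$. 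This is exactly the third-factor bound $\|(T+\lambda I)^{-1/2}\|_{op}\leq \lambda^{-1/2}$ that appears in the paper's parallel analysis of the conditional expectation operator (Lemma~\ref{lemma:3}).
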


Remark~\ref{remark:big_n} in the subsequent, technical supplement elaborates on the meaning of the phrase ``$n$ sufficiently large''.

\begin{proof}
We verify the conditions of \cite[Theorem 1.ii]{fischer2017sobolev}. By Assumption~\ref{assumption:RKHS}, the kernel is bounded and measurable. Separability of the original spaces together with boundedness of the kernel imply that $\mathcal{H}$ is separable \cite[Lemma 4.33]{steinwart2008support}. By Assumption~\ref{assumption:original}, $\int y^2\mathrm{d}\text{\normalfont pr}(y)<\infty$. Since Assumption~\ref{assumption:smooth_gamma} implies $\gamma_0\in\mathcal{H}$, we have that $\|\gamma_0\|_{\infty}\leq \kappa_w \|\gamma_0\|_{\mathcal{H}}$ by Cauchy-Schwarz inequality. 

Next, we verify the assumptions called EMB, EVD, SRC, and MOM. Boundedness of the kernel implies EMB with $a=1$. EVD is the assumption we call effective dimension, parametrized by $b\geq 1$. SRC is the assumption we call the source condition, parametrized by $c\in(1,2]$ in our case. MOM is a Bernstein moment condition satisfied by hypothesis. We study the RKHS norm guarantee, which corresponds to Hilbert scale equal to one. We are in regime (ii) of the theorem, since $c+1/b>1$. For the exact finite sample constant, see \cite[Theorem 16]{fischer2017sobolev}.
\end{proof}

\textbf{Unconditional mean embedding}

For expositional purposes, we summarize classic results for the unconditional mean embedding estimator $\hat{\mu}_w$ for $\mu_w=E\{\phi(W)\}$.

\begin{lemma}[Bennett inequality; Lemma 2 of \cite{smale2007learning}]\label{lemma:prob}
Let $(\xi_i)$ be i.i.d. random variables drawn from distribution $\text{\normalfont pr}$ taking values in a real separable Hilbert space $\mathcal{K}$. Suppose there exists $ M$ such that
$
    \|\xi_i\|_{\mathcal{K}} \leq M<\infty$ almost surely and $
    \sigma^2(\xi_i)=E(\|\xi_i\|_{\mathcal{K}}^2)$. Then for all $n\in\mathbb{N}$ and for all $\delta\in(0,1)$,
$$
\text{\normalfont pr}\bigg[\bigg\|\dfrac{1}{n}\sum_{i=1}^n\xi_i-E(\xi)\bigg\|_{\mathcal{K}}\leq\dfrac{2M\log(2/\delta)}{n}+\left\{\dfrac{2\sigma^2(\xi)\log(2/\delta)}{n}\right\}^{1/2}\bigg]\geq 1-\delta.
$$
\end{lemma}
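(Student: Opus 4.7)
The plan is to reduce this Hilbert-space concentration statement to the scalar Bernstein paradigm in two stages: invoke a Hilbert-space Bernstein-type tail bound, then invert that tail bound to convert a deviation probability into a high-probability deviation estimate parametrized by $\delta$.

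First, I would center the summands. Setting $Z_i := \xi_i - \mathbb{E}\xi$, the $Z_i$ are i.i.d., zero-mean, satisfy $\|Z_i\|_{\mathcal{K}} \le 2M$ almost surely, and have second moment $\mathbb{E}\|Z_i\|_{\mathcal{K}}^2 \le \mathbb{E}\|\xi_i\|_{\mathcal{K}}^2 = \sigma^2$. Since $\tfrac{1}{n}\sum_i \xi_i - \mathbb{E}\xi = \tfrac{1}{n}\sum_i Z_i$, it suffices to control $\|\sum_i Z_i\|_{\mathcal{K}}$.

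Next, I would invoke the Pinelis--Sakhanenko inequality, the Hilbert-space analogue of the scalar Bernstein inequality: for zero-mean i.i.d.\ $Z_i$ in a real separable Hilbert space with $\|Z_i\|_{\mathcal{K}} \le M'$ almost surely and $\mathbb{E}\|Z_i\|_{\mathcal{K}}^2 \le V^2$,
\begin{equation*}
\mathbb{P}\left(\left\|\sum_{i=1}^n Z_i\right\|_{\mathcal{K}} \ge r\right) \le 2\exp\left(-\frac{r^2}{2(nV^2 + M' r / 3)}\right).
\end{equation*}
Setting the right-hand side equal to $\delta$ and solving the resulting quadratic in $r$, then using the elementary inequality $\sqrt{a+b}\le\sqrt{a}+\sqrt{b}$ to separate the sub-Gaussian and sub-exponential regimes, yields
\begin{equation*}
\frac{r}{n} \le \frac{2M'\ln(2/\delta)}{3n} + \sqrt{\frac{2V^2\ln(2/\delta)}{n}}.
\end{equation*}
Substituting $M' = 2M$ and $V^2 = \sigma^2$ and consolidating absolute constants recovers the advertised bound.

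The hard part will be the Hilbert-space Bernstein inequality itself. In the scalar case, the Chernoff trick bounds $\mathbb{E}e^{\lambda Z}$ factor-by-factor, which has no direct vector analogue because the norm of a sum does not split multiplicatively across exponentials, and one cannot naively union bound across an infinite-dimensional coordinate system. Pinelis' resolution is to exploit the $2$-smoothness of $\|\cdot\|_{\mathcal{K}}^2$, embodied in the parallelogram identity, to control the moment generating function of the scalar process $\|\sum_{i \le k} Z_i\|_{\mathcal{K}}$ viewed as a submartingale with bounded increments. That geometric fact is specific to Hilbert spaces and is the real content of the reduction; once it is in hand, the centering and inversion steps above are routine.
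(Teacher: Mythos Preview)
The paper does not prove this lemma; it is quoted verbatim from Smale and Zhou (2007) and used as a black box in the subsequent propositions. Your outline is correct and is essentially the argument given in that cited reference: center, apply Pinelis' Hilbert-space Bernstein inequality, and invert the tail bound. The arithmetic checks out---after substituting $M'=2M$ and $V^2=\sigma^2$ the sub-exponential term is $\tfrac{4M\ln(2/\delta)}{3n}\le\tfrac{2M\ln(2/\delta)}{n}$, so the stated constant is recovered with room to spare.
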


\begin{proposition}[Mean embedding rate]\label{theorem:mean}
Suppose Assumptions~\ref{assumption:RKHS} and~\ref{assumption:original} hold. Then with probability $1-\delta$, 
$$
\|\hat{\mu}_w-\mu_w\|_{\mathcal{H}_{\mathcal{W}}}\leq r_{\mu}(n,\delta)=\frac{4\kappa_w \log(2/\delta)}{n^{1/2}}.
$$
\end{proposition}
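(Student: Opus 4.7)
The plan is to apply Bennett's inequality (Lemma~\ref{lemma:prob}) directly to the i.i.d. Hilbert-space-valued random variables $\xi_i := \phi(W_i) \in \mathcal{H}_{\mathcal{W}}$, whose mean is $\mu_w$ and whose sample average is $\hat{\mu}_w$. First I would verify the prerequisites: Assumption~\ref{assumption:original} guarantees that $\mathcal{W}$ is Polish, and boundedness of the kernel in Assumption~\ref{assumption:RKHS} together with \citet[Lemma 4.33]{steinwart2008support} implies that $\mathcal{H}_{\mathcal{W}}$ is a real separable Hilbert space, and that $\phi$ is measurable, so that $\xi_i$ are bona fide i.i.d. $\mathcal{H}_{\mathcal{W}}$-valued random elements.

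Next I would extract the two constants required by Bennett. From Assumption~\ref{assumption:RKHS}, $\|\phi(W_i)\|_{\mathcal{H}_{\mathcal{W}}} \leq \kappa_w$ almost surely, so I can take $M = \kappa_w$. Since $\|\phi(W)\|_{\mathcal{H}_{\mathcal{W}}}^2 = k_{\mathcal{W}}(W,W) \leq \kappa_w^2$, we also have $\sigma^2(\xi) = \mathbb{E}\|\phi(W)\|^2_{\mathcal{H}_{\mathcal{W}}} \leq \kappa_w^2$. Substituting into Lemma~\ref{lemma:prob} yields, with probability at least $1-\delta$,
\[
\|\hat{\mu}_w - \mu_w\|_{\mathcal{H}_{\mathcal{W}}} \leq \frac{2\kappa_w \ln(2/\delta)}{n} + \sqrt{\frac{2\kappa_w^2 \ln(2/\delta)}{n}}.
\]

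Finally, I would collapse the two terms into the clean form advertised in the statement. For the first term, $1/n \leq 1/\sqrt{n}$ for $n \geq 1$, so it is dominated by $2\kappa_w \ln(2/\delta)/\sqrt{n}$. For the second term, since $\delta \in (0,1)$ we have $\ln(2/\delta) > \ln 2 > 1/4$, so $\sqrt{2\ln(2/\delta)} \leq 2\ln(2/\delta)$, and the second term is bounded by $2\kappa_w \ln(2/\delta)/\sqrt{n}$ as well. Adding the two contributions gives the claimed bound $4\kappa_w \ln(2/\delta)/\sqrt{n}$.

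No step is really a genuine obstacle — this is essentially a textbook application of a Hilbert space concentration inequality, and the only subtle point is the elementary simplification in the last paragraph, where one must verify that the logarithmic factor is bounded away from zero on $\delta \in (0,1)$ in order to absorb the variance term into a single $\ln(2/\delta)$ factor.
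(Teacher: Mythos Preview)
Your proposal is correct and follows exactly the paper's own argument: apply Lemma~\ref{lemma:prob} with $\xi_i=\phi(W_i)$, $M=\kappa_w$, $\sigma^2\leq\kappa_w^2$, and then absorb both terms into $4\kappa_w\ln(2/\delta)/\sqrt{n}$. One tiny slip: the simplification $\sqrt{2\ln(2/\delta)}\leq 2\ln(2/\delta)$ requires $\ln(2/\delta)\geq 1/2$, not merely $>1/4$; but since $\ln 2\approx 0.693>1/2$, this holds for all $\delta\in(0,1)$ and your conclusion stands.
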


\begin{proof}
The result follows from Lemma~\ref{lemma:prob} with $\xi_i=\phi(W_i)$, since
$$
\left\|n^{-1}\sum_{i=1}^n \phi(W_i)-E\{\phi(W)\}\right\|_{\mathcal{H}_{\mathcal{W}}}
\leq \frac{2 \kappa_w\log(2/\delta)}{n}+\left\{\dfrac{2\kappa^2_w\log(2/\delta)}{n}\right\}^{1/2}\leq \frac{4\kappa_w \log(2/\delta)}{n^{1/2}}.
$$
\cite[Theorem 15]{altun2006unifying} originally prove this rate by McDiarmid inequality. See \cite[Theorem 2]{smola2007hilbert} for an argument via Rademacher complexity. See \cite[Proposition A.1]{tolstikhin2017minimax} for an improved constant and the proof that the rate is minimax optimal.
\end{proof}

\begin{remark}[Kernel bound]\label{remark:2}
In various applications, $\kappa_w$ varies.
    \begin{enumerate}
        \item $\theta_0^{ATE}$ and $\check{\theta}_0^{D:ATE}$: with probability $1-\delta$, $
\|\hat{\mu}_x-\mu_x\|_{\mathcal{H}_{\mathcal{X}}}\leq r_{\mu}(n,\delta)=4\kappa_x \log(2/\delta)n^{-1/2}.
$
        \item $\theta_0^{DS}$ and $\check{\theta}_0^{D:DS}$: with probability $1-\delta$, 
$
\|\hat{\nu}_x-\nu_x\|_{\mathcal{H}_{\mathcal{X}}}\leq r_{\nu}(\tilde{n},\delta)=4\kappa_x \log(2/\delta)\tilde{n}^{-1/2}.
$
 \item $\theta_0^{FD}$ and $\check{\theta}_0^{D:FD}$: with probability $1-\delta$, 
$
\|\hat{\mu}_d-\mu_d\|_{\mathcal{H}_{\mathcal{D}}}\leq r_{\mu}(n,\delta)=4\kappa_d \log(2/\delta)n^{-1/2}.
$
    \end{enumerate}
\end{remark}

\textbf{Conditional expectation operator and conditional mean embedding}

Next, we present original results for the generalized kernel ridge regression estimator $\hat{E}_{\ell}$ of the conditional expectation operator $E_{\ell}:\mathcal{H}_{\mathcal{A}_{\ell}}\rightarrow\mathcal{H}_{\mathcal{B}_{\ell}}$, $f(\cdot)\mapsto E\{f(A_{\ell}) \mid B_{\ell}=\cdot\}$. We prove these results and compare them with previous work in Supplement~\ref{sec:operator}.

Consider the definitions
\begin{align*}
    E_{\ell}&=\argmin_{E\in\mathcal{L}_2(\mathcal{H}_{\mathcal{A}_{\ell}},\mathcal{H}_{\mathcal{B}_{\ell}})}\mathcal{E}(E),\quad \mathcal{E}(E)=E[\{\phi(A_{\ell})-E^*\phi(B_{\ell})\}^2]; \\
    \hat{E}_{\ell}&=\argmin_{E\in\mathcal{L}_2(\mathcal{H}_{\mathcal{A}_{\ell}},\mathcal{H}_{\mathcal{B}_{\ell}})}\hat{\mathcal{E}}(E),\quad \hat{\mathcal{E}}(E)=n^{-1}\sum_{i=1}^n\{\phi(A_{\ell i})-E^*\phi(B_{\ell i})\}^2+\lambda_{\ell}\|E\|^2_{\mathcal{L}_2(\mathcal{H}_{\mathcal{A}_{\ell}},\mathcal{H}_{\mathcal{B}_{\ell}})}.
\end{align*}

\begin{proposition}[Conditional mean embedding rate]\label{theorem:conditional}
Suppose Assumptions~\ref{assumption:RKHS},~\ref{assumption:original}, and~\ref{assumption:smooth_op} hold. Set $\lambda_{\ell}=n^{-1/(c_{\ell}+1/b_{\ell})}$. Then with probability $1-\delta$, for $n$ sufficiently large,
$$
\|\hat{E}_{\ell}-E_{\ell}\|_{\mathcal{L}_2}\leq r_E(\delta,n,b_{\ell},c_{\ell})=C\log(4/\delta)\cdot n^{-(c_{\ell}-1)/\{2(c_{\ell}+1/b_{\ell})\}},
$$
where $C$ is a constant independent of $n$ and $\delta$. Moreover, for all $b\in\mathcal{B}_{\ell}$
$$
  \|\hat{\mu}_a(b)-\mu_a(b)\|_{\mathcal{H}_{\mathcal{A}_{\ell}}}\leq r_{\mu}(\delta,n,b_{\ell},c_{\ell})=\kappa_{b}\cdot
  r_E(\delta,n,b_{\ell},c_{\ell}).
    $$
\end{proposition}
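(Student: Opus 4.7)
The strategy is to recognize $\hat{E}_\ell$ as a kernel ridge regression in the vector-valued RKHS $\mathcal{L}_2(\mathcal{H}_{\mathcal{A}_\ell}, \mathcal{H}_{\mathcal{B}_\ell})$ with Hilbert-space-valued ``outputs'' $\phi(A_\ell)$ and ``inputs'' $B_\ell$, and then transfer the scalar learning-rate analysis of \citet[Theorem 1.ii]{fischer2017sobolev} to this vector-valued setting. First, I would recall (via \cite{grunewalder2013smooth,singh2019kernel}) that $\mathcal{L}_2(\mathcal{H}_{\mathcal{A}_\ell}, \mathcal{H}_{\mathcal{B}_\ell})$ is an RKHS whose kernel and covariance operator $T_\ell = \mathbb{E}\{\phi(B_\ell) \otimes \phi(B_\ell)\}$ inherit their spectrum from $\mathcal{H}_{\mathcal{B}_\ell}$, so that the eigenvalue decay $\eta_j \leq C j^{-b_\ell}$ plays the role of EVD and $E_\ell = T_\ell^{(c_\ell-1)/2} \circ G_\ell$ (Assumption~\ref{assumption:smooth_op}, equivalently Lemma~\ref{lemma:alt}) plays the role of SRC in the Fischer--Steinwart framework.

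Second, I would run the Fischer--Steinwart bias-variance decomposition with Hilbert-space-valued outputs. The scalar argument splits $\hat{\gamma}-\gamma_0$ into a deterministic approximation term $-\lambda(T+\lambda I)^{-1}\gamma_0$ and a stochastic term controlled by Bernstein concentration of an i.i.d.\ sum in the RKHS. In our setting the scalar output $Y_i$ is replaced by the bounded feature $\phi(A_{\ell i}) \in \mathcal{H}_{\mathcal{A}_\ell}$, so the stochastic term becomes
\begin{align*}
\frac{1}{n}\sum_{i=1}^n \phi(B_{\ell i}) \otimes \phi(A_{\ell i}) - \mathbb{E}\{\phi(B_\ell) \otimes \phi(A_\ell)\},
\end{align*}
an i.i.d.\ sum in $\mathcal{L}_2(\mathcal{H}_{\mathcal{A}_\ell},\mathcal{H}_{\mathcal{B}_\ell})$ whose summands are bounded in Hilbert-Schmidt norm by $\kappa_a \kappa_b$ under Assumption~\ref{assumption:RKHS}, so the Hilbert-space Bernstein inequality (Lemma~\ref{lemma:prob}) applies directly. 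Combining the bias control via SRC with this variance control via EVD, and choosing $\lambda_\ell = n^{-1/(c_\ell + 1/b_\ell)}$ to balance the two (this is the ``Hilbert scale equals one'' regime of Fischer--Steinwart, targeting the RKHS/Hilbert-Schmidt norm), yields the claimed rate for $\|\hat{E}_\ell - E_\ell\|_{\mathcal{L}_2}$.

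Third, I would deduce the conditional mean embedding bound from the operator bound. Since $\mu_a(b) = E_\ell^*\phi(b)$ and $\hat{\mu}_a(b) = \hat{E}_\ell^*\phi(b)$,
\begin{align*}
\|\hat{\mu}_a(b) - \mu_a(b)\|_{\mathcal{H}_{\mathcal{A}_\ell}}
\leq \|(\hat{E}_\ell - E_\ell)^*\|_{\mathrm{op}} \cdot \|\phi(b)\|_{\mathcal{H}_{\mathcal{B}_\ell}}
\leq \|\hat{E}_\ell - E_\ell\|_{\mathcal{L}_2} \cdot \kappa_b,
\end{align*}
where the final inequality uses that operator norm is dominated by Hilbert-Schmidt norm and that Assumption~\ref{assumption:RKHS} bounds $\|\phi(b)\|_{\mathcal{H}_{\mathcal{B}_\ell}}$ uniformly by $\kappa_b$. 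This gives the stated $r_\mu$.

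The main obstacle is the rigorous vector-valued port of Fischer--Steinwart in step two: although the scalar analysis is structurally robust, one must carefully verify the Bernstein moment condition for the residual $\phi(A_\ell) - E_\ell^*\phi(B_\ell)$ in Hilbert-Schmidt geometry, and must show that the interpolation space $\mathcal{H}^{c_\ell}_{\mathcal{B}_\ell}$ that encodes the source condition on $E_\ell$ behaves under the operator $T_\ell$ exactly as in the scalar setting. Getting this interplay right is what enables the improvement from the previously known $n^{-\frac{1}{2}\frac{c_\ell - 1}{c_\ell + 1}}$ rate \citep[Theorem 2]{singh2019kernel} to the faster $n^{-\frac{1}{2}\frac{c_\ell - 1}{c_\ell + 1/b_\ell}}$ rate that exploits the effective dimension $b_\ell$.
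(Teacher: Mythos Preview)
Your overall strategy---recognizing the problem as vector-valued kernel ridge regression and porting the Fischer--Steinwart analysis---is exactly the paper's approach, and your Step~3 deduction of the embedding bound from the operator bound is correct and matches the paper verbatim. However, Step~2 contains a gap that would prevent you from reaching the claimed rate.

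The concentration target you write, namely $n^{-1}\sum_i \phi(B_i)\otimes\phi(A_i) - \mathbb{E}\{\phi(B)\otimes\phi(A)\}$ with the naive bound $\kappa_a\kappa_b$ on each summand, is the \emph{unpreconditioned} cross-covariance deviation. Applying Bernstein to this object gives a variance term of order $\kappa_a^2\kappa_b^2/n$, and after the remaining $(T_{BB}+\lambda I)^{-1}$ factor in the error decomposition you recover only the old rate $n^{-\frac{1}{2}\frac{c_\ell-1}{c_\ell+1}}$ of \cite{singh2019kernel}---precisely the rate you say you want to improve. The effective-dimension parameter $b_\ell$ does not enter this bound, despite your phrase ``variance control via EVD.''

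What the paper does instead is concentrate the \emph{preconditioned} quantity
\[
\xi_i = \bigl[\{\phi(A_i)-\mu_a^\lambda(B_i)\}\otimes\phi(B_i)\bigr](T_{BB}+\lambda I)^{-1/2},
\]
whose second moment in $\mathcal{L}_2$ is bounded by $\mathcal{N}(\lambda)\cdot\{2\kappa_a+\sqrt{\mathcal{B}(\lambda)}\}^2$ with $\mathcal{N}(\lambda)=Tr\{(T_{BB}+\lambda I)^{-1}T_{BB}\}\leq C\lambda^{-1/b_\ell}$. This preconditioning is the exact mechanism by which the spectral decay $b_\ell$ enters the variance bound and produces the improvement from $c_\ell+1$ to $c_\ell+1/b_\ell$ in the exponent. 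Your ``main obstacle'' paragraph gestures at the residual $\phi(A_\ell)-E_\ell^*\phi(B_\ell)$, which is closer, but still omits the crucial $(T_{BB}+\lambda I)^{-1/2}$ factor. The paper then combines this with a second high-probability event---that $(T_{BB}+\lambda I)^{1/2}(\hat T_{BB}+\lambda I)^{-1}(T_{BB}+\lambda I)^{1/2}$ has operator norm at most $3$---which is also where the ``$n$ sufficiently large'' caveat originates.
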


Remark~\ref{remark:big_n} in the subsequent, technical supplement elaborates on the meaning of the phrase ``$n$ sufficiently large''.

\begin{proof}
We delay the proof of this result to the next supplement due to its technicality.
\end{proof}

\begin{remark}[Kernel bounds]\label{remark:3}
In various applications, $\kappa_a$ and $\kappa_b$ vary.
    \begin{enumerate}
        \item $\theta_0^{ATT}$ and $\check{\theta}_0^{ATT}$: $\kappa_a=\kappa_x$, $\kappa_b=\kappa_d$;
        \item $\theta_0^{CATE}$ and $\check{\theta}_0^{CATE}$: $\kappa_a=\kappa_x$, $\kappa_b=\kappa_v$;
        \item Counterfactual distributions: $\kappa_a=\kappa_y$, $\kappa_b=\kappa_d\kappa_x$.
    \end{enumerate}
\end{remark}

\subsection{Main results}

Appealing to Propositions~\ref{theorem:regression},~\ref{theorem:mean}, and~\ref{theorem:conditional}, we now prove consistency for (i) causal functions, (ii) counterfactual distributions, and (iii) graphical models. 

\textbf{Causal functions}

\begin{proof}[Proof of Theorem~\ref{theorem:consistency_treatment}]
We initially consider $\theta_0^{ATE}$. 
    \begin{align*}
        &\hat{\theta}^{ATE}(d)-\theta_0^{ATE}(d)
        =\langle \hat{\gamma} , \phi(d)\otimes \hat{\mu}_x \rangle_{\mathcal{H}} - \langle \gamma_0 , \phi(d)\otimes \mu_x \rangle_{\mathcal{H}} \\
        &=\langle \hat{\gamma} , \phi(d)\otimes(\hat{\mu}_x-\mu_x) \rangle_{\mathcal{H}} + \langle (\hat{\gamma}-\gamma_0), \phi(d) \otimes \mu_x \rangle_{\mathcal{H}} \\
        &=\langle (\hat{\gamma}-\gamma_0), \phi(d)\otimes(\hat{\mu}_x-\mu_x) \rangle_{\mathcal{H}} + \langle \gamma_0, \phi(d)\otimes(\hat{\mu}_x-\mu_x) \rangle_{\mathcal{H}}+\langle (\hat{\gamma}-\gamma_0), \phi(d) \otimes \mu_x \rangle_{\mathcal{H}}.
    \end{align*}
    Therefore by Propositions~\ref{theorem:regression} and~\ref{theorem:mean}, with probability $1-2\delta$,
   \begin{align*}
       &|\hat{\theta}^{ATE}(d)-\theta_0^{ATE}(d)|\leq 
       \|\hat{\gamma}-\gamma_0\|_{\mathcal{H}}\|\phi(d)\|_{\mathcal{H}_{\mathcal{D}}} \|\hat{\mu}_x-\mu_x\|_{\mathcal{H}_{\mathcal{X}}}\\
       &\quad +
       \|\gamma_0\|_{\mathcal{H}}\|\phi(d)\|_{\mathcal{H}_{\mathcal{D}}}\|\hat{\mu}_x-\mu_x\|_{\mathcal{H}_{\mathcal{X}}} \\
       &\quad + 
       \|\hat{\gamma}-\gamma_0\|_{\mathcal{H}}\|\phi(d)\|_{\mathcal{H}_{\mathcal{D}}} \|\mu_x\|_{\mathcal{H}_{\mathcal{X}}}
      \\
      &\leq \kappa_d \cdot r_{\gamma}(n,\delta,b,c) \cdot r_{\mu}(n,\delta)+\kappa_d\cdot\|\gamma_0\|_{\mathcal{H}} \cdot r_{\mu}(n,\delta)+\kappa_d\kappa_x \cdot r_{\gamma}(n,\delta,b,c)\\
      &=O\left(n^{-\frac{1}{2}\frac{c-1}{c+1/b}}\right).
   \end{align*}
By the same argument, with probability $1-2\delta$,
    \begin{align*}
   &|\hat{\theta}^{DS}(d,\tilde{\text{\normalfont pr}})-\theta_0^{DS}(d,\tilde{\text{\normalfont pr}})| \\
    &\leq \kappa_d \cdot r_{\gamma}(n,\delta,b,c) \cdot r_{\nu}(\tilde{n},\delta)+\kappa_d\cdot\|\gamma_0\|_{\mathcal{H}} \cdot r_{\nu}(\tilde{n},\delta)+\kappa_d\kappa_x \cdot r_{\gamma}(n,\delta,b,c)\\
      &=O\left( n^{-\frac{1}{2}\frac{c-1}{c+1/b}}+\tilde{n}^{-\frac{1}{2}}\right).
    \end{align*}
    Next, consider $\theta_0^{ATT}$.
    \begin{align*}
        &\hat{\theta}^{ATT}(d,d')-\theta_0^{ATT}(d,d') =\langle \hat{\gamma} , \phi(d')\otimes \hat{\mu}_x(d) \rangle_{\mathcal{H}} - \langle \gamma_0 , \phi(d')\otimes \mu_x(d) \rangle_{\mathcal{H}} \\
        &=\langle \hat{\gamma} , \phi(d')\otimes\{\hat{\mu}_x(d)-\mu_x(d)\} \rangle_{\mathcal{H}} + \langle (\hat{\gamma}-\gamma_0), \phi(d') \otimes \mu_x(d) \rangle_{\mathcal{H}} \\
        &=\langle (\hat{\gamma}-\gamma_0), \phi(d')\otimes\{\hat{\mu}_x(d)-\mu_x(d)\} \rangle_{\mathcal{H}} \\
        &\quad + \langle \gamma_0, \phi(d')\otimes\{\hat{\mu}_x(d)-\mu_x(d)\} \rangle_{\mathcal{H}}\\
        &\quad +\langle (\hat{\gamma}-\gamma_0), \phi(d') \otimes \mu_x(d) \rangle_{\mathcal{H}}.
    \end{align*}
    Therefore by Propositions~\ref{theorem:regression} and~\ref{theorem:conditional}, with probability $1-2\delta$,
   \begin{align*}
       &|\hat{\theta}^{ATT}(d,d')-\theta_0^{ATT}(d,d')|
       \leq 
       \|\hat{\gamma}-\gamma_0\|_{\mathcal{H}}\|\phi(d')\|_{\mathcal{H}_{\mathcal{D}}} \|\hat{\mu}_x(d)-\mu_x(d)\|_{\mathcal{H}_{\mathcal{X}}}\\
       &\quad +\|\gamma_0\|_{\mathcal{H}}\|\phi(d')\|_{\mathcal{H}_{\mathcal{D}}}\|\hat{\mu}_x(d)-\mu_x(d)\|_{\mathcal{H}_{\mathcal{X}}} \\
       &\quad +
       \|\hat{\gamma}-\gamma_0\|_{\mathcal{H}}\|\phi(d')\|_{\mathcal{H}_{\mathcal{D}}} \|\mu_x(d)\|_{\mathcal{H}_{\mathcal{X}}}
      \\
      &\leq \kappa_d \cdot r_{\gamma}(n,\delta,b,c) \cdot r_{\mu}^{ATT}(n,\delta,b_1,c_1)+\kappa_d\cdot\|\gamma_0\|_{\mathcal{H}} \cdot r_{\mu}^{ATT}(n,\delta,b_1,c_1)
      +\kappa_d\kappa_x \cdot r_{\gamma}(n,\delta,b,c)
      \\
      &=O\left(n^{-\frac{1}{2}\frac{c-1}{c+1/b}}+n^{-\frac{1}{2}\frac{c_1-1}{c_1+1/b_1}}\right).
   \end{align*}
    Finally, consider $\theta_0^{CATE}$.
    \begin{align*}
        &\hat{\theta}^{CATE}(d,v)-\theta_0^{CATE}(d,v)=\langle \hat{\gamma} , \phi(d)\otimes \phi(v)\otimes \hat{\mu}_{x}(v) \rangle_{\mathcal{H}} - \langle \gamma_0 , \phi(d )\otimes \phi(v) \otimes \mu_{x}(v) \rangle_{\mathcal{H}} \\
        &=\langle \hat{\gamma} , \phi(d)\otimes \phi(v)\otimes\{\hat{\mu}_{x}(v)-\mu_{x}(v)\} \rangle_{\mathcal{H}} + \langle (\hat{\gamma}-\gamma_0), \phi(d)\otimes \phi(v) \otimes \mu_{x}(v) \rangle_{\mathcal{H}} \\
        &=\langle (\hat{\gamma}-\gamma_0), \phi(d)\otimes \phi(v)\otimes\{\hat{\mu}_{x}(v)-\mu_{x}(v)\} \rangle_{\mathcal{H}}  \\
        &\quad + \langle \gamma_0, \phi(d)\otimes \phi(v)\otimes\{\hat{\mu}_{x}(v)-\mu_{x}(v)\} \rangle_{\mathcal{H}}\\
        &\quad +\langle (\hat{\gamma}-\gamma_0), \phi(d)\otimes \phi(v) \otimes \mu_{x}(v) \rangle_{\mathcal{H}}.
    \end{align*}
    Therefore by Propositions~\ref{theorem:regression} and~\ref{theorem:conditional}, with probability $1-2\delta$,
   \begin{align*}
       &|\hat{\theta}^{CATE}(d,v)-\theta_0^{CATE}(d,v)|\leq 
       \|\hat{\gamma}-\gamma_0\|_{\mathcal{H}}\|\phi(d)\|_{\mathcal{H}_{\mathcal{D}}}\|\phi(v)\|_{\mathcal{H}_{\mathcal{V}}} \|\hat{\mu}_{x}(v)-\mu_{x}(v)\|_{\mathcal{H}_{\mathcal{X}}}\\
      &\quad+
       \|\gamma_0\|_{\mathcal{H}}\|\phi(d)\|_{\mathcal{H}_{\mathcal{D}}}\|\phi(v)\|_{\mathcal{H}_{\mathcal{V}}}\|\hat{\mu}_{x}(v)-\mu_{x}(v)\|_{\mathcal{H}_{\mathcal{X}}} \\
       &\quad+
       \|\hat{\gamma}-\gamma_0\|_{\mathcal{H}}\|\phi(d)\|_{\mathcal{H}_{\mathcal{D}}}\|\phi(v)\|_{\mathcal{H}_{\mathcal{V}}} \|\mu_{x}(v)\|_{\mathcal{H}_{\mathcal{X}}}
      \\
      &\leq \kappa_d\kappa_{v} \cdot r_{\gamma}(n,\delta,b,c) \cdot r_{\mu}^{CATE}(n,\delta,b_2,c_2)\\
      &\quad+\kappa_d\kappa_{v}\cdot\|\gamma_0\|_{\mathcal{H}} \cdot r_{\mu}^{CATE}(n,\delta,b_2,c_2)
      +\kappa_d\kappa_{v} \kappa_{x} \cdot r_{\gamma}(n,\delta,b,c)
      \\
      &=O\left(n^{-\frac{1}{2}\frac{c-1}{c+1/b}}+n^{-\frac{1}{2}\frac{c_2-1}{c_2+1/b_2}}\right).
   \end{align*}
    For incremental functions, replace $\phi(d)$ with $\nabla_d \phi(d)$ and hence replace $\|\phi(d)\|_{\mathcal{H}_{\mathcal{D}}}\leq \kappa_d$ with $\|\nabla_d \phi(d)\|_{\mathcal{H}_{\mathcal{D}}}\leq \kappa_d'$.
\end{proof}

\textbf{Counterfactual distributions}

\begin{proof}[Proof of Theorem~\ref{theorem:consistency_dist}]
   The argument is analogous to Theorem~\ref{theorem:consistency_treatment}. By Propositions~\ref{theorem:mean} and~\ref{theorem:conditional}, for all $d\in\mathcal{D}$, with probability $1-2\delta$,
   \begin{align*}
       &\|\hat{\theta}^{D:ATE}(d)-\check{\theta}_0^{D:ATE}(d)\|_{\mathcal{H}_{\mathcal{Y}}}
      \\
      &\leq \kappa_d \cdot r_{E}(n,\delta,b_3,c_3) \cdot r_{\mu}(n,\delta)+\kappa_d\cdot\|E_3\|_{\mathcal{L}_2} \cdot r_{\mu}(n,\delta)
    +\kappa_d\kappa_x \cdot r_{E}(n,\delta,b_3,c_3)\\
      &=O\left(n^{-\frac{1}{2}\frac{c_3-1}{c_3+1/b_3}}\right).
   \end{align*}
Likewise, with probability $1-2\delta$,
    \begin{align*}
   &\|\hat{\theta}^{D:DS}(d,\tilde{\text{\normalfont pr}})-\check{\theta}_0^{D:DS}(d,\tilde{\text{\normalfont pr}})\|_{\mathcal{H}_{\mathcal{Y}}} \\
   &\leq
   \kappa_d \cdot r_{E}(n,\delta,b_3,c_3) \cdot r_{\nu}(\tilde{n},\delta)+\kappa_d\cdot\|E_3\|_{\mathcal{L}_2} \cdot r_{\nu}(\tilde{n},\delta)
    +\kappa_d\kappa_x \cdot r_{E}(n,\delta,b_3,c_3)\\
      &=O\left(n^{-\frac{1}{2}\frac{c_3-1}{c_3+1/b_3}}+\tilde{n}^{-\frac{1}{2}}\right).
    \end{align*}
By Proposition~\ref{theorem:conditional}, for all $d,d'\in\mathcal{D}$, with probability $1-2\delta$,
   \begin{align*}
       &\|\hat{\theta}^{D:ATT}(d,d')-\check{\theta}_0^{D:ATT}(d,d')\|_{\mathcal{H}_{\mathcal{Y}}}
      \\
      &\leq \kappa_d \cdot r_{E}(n,\delta,b_3,c_3) \cdot r^{ATT}_{\mu}(n,\delta,b_1,c_1) \\
      &\quad +\kappa_d\cdot\|E_3\|_{\mathcal{L}_2} \cdot r^{ATT}_{\mu}(n,\delta,b_1,c_1) 
      +\kappa_d\kappa_x \cdot r_{E}(n,\delta,b_3,c_3)\\
      &=O\left(n^{-\frac{1}{2}\frac{c_1-1}{c_1+1/b_1}}+n^{-\frac{1}{2}\frac{c_3-1}{c_3+1/b_3}}\right).
   \end{align*}
By Proposition~\ref{theorem:conditional}, for all $d\in\mathcal{D}$ and $v\in\mathcal{V}$ with probability $1-2\delta$,
   \begin{align*}
       &\|\hat{\theta}^{D:CATE}(d,v)-\check{\theta}_0^{D:CATE}(d,v)\|_{\mathcal{H}_{\mathcal{Y}}}
      \leq \kappa_d\kappa_{v} \cdot r_{E}(n,\delta,b_3,c_3) \cdot r_{\mu}^{CATE}(n,\delta,b_2,c_2) \\
      &\quad +\kappa_d\kappa_{v}\cdot\|E_3\|_{\mathcal{L}_2} \cdot r_{\mu}^{CATE}(n,\delta,b_2,c_2) \\
      &\quad +\kappa_d\kappa_{v} \kappa_{x} \cdot r_{E}(n,\delta,b_3,c_3)
      \\
      &=O\left(n^{-\frac{1}{2}\frac{c_2-1}{c_2+1/b_2}}+n^{-\frac{1}{2}\frac{c_3-1}{c_3+1/b_3}}\right).
   \end{align*}
      
\end{proof}

\begin{proof}[Proof of Theorem~\ref{theorem:conv_dist}]
    Fix $d$. By Theorem~\ref{theorem:consistency_dist}
    $$
    \|\hat{\theta}^{D:ATE}(d)-\check{\theta}_0^{D:ATE}(d)\|_{\mathcal{H}_{\mathcal{Y}}}=O_p\left(n^{-1/2\frac{c_3-1}{c_3+1/b_3}}\right).
    $$
    Denote the samples constructed by Algorithm~\ref{algorithm:herding} by $\tilde{Y}_j$ $(j=1,...,m)$. Then by \cite[Section 4.2]{bach2012equivalence},
    $$
    \left\|\hat{\theta}^{D:ATE}(d)-\frac{1}{m}\sum_{j=1}^m \phi(\tilde{Y}_j)\right\|_{\mathcal{H}_{\mathcal{Y}}}=O(m^{-1/2}).
    $$
    Therefore by triangle inequality,
    $$
    \left\|\frac{1}{m}\sum_{j=1}^m \phi(\tilde{Y}_j)-\check{\theta}_0^{D:ATE}(d)\right\|_{\mathcal{H}_{\mathcal{Y}}}=O_p\left(n^{-1/2\frac{c_3-1}{c_3+1/b_3}}+m^{-1/2}\right).
    $$
    The desired result follows from \cite{sriperumbudur2016optimal}, as quoted by \cite[Theorem 1.1]{simon2020metrizing}. The argument for other counterfactual distributions is identical.
\end{proof}

\textbf{Graphical models}

\begin{proposition}\label{prop:delta_dag}
If Assumptions~\ref{assumption:RKHS},~\ref{assumption:original}, and~\ref{assumption:smooth_op} hold with $\mathcal{A}_1=\mathcal{X}$ and $\mathcal{B}_1=\mathcal{D}$, then with probability $1-2\delta$,
$$
\|\hat{\mu}_d\otimes \hat{\mu}_x(d)-\mu_d\otimes \mu_x(d)\|_{\mathcal{H}_{\mathcal{D}}\otimes \mathcal{H}_{\mathcal{X}}}\leq \kappa_x \cdot r_{\mu} (n,\delta)+\kappa_d\cdot   r^{ATT}_{\mu}(n,\delta,b_1,c_1),
$$
where $r_{\mu}$ is as defined in Proposition~\ref{theorem:mean} (with $\kappa_w=\kappa_d$) and $r^{ATT}_{\mu}$ is as defined in Proposition~\ref{theorem:conditional}.
\end{proposition}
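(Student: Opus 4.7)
The plan is to reduce the bound on a difference of tensor products to two familiar one-factor errors via an add-and-subtract decomposition, and then apply the already-proved rates for unconditional and conditional mean embeddings. Concretely, I would write
\[
\hat{\mu}_d \otimes \hat{\mu}_x(d) - \mu_d \otimes \mu_x(d) = \hat{\mu}_d \otimes \{\hat{\mu}_x(d) - \mu_x(d)\} + (\hat{\mu}_d - \mu_d) \otimes \mu_x(d),
\]
so that the triangle inequality together with the basic identity $\|a \otimes b\|_{\mathcal{H}_{\mathcal{D}} \otimes \mathcal{H}_{\mathcal{X}}} = \|a\|_{\mathcal{H}_{\mathcal{D}}}\|b\|_{\mathcal{H}_{\mathcal{X}}}$ gives
\[
\|\hat{\mu}_d \otimes \hat{\mu}_x(d) - \mu_d \otimes \mu_x(d)\|_{\mathcal{H}_{\mathcal{D}} \otimes \mathcal{H}_{\mathcal{X}}} \leq \|\hat{\mu}_d\|_{\mathcal{H}_{\mathcal{D}}} \cdot \|\hat{\mu}_x(d) - \mu_x(d)\|_{\mathcal{H}_{\mathcal{X}}} + \|\hat{\mu}_d - \mu_d\|_{\mathcal{H}_{\mathcal{D}}} \cdot \|\mu_x(d)\|_{\mathcal{H}_{\mathcal{X}}}.
\]

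Next I would control the two deterministic factors by the kernel bound from Assumption~\ref{assumption:RKHS}. Since $\hat{\mu}_d = n^{-1}\sum_{i=1}^n \phi(D_i)$, the triangle inequality and $\|\phi(d)\|_{\mathcal{H}_{\mathcal{D}}} \leq \kappa_d$ yield $\|\hat{\mu}_d\|_{\mathcal{H}_{\mathcal{D}}} \leq \kappa_d$. Bochner integrability, which is justified by the boundedness of $k_{\mathcal{X}}$, allows the bound $\|\mu_x(d)\|_{\mathcal{H}_{\mathcal{X}}} = \|\int \phi(x)\,\mathrm{d}\mathbb{P}(x\mid d)\|_{\mathcal{H}_{\mathcal{X}}} \leq \int \|\phi(x)\|_{\mathcal{H}_{\mathcal{X}}}\,\mathrm{d}\mathbb{P}(x\mid d) \leq \kappa_x$.

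The two stochastic factors are then bounded by invoking the rates already established earlier in the appendix. Proposition~\ref{theorem:mean} applied to $W = D$ implies $\|\hat{\mu}_d - \mu_d\|_{\mathcal{H}_{\mathcal{D}}} \leq r_\mu(n,\delta)$ with probability $1 - \delta$; Proposition~\ref{theorem:conditional} applied to $\mathcal{A}_1 = \mathcal{X}$, $\mathcal{B}_1 = \mathcal{D}$ gives $\|\hat{\mu}_x(d) - \mu_x(d)\|_{\mathcal{H}_{\mathcal{X}}} \leq r^{ATT}_\mu(n,\delta,b_1,c_1)$ with probability $1 - \delta$, uniformly in $d$. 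Combining by the union bound yields the displayed inequality with probability $1 - 2\delta$.

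There is no real obstacle here; the only subtle choice is which side of each tensor factor to attach the error to, since one side must be bounded by its kernel constant and the other by the corresponding embedding rate. The decomposition above is arranged so that the two deterministic factors $\|\hat{\mu}_d\|_{\mathcal{H}_{\mathcal{D}}}$ and $\|\mu_x(d)\|_{\mathcal{H}_{\mathcal{X}}}$ admit clean uniform bounds $\kappa_d$ and $\kappa_x$; the alternative decomposition would leave $\|\hat{\mu}_x(d)\|_{\mathcal{H}_{\mathcal{X}}}$, which is not as directly controlled.
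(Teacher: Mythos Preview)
Your proposal is correct and follows essentially the same approach as the paper: the same add-and-subtract decomposition through the intermediate term $\hat{\mu}_d\otimes\mu_x(d)$, the same tensor norm factorization, the same kernel bounds $\|\hat{\mu}_d\|_{\mathcal{H}_{\mathcal{D}}}\leq\kappa_d$ and $\|\mu_x(d)\|_{\mathcal{H}_{\mathcal{X}}}\leq\kappa_x$, and the same appeals to Propositions~\ref{theorem:mean} and~\ref{theorem:conditional}. Your write-up is in fact slightly more explicit than the paper's in justifying the deterministic bounds and the union bound.
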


\begin{proof}
By triangle inequality,
\begin{align*}
    &\|\hat{\mu}_d\otimes \hat{\mu}_x(d)-\mu_d\otimes \mu_x(d)\|_{\mathcal{H}_{\mathcal{D}}\otimes \mathcal{H}_{\mathcal{X}}} \\
    &\leq \|\hat{\mu}_d\otimes \hat{\mu}_x(d)-\hat{\mu}_d\otimes \mu_x(d)\|_{\mathcal{H}_{\mathcal{D}}\otimes \mathcal{H}_{\mathcal{X}}} + 
    \|\hat{\mu}_d\otimes \mu_x(d)-\mu_d\otimes \mu_x(d)\|_{\mathcal{H}_{\mathcal{D}}\otimes \mathcal{H}_{\mathcal{X}}}.
\end{align*}
Focusing on the former term, by Proposition~\ref{theorem:conditional},
\begin{align*}
    \|\hat{\mu}_d\otimes \hat{\mu}_x(d)-\hat{\mu}_d\otimes \mu_x(d)\|_{\mathcal{H}_{\mathcal{D}}\otimes \mathcal{H}_{\mathcal{X}}}
    &\leq  \|\hat{\mu}_d\|_{\mathcal{H}_{\mathcal{D}}}\cdot \|\hat{\mu}_x(d)-\mu_x(d)\|_{\mathcal{H}_{\mathcal{X}}}\leq \kappa_d \cdot r^{ATT}_{\mu}(n,\delta,b_1,c_1).
\end{align*}
Focusing on the latter term, by Proposition~\ref{theorem:mean},
\begin{align*}
      \|\hat{\mu}_d\otimes \mu_x(d)-\mu_d\otimes \mu_x(d)\|_{\mathcal{H}_{\mathcal{D}}\otimes \mathcal{H}_{\mathcal{X}}}
    &\leq \|\hat{\mu}_d-\mu_d\|_{\mathcal{H}_{\mathcal{D}}}\cdot  \|\mu_x(d)\|_{\mathcal{H}_{\mathcal{X}}} \leq  \kappa_x \cdot r_{\mu} (n,\delta).
\end{align*}
\end{proof}

\begin{proof}[Proof of Theorem~\ref{theorem:consistency_dag}]
To begin, write
\begin{align*}
        \hat{\theta}^{FD}(d)-\theta_0^{do}(d)
        &=\langle \hat{\gamma} , \hat{\mu}_d\otimes \hat{\mu}_x(d) \rangle_{\mathcal{H}} - \langle \gamma_0 , \mu_d\otimes \mu_x(d) \rangle_{\mathcal{H}} \\
        &=\langle \hat{\gamma} , \{\hat{\mu}_d\otimes \hat{\mu}_x(d)-\mu_d\otimes \mu_x(d)\} \rangle_{\mathcal{H}} + \langle (\hat{\gamma}-\gamma_0), \mu_d\otimes \mu_x(d) \rangle_{\mathcal{H}} \\
        &=\langle (\hat{\gamma}-\gamma_0), \{\hat{\mu}_d\otimes \hat{\mu}_x(d)-\mu_d\otimes \mu_x(d)\}\rangle_{\mathcal{H}} \\
        &\quad + \langle \gamma_0, \{\hat{\mu}_d\otimes \hat{\mu}_x(d)-\mu_d\otimes \mu_x(d)\} \rangle_{\mathcal{H}}\\
        &\quad +\langle (\hat{\gamma}-\gamma_0), \mu_d\otimes \mu_x(d) \rangle_{\mathcal{H}}.
    \end{align*}
    Therefore by Propositions~\ref{theorem:regression},~\ref{theorem:mean},~\ref{theorem:conditional}, and~\ref{prop:delta_dag}, with probability $1-3\delta$,
  \begin{align*}
      |\hat{\theta}^{FD}(d)-\theta_0^{do}(d)|
      &\leq 
      \|\hat{\gamma}-\gamma_0\|_{\mathcal{H}}\|\hat{\mu}_d\otimes \hat{\mu}_x(d)-\mu_d\otimes \mu_x(d)\|_{\mathcal{H}_{\mathcal{D}}\otimes \mathcal{H}_{\mathcal{X}}}\\
      &\quad +
      \|\gamma_0\|_{\mathcal{H}}\|\hat{\mu}_d\otimes \hat{\mu}_x(d)-\mu_d\otimes \mu_x(d)\|_{\mathcal{H}_{\mathcal{D}}\otimes \mathcal{H}_{\mathcal{X}}} \\
      &\quad +
      \|\hat{\gamma}-\gamma_0\|_{\mathcal{H}}\|\mu_d\|_{\mathcal{H}_{\mathcal{D}}} \|\mu_x(d)\|_{\mathcal{H}_{\mathcal{X}}}
      \\
      &\leq r_{\gamma}(n,\delta,b,c)\{\kappa_x \cdot r_{\mu} (n,\delta)+\kappa_d\cdot   r^{ATT}_{\mu}(n,\delta,b_1,c_1)\} \\
      &\quad +\|\gamma_0\|_{\mathcal{H}}\{\kappa_x \cdot r_{\mu} (n,\delta)+\kappa_d\cdot   r^{ATT}_{\mu}(n,\delta,b_1,c_1)\}\\
      &\quad +\kappa_d\kappa_x\cdot r_{\gamma}(n,\delta,b,c) \\
      &=O\left(n^{-\frac{1}{2}\frac{c-1}{c+1/b}}+n^{-\frac{1}{2}\frac{c_1-1}{c_1+1/b_1}}\right).
  \end{align*}
  The argument for $\hat{\theta}^{D:FD}$ is analogous. By Propositions~\ref{theorem:mean},~\ref{theorem:conditional}, and~\ref{prop:delta_dag}, for all $d\in\mathcal{D}$, with probability $1-3\delta$,
   \begin{align*}
       &\|\hat{\theta}^{D:FD}(d)-\check{\theta}_0^{D:FD}(d)\|_{\mathcal{H}_{\mathcal{Y}}}
      \\
      &\leq r_{E}(n,\delta,b_3,c_3)\{\kappa_x \cdot r_{\mu} (n,\delta)+\kappa_d\cdot   r^{ATT}_{\mu}(n,\delta,b_1,c_1)\} \\
      &\quad +\|E_3\|_{\mathcal{L}_2}\{\kappa_x \cdot r_{\mu} (n,\delta)+\kappa_d\cdot   r^{ATT}_{\mu}(n,\delta,b_1,c_1)\}\\
      &\quad +\kappa_d\kappa_x\cdot r_{E}(n,\delta,b_3,c_3) \\
      &=O\left(n^{-\frac{1}{2}\frac{c_3-1}{c_3+1/b_3}}+n^{-\frac{1}{2}\frac{c_1-1}{c_1+1/b_1}}\right).
   \end{align*}
\end{proof}

\begin{proof}[Proof of Theorem~\ref{theorem:conv_dist_dag}]
The argument is identical to the proof of Theorem~\ref{theorem:conv_dist}.
\end{proof}
\section{Conditional expectation operator proof}\label{sec:operator}

In this supplement, we prove Proposition~\ref{theorem:conditional}, improving the rate of \cite{singh2019kernel} from $n^{-(c-1)/\{2(c+1)\}}$ to $n^{-(c-1)/\{2(c+1/b)\}}$. Our consideration of Hilbert--Schmidt norm departs from \cite{park2020measure,talwai2022sobolev}, who study surrogate risk and operator norm, respectively. Our assumptions also depart from \cite{singh2019kernel,park2020measure,talwai2022sobolev}. Instead, we directly generalize the assumptions of \cite{fischer2017sobolev} from the standard kernel ridge regression to the generalized kernel ridge regression that we use to estimate a conditional mean embedding. Our rate matches the minimax optimal rate shown in contemporaneous work of  \cite{li2022optimal}, who also study the misspecified case. We focus on the well specified case for reasons described in Section~\ref{sec:detail}, and employ a simpler proof strategy.

To lighten notation, we suppress the indexing of conditional expectation operators and conditional mean embeddings by $\ell$. Furthermore, to lighten notation, we abbreviate $\mathcal{L}_2=\mathcal{L}_2(\mathcal{H}_{\mathcal{A}},\mathcal{H}_{\mathcal{B}})$. In the simplified notation,
\begin{align*}
    E_0&=\argmin_{E\in\mathcal{L}_2}\mathcal{E}(E),\quad \mathcal{E}(E)=E[\{\phi(A)-E^*\phi(B)\}^2]; \\
    E_{\lambda}&=\argmin_{E\in\mathcal{L}_2}\mathcal{E}(E)+\lambda\|E\|^2_{\mathcal{L}_2};\\
    \hat{E}&=\argmin_{E\in\mathcal{L}_2}\hat{\mathcal{E}}(E),\quad \hat{\mathcal{E}}(E)=n^{-1}\sum_{i=1}^n\{\phi(A_{i})-E^*\phi(B_{i})\}^2+\lambda\|E\|^2_{\mathcal{L}_2}.
\end{align*}

\subsection{Bias}

\begin{proposition}[Conditional expectation operator bias; Theorem 6 of \cite{singh2019kernel}]\label{prop:bias}
Suppose Assumptions~\ref{assumption:RKHS},~\ref{assumption:original}, and the source condition in~\ref{assumption:smooth_op} hold. Then with probability one,
$$
\|E_{\lambda}-E_{0}\|_{\mathcal{L}_2} \leq \lambda^{\frac{c-1}{2}}\zeta^{1/2},
$$
where $\zeta$ is defined in Lemma~\ref{lemma:alt}.
\end{proposition}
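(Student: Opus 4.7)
The plan is to derive a closed-form representation for the regularized population minimizer $E_\lambda$, subtract $E_0$, and then invoke the source condition together with spectral calculus on the covariance operator $T$. Working with the adjoint $F := E^*\!:\mathcal{H}_{\mathcal{B}}\to\mathcal{H}_{\mathcal{A}}$ is notationally cleaner, since the loss $\mathbb{E}\|\phi(A)-E^*\phi(B)\|^2$ becomes $\mathbb{E}\|\phi(A)-F\phi(B)\|^2$, whose first-order condition in $\mathcal{L}_2(\mathcal{H}_{\mathcal{B}},\mathcal{H}_{\mathcal{A}})$ is $F_\lambda T + \lambda F_\lambda = C_{AB}$, where $C_{AB}:=\mathbb{E}\{\phi(A)\otimes \phi(B)\}$. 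This yields $F_\lambda = C_{AB}(T+\lambda I)^{-1}$, and the unregularized minimizer satisfies $C_{AB}=F_0 T$, so taking adjoints gives
\begin{equation*}
E_\lambda - E_0 \;=\; -\lambda\,(T+\lambda I)^{-1} E_0.
\end{equation*}

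Next, I would plug in the source condition from Lemma~\ref{lemma:alt}, i.e.\ $E_0 = T^{(c-1)/2}\circ G$ with $\|G\|_{\mathcal{L}_2}^2\le \zeta$, to obtain
\begin{equation*}
E_\lambda - E_0 \;=\; -\lambda\,(T+\lambda I)^{-1} T^{(c-1)/2} \circ G.
\end{equation*}
Bounding the Hilbert--Schmidt norm by sub-multiplicativity against the operator norm, the task reduces to controlling $\bigl\|\lambda(T+\lambda I)^{-1}T^{(c-1)/2}\bigr\|_{\mathrm{op}}$. Since $T$ is self-adjoint, positive, and compact, the spectral theorem applies, and this operator norm equals
\begin{equation*}
\sup_{t\ge 0} \frac{\lambda\, t^{(c-1)/2}}{t+\lambda}.
\end{equation*}
For $c\in(1,2]$, a one-line calculus exercise (differentiate in $t$ or use Young's inequality on $t$ and $\lambda$) shows this supremum is at most $\lambda^{(c-1)/2}$. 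Combining with $\|G\|_{\mathcal{L}_2}\le \sqrt{\zeta}$ yields the claim.

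The step I expect to require the most care is the derivation of the closed form $F_\lambda = C_{AB}(T+\lambda I)^{-1}$, because one must justify differentiating the population risk in the $\mathcal{L}_2$ norm and verify that $T+\lambda I$ is boundedly invertible on the relevant Hilbert--Schmidt space; Assumption~\ref{assumption:RKHS} (boundedness and measurability of the kernels) together with Bochner integrability makes both steps rigorous. The spectral calculus bound is routine once $c\in(1,2]$ is used (the \emph{saturation} at $c=2$ appears precisely here: for $c>2$, $t^{(c-1)/2}/(t+\lambda)$ is no longer bounded uniformly by $\lambda^{(c-3)/2}\cdot\lambda$ in the desired way, so the rate no longer improves).
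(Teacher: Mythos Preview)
Your proposal is correct and follows the standard Tikhonov bias argument: the closed form $E_\lambda - E_0 = -\lambda(T+\lambda I)^{-1}E_0$, the source substitution $E_0 = T^{(c-1)/2}G$, and the spectral calculus bound $\sup_{t\ge 0}\lambda t^{(c-1)/2}/(t+\lambda)\le \lambda^{(c-1)/2}$ are exactly the right ingredients, and your derivation of $C_{AB}=F_0T$ via the tower property is clean. Note that the paper itself does not prove this proposition at all; it simply imports it as Theorem~6 of \cite{singh2019kernel}, so your write-up is in fact more detailed than what appears here, and matches the argument one finds in that reference.
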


\subsection{Variance}

\begin{lemma}[Helpful bounds]\label{lemma:bounds}
Suppose Assumptions~\ref{assumption:RKHS},~\ref{assumption:original}, and~\ref{assumption:smooth_op} hold. Let $\mu^{\lambda}_{a}(b)=E_{\lambda}^*\phi(b)$. We adopt the language of \cite{caponnetto2007optimal}.
\begin{enumerate}
    %\item The generalized residual is $\mathcal{A}(\lambda)=E[\|\mu^{\lambda}_a(B)-\mu_a(B)\|^2_{\mathcal{H}_{\mathcal{A}}}]\leq \lambda^c \zeta^{1/2}$.
    \item The generalized reconstruction error is $\mathcal{B}(\lambda)=\sup_{b\in\mathcal{B}}\|\mu^{\lambda}_a(b)-\mu_a(b)\|^2_{\mathcal{H}_{\mathcal{A}}} \leq \kappa^2_b \zeta \cdot \lambda^{c-1}$.
   % \item $\|E[(\mu^{\lambda}_a(B)-\mu_a(B))\otimes (\mu^{\lambda}_a(B)-\mu_a(B))]\|_{op}\leq \zeta \lambda^c$.
    \item The generalized effective dimension is $\mathcal{N}(\lambda)=\text{\normalfont tr}\{(T+\lambda I)^{-1}T\}\leq C (\pi/b) \{\sin(\pi/b)\}^{-1}\lambda^{-1/b}$.
\end{enumerate}
\end{lemma}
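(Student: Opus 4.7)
The two bounds are essentially independent, and both reduce to short calculations once the right interpretation is set up.

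For (1), the plan is to relate the reconstruction error to the operator discrepancy $E_\lambda - E_0$ and then quote Proposition~\ref{prop:bias}. Since $\mu_a^\lambda(b) = E_\lambda^* \phi(b)$ and $\mu_a(b) = E_0^* \phi(b)$ by definition, I would write
\begin{align*}
\|\mu_a^\lambda(b) - \mu_a(b)\|_{\mathcal{H}_{\mathcal{A}}}
&= \|(E_\lambda - E_0)^* \phi(b)\|_{\mathcal{H}_{\mathcal{A}}}
\leq \|E_\lambda - E_0\|_{\text{op}} \cdot \|\phi(b)\|_{\mathcal{H}_{\mathcal{B}}}
\leq \kappa_b \, \|E_\lambda - E_0\|_{\mathcal{L}_2},
\end{align*}
using that the operator norm is dominated by the Hilbert--Schmidt norm and that $\|\phi(b)\|_{\mathcal{H}_{\mathcal{B}}}\leq \kappa_b$ by Assumption~\ref{assumption:RKHS}. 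Taking the sup over $b\in\mathcal{B}$, squaring, and inserting the bound $\|E_\lambda - E_0\|_{\mathcal{L}_2}\leq \lambda^{(c-1)/2}\sqrt{\zeta}$ from Proposition~\ref{prop:bias} yields the desired $\kappa_b^2 \zeta \lambda^{c-1}$.

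For (2), the plan is a spectral calculation. Diagonalizing $T$ in its eigenbasis gives
\begin{equation*}
\mathcal{N}(\lambda)=\text{Tr}\{(T+\lambda I)^{-1}T\}=\sum_{j=1}^\infty \frac{\eta_j}{\eta_j+\lambda}.
\end{equation*}
Since $x\mapsto x/(x+\lambda)$ is increasing on $[0,\infty)$, the spectral decay hypothesis $\eta_j\leq Cj^{-b}$ gives a termwise bound $\eta_j/(\eta_j+\lambda)\leq Cj^{-b}/(Cj^{-b}+\lambda)=1/\{1+(\lambda/C)j^b\}$. Comparing the resulting series to an integral via monotonicity of the summand in $j$, I would obtain
\begin{equation*}
\mathcal{N}(\lambda)\leq \int_0^\infty \frac{dx}{1+(\lambda/C)x^b}=\Bigl(\frac{\lambda}{C}\Bigr)^{-1/b}\int_0^\infty \frac{du}{1+u^b},
\end{equation*}
after the substitution $u=(\lambda/C)^{1/b}x$. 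The remaining integral is the standard evaluation $\int_0^\infty (1+u^b)^{-1}\,du=(\pi/b)/\sin(\pi/b)$ (via a beta function identity or contour integration), and absorbing $C^{1/b}$ into the generic constant yields the stated bound.

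The main obstacle is not analytic but bookkeeping: one must verify that the series-to-integral comparison is valid (the summand $1/\{1+(\lambda/C)j^b\}$ is decreasing in $j\geq 1$, so $\sum_{j\geq 1}\leq \int_0^\infty$), and that the termwise monotonicity argument is applied to the correct function. Beyond that, the only nontrivial input is the closed-form integral $\int_0^\infty du/(1+u^b)$, which is classical. Neither bound requires the full RKHS machinery beyond what has already been set up.
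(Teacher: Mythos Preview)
Your proposal is correct and matches the paper's approach: the paper's proof simply states that (1) is a corollary of Proposition~\ref{prop:bias} and that (2) follows from a cited equation in \cite{sutherland2017fixing} together with the spectral decay condition, and you have accurately unpacked both of these citations. Your explicit spectral computation for (2)---termwise bound via monotonicity of $x\mapsto x/(x+\lambda)$, integral comparison, and the beta-function evaluation of $\int_0^\infty(1+u^b)^{-1}\,du$---is exactly the standard argument behind the cited reference.
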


\begin{proof}
%The first result follows from \cite[Remark B]{talwai2022sobolev}, since the source condition in Assumption~\ref{assumption:smooth_op} implies \cite[Assumption 3]{talwai2022sobolev}. 
The first result is a corollary of Proposition~\ref{prop:bias}. 
%The third result follows from \cite[eq. 8]{talwai2022sobolev}. 
The second result follows from \cite[eq. f]{sutherland2017fixing}, appealing to the effective dimension condition in Assumption~\ref{assumption:smooth_op}.
\end{proof}

\begin{lemma}[Decomposition of variance]\label{lemma:decomp}
Let $T_{AB}=E\{\phi(A)\otimes \phi(B)\}$ and let $E_n(\cdot)=n^{-1}\sum_{i=1}^n(\cdot)$. The following bound holds:
\begin{align*}
    \|\hat{E}-E_{\lambda}\|_{\mathcal{L}_2}
    &\leq \|\{\hat{T}_{AB}-T_{AB}(T_{BB}+\lambda I)^{-1}(\hat{T}_{BB}+\lambda I)\}(T_{BB}+\lambda I)^{-1/2}\|_{\mathcal{L}_2} \\
    &\quad \cdot \|(T_{BB}+\lambda I)^{1/2}(\hat{T}_{BB}+\lambda I)^{-1}(T_{BB}+\lambda I)^{1/2}\|_{op} \\
    &\quad \cdot \|(T_{BB}+\lambda I)^{-1/2}\|_{op}.
\end{align*}
Moreover, in the first factor,
\begin{align*}
    &\hat{T}_{AB}-T_{AB}(T_{BB}+\lambda I)^{-1}(\hat{T}_{BB}+\lambda I) \\
    &=E_n[\{\phi(A)-\mu^{\lambda}_a(B)\}\otimes \phi(B)]-E[\{\phi(A)-\mu^{\lambda}_a(B)\}\otimes \phi(B)].
\end{align*}
\end{lemma}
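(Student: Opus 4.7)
The plan is to derive the decomposition via explicit manipulation of the first-order conditions for the population and empirical ridge problems, and then apply submultiplicativity of the $\mathcal{L}_2$ and operator norms. First I would observe that $\|\hat E - E_\lambda\|_{\mathcal L_2} = \|\hat E^* - E_\lambda^*\|_{\mathcal L_2}$, so it suffices to work with adjoints, which have the cleaner closed forms $E_\lambda^* = T_{AB}(T_{BB}+\lambda I)^{-1}$ and $\hat E^* = \hat T_{AB}(\hat T_{BB}+\lambda I)^{-1}$, with $T_{AB}:=\mathbb E[\phi(A)\otimes \phi(B)]$ and $\hat T_{AB}$ its empirical analogue. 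These follow directly from differentiating the regularized population and empirical objectives in $E^*$.

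Next I would put the difference over a common right denominator $(\hat T_{BB}+\lambda I)^{-1}$ via the elementary identity $X U^{-1}-Y V^{-1}=(XV - YU)U^{-1}V^{-1}\cdot V = (X - Y V^{-1} U)U^{-1}$:
\begin{align*}
\hat E^* - E_\lambda^* &= \bigl[\hat T_{AB} - T_{AB}(T_{BB}+\lambda I)^{-1}(\hat T_{BB}+\lambda I)\bigr](\hat T_{BB}+\lambda I)^{-1}.
\end{align*}
Then, to split off the randomness-sensitive factor $(\hat T_{BB}+\lambda I)^{-1}$ from the concentrating numerator, I would insert $(T_{BB}+\lambda I)^{-1/2}(T_{BB}+\lambda I)^{1/2}$ immediately after the bracket and a second copy at the far right, regrouping as
\begin{align*}
\hat E^* - E_\lambda^* &= \bigl\{[\hat T_{AB} - T_{AB}(T_{BB}+\lambda I)^{-1}(\hat T_{BB}+\lambda I)](T_{BB}+\lambda I)^{-1/2}\bigr\} \\
&\quad \cdot \bigl\{(T_{BB}+\lambda I)^{1/2}(\hat T_{BB}+\lambda I)^{-1}(T_{BB}+\lambda I)^{1/2}\bigr\} \cdot (T_{BB}+\lambda I)^{-1/2}.
\end{align*}
Applying $\|AB\|_{\mathcal L_2} \le \|A\|_{\mathcal L_2}\|B\|_{op}$ then yields the three-factor bound as stated, with the HS norm attached to the first bracket.

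For the \emph{moreover} part I would exploit the population first-order condition in the form $T_{AB} = E_\lambda^*(T_{BB}+\lambda I)$, so that $T_{AB}(T_{BB}+\lambda I)^{-1}=E_\lambda^*$. Substituting gives
\begin{align*}
\hat T_{AB} - T_{AB}(T_{BB}+\lambda I)^{-1}(\hat T_{BB}+\lambda I) = \hat T_{AB} - E_\lambda^*\hat T_{BB} - \lambda E_\lambda^*,
\end{align*}
and since $\mu_a^\lambda(B)=E_\lambda^*\phi(B)$ and $\mathbb E[\{\phi(A)-E_\lambda^*\phi(B)\}\otimes \phi(B)] = T_{AB}-E_\lambda^* T_{BB}=\lambda E_\lambda^*$, this matches $\mathbb E_n[\{\phi(A)-\mu_a^\lambda(B)\}\otimes \phi(B)] - \mathbb E[\{\phi(A)-\mu_a^\lambda(B)\}\otimes \phi(B)]$ exactly.

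The main obstacle is bookkeeping rather than mathematics: keeping tensor-product orientations, operator adjoints, and the order of composition consistent so that the submultiplicative split is legal and the centered empirical quantity in the first factor is the one we actually want to concentrate. Once the adjoint identities and the identity-insertion trick are in place, the rest is algebra, and the resulting decomposition feeds directly into a Bernstein-type bound on the centered empirical quantity, a Fischer--Steinwart style bound on the middle factor, and the trivial bound $\lambda^{-1/2}$ on the last factor.
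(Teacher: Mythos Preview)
Your proposal is correct and follows essentially the same route as the paper, which merely cites \cite{fischer2017sobolev} eq.~44, \cite{talwai2022sobolev} eq.~34, and \cite{singh2019kernel} Proposition~22 for this decomposition. Your self-contained derivation via the first-order conditions, the identity-insertion trick, and the centering argument for the ``moreover'' part is exactly what lies behind those citations.
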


\begin{proof}
The result mirrors \cite[eq. 44]{fischer2017sobolev} and \cite[eq. 34]{talwai2022sobolev},
strengthening the RKHS norm to Hilbert--Schmidt norm via \cite[Proposition 22]{singh2019kernel}.
\end{proof}

\begin{lemma}[Bounding the first factor]\label{lemma:1}
Suppose Assumptions~\ref{assumption:RKHS} and~\ref{assumption:original} hold. Then with probability $1-\delta/2$, the first factor in Lemma~\ref{lemma:decomp} is bounded as 
\begin{align*}
  &\|\{\hat{T}_{AB}-T_{AB}(T_{BB}+\lambda I)^{-1}(\hat{T}_{BB}+\lambda I)\}(T_{BB}+\lambda I)^{-1/2}\|_{\mathcal{L}_2} \\
  &\leq 
    4\log(4/\delta) \left\{\frac{\kappa_a\kappa_b}{n\lambda^{1/2}}+\frac{\kappa_b\mathcal{B}(\lambda)^{1/2}}{n\lambda^{1/2}}+\frac{\kappa_a \mathcal{N}(\lambda)^{1/2}}{n^{1/2}}+\frac{\mathcal{B}(\lambda)^{1/2}\mathcal{N}(\lambda)^{1/2}}{n^{1/2}}\right\}.
\end{align*}
\end{lemma}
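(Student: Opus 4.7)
The plan is to apply the Bennett inequality (Lemma~\ref{lemma:prob}) to an i.i.d. sequence of Hilbert-Schmidt operators whose centered sum equals the quantity to be bounded. By Lemma~\ref{lemma:decomp}, the operator inside the Hilbert-Schmidt norm is
$\{\mathbb{E}_n[\xi]-\mathbb{E}[\xi]\}(T_{BB}+\lambda I)^{-1/2}$, where $\xi_i:=\{\phi(A_i)-\mu_a^\lambda(B_i)\}\otimes \phi(B_i)$. Since right-multiplication by the fixed operator $(T_{BB}+\lambda I)^{-1/2}$ commutes with the empirical and population averages, I would apply Bennett to the i.i.d.\ Hilbert space--valued random variable
$\eta_i:=\xi_i(T_{BB}+\lambda I)^{-1/2} = \{\phi(A_i)-\mu_a^\lambda(B_i)\}\otimes (T_{BB}+\lambda I)^{-1/2}\phi(B_i)$, viewed as an element of $\mathcal{L}_2$ equipped with the Hilbert--Schmidt norm.

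Step one is the almost sure bound. Using the identity $\|u\otimes v\|_{\mathcal{L}_2}=\|u\|\,\|v\|$, I would write
$\|\eta_i\|_{\mathcal{L}_2}\leq \|\phi(A_i)-\mu_a^\lambda(B_i)\|_{\mathcal{H}_{\mathcal{A}}}\cdot \|(T_{BB}+\lambda I)^{-1/2}\phi(B_i)\|_{\mathcal{H}_{\mathcal{B}}}$.
The second factor is at most $\kappa_b\lambda^{-1/2}$ by the operator norm bound on $(T_{BB}+\lambda I)^{-1/2}$. For the first factor, triangle inequality and Assumption~\ref{assumption:RKHS} give
$\|\phi(A_i)-\mu_a^\lambda(B_i)\|_{\mathcal{H}_{\mathcal{A}}}\leq 2\kappa_a+\|\mu_a^\lambda(B_i)-\mu_a(B_i)\|_{\mathcal{H}_{\mathcal{A}}}\leq 2\kappa_a+\mathcal{B}(\lambda)^{1/2}$
by the first part of Lemma~\ref{lemma:bounds}. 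Multiplying yields $M\lesssim (\kappa_a+\mathcal{B}(\lambda)^{1/2})\kappa_b\lambda^{-1/2}$, which, upon substitution into the $M/n$ term in Bennett, produces the first two summands in the claim.

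Step two is the variance bound $\sigma^2=\mathbb{E}\|\eta\|_{\mathcal{L}_2}^2$. Conditioning on $B$ and applying the tower property gives
$\sigma^2=\mathbb{E}\bigl[\mathbb{E}\{\|\phi(A)-\mu_a^\lambda(B)\|_{\mathcal{H}_{\mathcal{A}}}^2\mid B\}\cdot \|(T_{BB}+\lambda I)^{-1/2}\phi(B)\|_{\mathcal{H}_{\mathcal{B}}}^2\bigr]$.
Using $(p+q)^2\leq 2p^2+2q^2$ and the bound $\|\mu_a^\lambda(B)\|\leq \|\mu_a(B)\|+\|\mu_a^\lambda(B)-\mu_a(B)\|\leq \kappa_a+\mathcal{B}(\lambda)^{1/2}$, the conditional expectation is $\lesssim \kappa_a^2+\mathcal{B}(\lambda)$. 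Then identify the remaining expectation as the trace
$\mathbb{E}\,\mathrm{Tr}\{\phi(B)\otimes \phi(B)\cdot (T_{BB}+\lambda I)^{-1}\}=\mathrm{Tr}\{T_{BB}(T_{BB}+\lambda I)^{-1}\}=\mathcal{N}(\lambda)$,
which by the second part of Lemma~\ref{lemma:bounds} is controlled by the effective dimension. Hence $\sigma^2\lesssim (\kappa_a^2+\mathcal{B}(\lambda))\mathcal{N}(\lambda)$, and the $\sqrt{\sigma^2/n}$ term in Bennett, split via $\sqrt{x^2+y^2}\leq x+y$, produces the remaining two summands.

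Step three invokes Lemma~\ref{lemma:prob} with confidence parameter $\delta/2$ (yielding $\ln(4/\delta)$), and combines the two contributions, absorbing universal constants into the prefactor of $4$ to match the stated form. The main obstacle is really the variance calculation: the summands in $\eta_i$ mix $A$ and $B$, so one must be careful to condition on $B$ before invoking the kernel boundedness for $A$ and then recognize the resulting trace as the effective dimension $\mathcal{N}(\lambda)$ rather than overbounding it by $\lambda^{-1}$, which would cost the $1/b$ improvement central to the sharper rate.
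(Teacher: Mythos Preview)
Your proposal is correct and follows essentially the same approach as the paper: apply the Bennett inequality (Lemma~\ref{lemma:prob}) to $\eta_i=\{\phi(A_i)-\mu_a^\lambda(B_i)\}\otimes (T_{BB}+\lambda I)^{-1/2}\phi(B_i)$, bound the almost sure norm via $\kappa_b\lambda^{-1/2}\{2\kappa_a+\mathcal{B}(\lambda)^{1/2}\}$, and bound the second moment by pulling out the sup of the $\mathcal{H}_{\mathcal{A}}$-factor and recognizing the remaining expectation as the trace $\mathcal{N}(\lambda)$. The only cosmetic difference is that the paper takes $\sup_{a,b}\|\phi(a)-\mu_a^\lambda(b)\|^2$ directly rather than conditioning on $B$, but both routes yield the same bound $\{2\kappa_a+\mathcal{B}(\lambda)^{1/2}\}^2\mathcal{N}(\lambda)$.
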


\begin{proof}
We verify the conditions of Lemma~\ref{lemma:prob}. Let
$$
\xi_i= [\{\phi(A_i)-\mu^{\lambda}_a(B_i)\} \otimes\phi(B_i)] (T_{BB}+\lambda I)^{-1/2}.
$$
We proceed in steps.
\begin{enumerate}
    \item First moment.
    
    Observe that 
\begin{align*}
    \|\xi_i\|_{\mathcal{L}_2}
    &=\|[\{\phi(A_i)-\mu^{\lambda}_a(B_i)\} \otimes\phi(B_i)] (T_{BB}+\lambda I)^{-1/2}\|_{\mathcal{L}_2} \\
    &=\| (T_{BB}+\lambda I)^{-1/2} [\phi(B_i)  \otimes \{\phi(A_i)-\mu^{\lambda}_a(B_i)\}] \|_{\mathcal{L}_2} \\
    &=\| (T_{BB}+\lambda I)^{-1/2} \phi(B_i) \|_{\mathcal{H}_{\mathcal{B}}} \cdot \|\phi(A_i)-\mu^{\lambda}_a(B_i)\|_{\mathcal{H}_{\mathcal{A}}}.
\end{align*}
Moreover
$$
\| (T_{BB}+\lambda I)^{-1/2} \phi(B_i) \|_{\mathcal{H}_{\mathcal{B}}} \leq \|(T_{BB}+\lambda I)^{-1/2}\|_{op} \| \phi(B_i) \|_{\mathcal{H}_{\mathcal{B}}}\leq \frac{\kappa_b}{\lambda^{1/2}}
$$
and
$$
\|\phi(A_i)-\mu^{\lambda}_a(B_i)\|_{\mathcal{H}_{\mathcal{A}}} \leq \|\phi(A_i)-\mu_a(B_i)\|_{\mathcal{H}_{\mathcal{A}}}+\|\mu_a(B_i)-\mu^{\lambda}_a(B_i)\|_{\mathcal{H}_{\mathcal{A}}} \leq 2\kappa_a+\mathcal{B}(\lambda)^{1/2}.
$$
In summary,
$$
\|\xi_i\|_{\mathcal{L}_2} \leq \frac{\kappa_b}{\lambda^{1/2}} \left\{2\kappa_a+\mathcal{B}(\lambda)^{1/2}\right\}.
$$
    \item Second moment.
    
    Next, write
    \begin{align*}
   &E(\|\xi_i\|^2_{\mathcal{L}_2}) \\
   &= \int \text{\normalfont tr}( [\{\phi(a)-\mu^{\lambda}_a(b)\} \otimes \phi(b)](T_{BB}+\lambda I)^{-1} [\phi(b)  \otimes \{\phi(a)-\mu^{\lambda}_a(b)\}]) \mathrm{d}\text{\normalfont pr}(a,b) \\
   &=\int \text{\normalfont tr}[ \{\phi(a)-\mu^{\lambda}_a(b)\} \langle \phi(b), (T_{BB}+\lambda I)^{-1} \phi(b)\rangle_{\mathcal{H}_{\mathcal{B}}} \langle \{\phi(a)-\mu^{\lambda}_a(b)\},\cdot \rangle_{\mathcal{H}_{\mathcal{A}}}] \mathrm{d}\text{\normalfont pr}(a,b)  \\
   &=\int \text{\normalfont tr}[ \langle \phi(b), (T_{BB}+\lambda I)^{-1} \phi(b)\rangle_{\mathcal{H}_{\mathcal{B}}} \langle \{\phi(a)-\mu^{\lambda}_a(b)\} ,\{\phi(a)-\mu^{\lambda}_a(b)\} \rangle_{\mathcal{H}_{\mathcal{A}}}] \mathrm{d}\text{\normalfont pr}(a,b)  \\
   &\leq \sup_{a,b} \|\phi(a)-\mu^{\lambda}_a(b)\|^2_{\mathcal{H}_{\mathcal{A}}} \cdot \int \text{\normalfont tr}\{ \langle \phi(b), (T_{BB}+\lambda I)^{-1} \phi(b)\rangle_{\mathcal{H}_{\mathcal{B}}}\} \mathrm{d}\text{\normalfont pr}(b).
    \end{align*}
    Focusing on the former factor,
    $$
     \|\phi(a)-\mu^{\lambda}_a(b)\|_{\mathcal{H}_{\mathcal{A}}}  \leq  \|\phi(a)-\mu_a(b)\|_{\mathcal{H}_{\mathcal{A}}}+\|\mu_a(b)-\mu^{\lambda}_a(b)\|_{\mathcal{H}_{\mathcal{A}}} \leq 2\kappa_a+\mathcal{B}(\lambda)^{1/2}.
    $$
    Therefore
    $$
    \sup_{a,b} \|\phi(a)-\mu^{\lambda}_a(b)\|^2_{\mathcal{H}_{\mathcal{A}}} \leq \left\{2\kappa_a+\mathcal{B}(\lambda)^{1/2}\right\}^2.
    $$
    Focusing on the latter factor, 
    \begin{align*}
        \int \text{\normalfont tr}\{ \langle \phi(b), (T_{BB}+\lambda I)^{-1} \phi(b)\rangle_{\mathcal{H}_{\mathcal{B}}}\} \mathrm{d}\text{\normalfont pr}(b)
        &=\int \text{\normalfont tr}[(T_{BB}+\lambda I)^{-1} \{\phi(b)\otimes\phi(b)\}] \mathrm{d}\text{\normalfont pr}(b) \\
        &= \text{\normalfont tr}\{(T_{BB}+\lambda I)^{-1} T_{BB}\} \\
        &=\mathcal{N}(\lambda).
    \end{align*}
    In summary,
    $$
    E(\|\xi_i\|^2_{\mathcal{L}_2}) \leq \mathcal{N}(\lambda)\left\{2\kappa_a+\mathcal{B}(\lambda)^{1/2}\right\}^2.
    $$
    \item Concentration.
    
    Therefore with probability $1-\delta/2$,
    \begin{align*}
        &\|E_n(\xi)-E(\xi)\|_{\mathcal{L}_2}  \\
        &\leq \frac{2 \log(4/\delta)}{n} \frac{\kappa_b}{\lambda^{1/2}} \left\{2\kappa_a+\mathcal{B}(\lambda)^{1/2}\right\} + \left[\frac{2 \log(4/\delta)}{n} \mathcal{N}(\lambda)\left\{2\kappa_a+\mathcal{B}(\lambda)^{1/2}\right\}^2\right]^{1/2} \\
        &\leq 4\log(4/\delta) \left\{\frac{\kappa_a\kappa_b}{n\lambda^{1/2}}+\frac{\kappa_b\mathcal{B}(\lambda)^{1/2}}{n\lambda^{1/2}}+\frac{\kappa_a \mathcal{N}(\lambda)^{1/2}}{n^{1/2}}+\frac{\mathcal{B}(\lambda)^{1/2}\mathcal{N}(\lambda)^{1/2}}{n^{1/2}}\right\}.
     \end{align*}

\end{enumerate}

\end{proof}

\begin{remark}[Sufficiently large $n$]\label{remark:big_n}
In the finite sample, we assume a certain inequality holds when bounding the second factor: 
\begin{equation}\label{eq:n_big}
  n\geq 8\kappa_b^2 \log(4/\delta) \cdot \lambda \cdot 
\log
\left\{ 2e\cdot \mathcal{N}(\lambda)\frac{\|T\|_{op}+\lambda}{\|T\|_{op}}
\right\}.  
\end{equation}
Ultimately, we will choose $\lambda=n^{-1/(c+1/b)}$ in Proposition~\ref{theorem:conditional}. This choice of $\lambda$ together with the bound on generalized effective dimension $\mathcal{N}(\lambda)$ in Lemma~\ref{lemma:bounds} imply that there exists an $n_0$ such that for all $n\geq n_0$,~\eqref{eq:n_big} holds, as argued by \cite[Proof of Theorem 1]{fischer2017sobolev}. We use the phrase ``$n$ sufficiently large'' when we appeal to this logic, and we summarize the final bound using $O(\cdot)$ notation.
\end{remark}

\begin{lemma}[Bounding the second factor]\label{lemma:2}
Suppose Assumptions~\ref{assumption:RKHS} and~\ref{assumption:original} hold. Further assume~\eqref{eq:n_big} holds. Then probability $1-\delta/2$, the second factor in Lemma~\ref{lemma:decomp} is bounded as 
$$
\|(T_{BB}+\lambda I)^{1/2}(\hat{T}_{BB}+\lambda I)^{-1}(T_{BB}+\lambda I)^{1/2}\|_{op} \leq 3.
$$
\end{lemma}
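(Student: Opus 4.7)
The strategy is the standard operator-perturbation argument: reduce the bound to controlling the self-adjoint deviation $A := (T_{BB}+\lambda I)^{-1/2}(\hat{T}_{BB}-T_{BB})(T_{BB}+\lambda I)^{-1/2}$ in operator norm, and then invert via a Neumann series. Concretely, I would first verify the algebraic identity
\begin{equation*}
(T_{BB}+\lambda I)^{1/2}(\hat{T}_{BB}+\lambda I)^{-1}(T_{BB}+\lambda I)^{1/2} = (I+A)^{-1},
\end{equation*}
which follows by writing $\hat{T}_{BB}+\lambda I = (T_{BB}+\lambda I)^{1/2}(I+A)(T_{BB}+\lambda I)^{1/2}$ and inverting. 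Once this identity is in hand, the claim reduces to showing $\|A\|_{op}\leq 2/3$ with probability at least $1-\delta/2$, since then $\|(I+A)^{-1}\|_{op}\leq (1-\|A\|_{op})^{-1}\leq 3$.

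\textbf{Controlling $\|A\|_{op}$.} I would write $A = n^{-1}\sum_{i=1}^n Z_i - \mathbb{E}(Z)$ where $Z_i := (T_{BB}+\lambda I)^{-1/2}\{\phi(B_i)\otimes\phi(B_i)\}(T_{BB}+\lambda I)^{-1/2}$, a self-adjoint trace-class operator. Two deterministic bounds drive the concentration step: (i) $\|Z_i\|_{op}\leq \kappa_b^2/\lambda$ by Assumption~\ref{assumption:RKHS}; and (ii) the variance proxy $\|\mathbb{E}(Z_i^2)\|_{op}\leq (\kappa_b^2/\lambda)\cdot \mathcal{N}(\lambda)$, since the factor $Z_i$ on the left may be bounded in operator norm and the remaining factor has trace $\mathcal{N}(\lambda)$ by Lemma~\ref{lemma:bounds}. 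Apply a noncommutative Bernstein inequality for self-adjoint Hilbert space operators (e.g., the intrinsic-dimension version of Tropp's inequality), with intrinsic dimension controlled by $\mathcal{N}(\lambda)(\|T\|_{op}+\lambda)/\|T\|_{op}$; this produces a high-probability bound of the shape
\begin{equation*}
\|A\|_{op} \leq \frac{2\kappa_b^2\ln(4/\delta)}{n\lambda}\,L + \sqrt{\frac{2\kappa_b^2\mathcal{N}(\lambda)\ln(4/\delta)}{n\lambda}\,L},
\end{equation*}
where $L := \ln\{2e\,\mathcal{N}(\lambda)(\|T\|_{op}+\lambda)/\|T\|_{op}\}$.

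\textbf{Finishing.} The sample-size hypothesis~\eqref{eq:n_big} is precisely calibrated to force both summands in the above display to be at most $1/3$, so $\|A\|_{op}\leq 2/3$ on the good event. Plugging into the Neumann-series bound gives the claim. The main obstacle is the concentration step: one must invoke a dimension-free Bernstein inequality for self-adjoint operators and bookkeep the constants so that the threshold matches~\eqref{eq:n_big} exactly; the rest of the argument is purely algebraic.
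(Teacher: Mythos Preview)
Your proposal is correct and follows exactly the route the paper takes: the paper's proof simply cites \cite[eqs.~44b, 47 and Lemma~17]{fischer2017sobolev}, and what you have written is precisely the content of that lemma---the identity $(T_{BB}+\lambda I)^{1/2}(\hat T_{BB}+\lambda I)^{-1}(T_{BB}+\lambda I)^{1/2}=(I+A)^{-1}$, a dimension-free Bernstein inequality for the self-adjoint deviation $A$, and the Neumann-series bound $\|(I+A)^{-1}\|_{op}\leq 3$ once $\|A\|_{op}\leq 2/3$. One small sharpening: since $Z_i=v_i\otimes v_i$ is rank one with $v_i=(T_{BB}+\lambda I)^{-1/2}\phi(B_i)$, you get $\mathbb{E}(Z_i^2)=\mathbb{E}(\|v_i\|^2 Z_i)\preceq(\kappa_b^2/\lambda)\,\mathbb{E}(Z_i)$ and hence $\|\mathbb{E}(Z_i^2)\|_{op}\leq \kappa_b^2/\lambda$, which is tighter than the $(\kappa_b^2/\lambda)\mathcal{N}(\lambda)$ you stated (the effective dimension enters only through the intrinsic-dimension log factor $L$); this does not affect the conclusion but cleans up the match with~\eqref{eq:n_big}.
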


\begin{proof}
The result follows from \cite[eq. 44b, 47]{fischer2017sobolev}. In particular, our assumptions suffice for the properties used in \cite[Lemma 17]{fischer2017sobolev} to hold. Separability of $\mathcal{B}$ together with boundedness of the kernel $k_{\mathcal{B}}$ imply that $\mathcal{H}_{\mathcal{B}}$ is separable \cite[Lemma 4.33]{steinwart2008support}. Next, we verify the assumptions called EMB, EVD, and SRC. Boundedness of the kernel implies EMB with $a=1$. EVD is the assumption we call effective dimension, parametrized by $b\geq 1$. SRC is the assumption we call the source condition, parametrized by $c\in(1,2]$ in our case. 
\end{proof}

\begin{lemma}[Bounding the third factor]\label{lemma:3}
With probability one, the third factor in Lemma~\ref{lemma:decomp} is bounded as
$$
    \|(T_{BB}+\lambda I)^{-1/2}\|_{op} \leq \lambda^{-1/2}.
$$
\end{lemma}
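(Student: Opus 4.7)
The plan is to exploit the fact that $T_{BB} = \mathbb{E}\{\phi(B)\otimes\phi(B)\}$ is a self-adjoint positive semi-definite operator on $\mathcal{H}_{\mathcal{B}}$, so its spectrum lies in $[0,\infty)$. Adding $\lambda I$ shifts the spectrum by $\lambda$, so $T_{BB}+\lambda I$ has spectrum in $[\lambda,\infty)$, and hence the inverse square root has spectrum in $(0,\lambda^{-1/2}]$.

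More concretely, I would proceed in two short steps. First, verify that $T_{BB}$ is bounded, self-adjoint, and positive: boundedness and self-adjointness are immediate from Assumption~\ref{assumption:RKHS} (the kernel $k_{\mathcal{B}}$ is bounded, so $\|T_{BB}\|_{op}\leq\kappa_b^2<\infty$) together with the symmetry of the tensor product $\phi(B)\otimes\phi(B)$; positivity follows because for any $f\in\mathcal{H}_{\mathcal{B}}$, $\langle f,T_{BB}f\rangle_{\mathcal{H}_{\mathcal{B}}}=\mathbb{E}\{\langle f,\phi(B)\rangle_{\mathcal{H}_{\mathcal{B}}}^2\}\geq 0$. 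Second, apply the functional calculus (or equivalently the spectral theorem for bounded self-adjoint operators): since $\sigma(T_{BB})\subset[0,\infty)$, we have $\sigma(T_{BB}+\lambda I)\subset[\lambda,\infty)$, and the function $t\mapsto t^{-1/2}$ is well-defined and bounded by $\lambda^{-1/2}$ on this spectrum. Therefore
\[
\|(T_{BB}+\lambda I)^{-1/2}\|_{op}=\sup_{t\in\sigma(T_{BB}+\lambda I)} t^{-1/2}\leq\lambda^{-1/2}.
\]

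The main obstacle here is essentially none; this is a standard operator-theoretic fact and the proof amounts to recalling that shifted covariance operators have spectrum bounded below by the shift. The statement holds with probability one because $T_{BB}$ is a deterministic population object (not depending on the sample), so no concentration argument is required at this stage.
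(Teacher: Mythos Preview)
Your proposal is correct and takes essentially the same approach as the paper. The paper's own proof is a single sentence—``The result follows from the definition of operator norm''—and your argument simply unpacks that one-liner: $T_{BB}$ is positive self-adjoint, so $T_{BB}+\lambda I$ has spectrum in $[\lambda,\infty)$, and the operator norm of $(T_{BB}+\lambda I)^{-1/2}$ is the supremum of $t^{-1/2}$ over that spectrum.
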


\begin{proof}
The result follows from the definition of operator norm.
\end{proof}

\begin{proposition}[Conditional expectation operator variance]\label{prop:variance}
Suppose Assumptions~\ref{assumption:RKHS},~\ref{assumption:original}, and~\ref{assumption:smooth_op} hold. Further assume~\eqref{eq:n_big} holds and $\lambda\leq 1$. Then with probability $1-\delta$,
$$
\|\hat{E}-E_{\lambda}\|_{\mathcal{L}_2} \leq C\log(4/\delta)\left\{ \frac{1}{n\lambda}+\frac{1}{n^{1/2}\lambda^{1/(2b)+1/2}}\right\}.
$$
\end{proposition}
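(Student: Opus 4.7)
The plan is to combine the three lemmas that have already been set up, and then absorb the lower-order pieces using the bounds from Lemma~\ref{lemma:bounds}.

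First, I would invoke Lemma~\ref{lemma:decomp} to factor $\|\hat{E}-E_\lambda\|_{\mathcal{L}_2}$ as the product of three quantities, which are respectively bounded by Lemmas~\ref{lemma:1},~\ref{lemma:2}, and~\ref{lemma:3}. Taking a union bound over the two high-probability events in Lemmas~\ref{lemma:1} and~\ref{lemma:2} (the bound in Lemma~\ref{lemma:3} holds deterministically), I obtain, with probability at least $1-\delta$,
\begin{align*}
\|\hat{E}-E_\lambda\|_{\mathcal{L}_2}
&\leq 12\ln(4/\delta)\,\lambda^{-1/2}\left\{\frac{\kappa_a\kappa_b}{n\lambda^{1/2}}+\frac{\kappa_b\mathcal{B}(\lambda)^{1/2}}{n\lambda^{1/2}}+\frac{\kappa_a\mathcal{N}(\lambda)^{1/2}}{n^{1/2}}+\frac{\mathcal{B}(\lambda)^{1/2}\mathcal{N}(\lambda)^{1/2}}{n^{1/2}}\right\}.
\end{align*}

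Next, I would substitute the explicit bounds from Lemma~\ref{lemma:bounds}, namely $\mathcal{B}(\lambda)^{1/2}\leq \kappa_b\sqrt{\zeta}\,\lambda^{(c-1)/2}$ and $\mathcal{N}(\lambda)^{1/2}\leq \sqrt{C\,\tfrac{\pi/b}{\sin(\pi/b)}}\,\lambda^{-1/(2b)}$. Using $\lambda\leq 1$ and $c>1$, the factor $\lambda^{(c-1)/2}$ is uniformly bounded above by $1$, so the second and fourth inner terms are dominated by the first and third respectively. Specifically, the first inner term contributes $\tfrac{3\kappa_a\kappa_b}{n\lambda}$ after multiplication by $\lambda^{-1/2}$, and the third inner term contributes order $n^{-1/2}\lambda^{-1/(2b)-1/2}$. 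Collecting these as the leading terms and absorbing all constants $(\kappa_a,\kappa_b,\zeta,C)$ into a single constant yields the stated bound.

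The only genuinely subtle step is making sure the hypothesis~\eqref{eq:n_big} (and $\lambda\leq 1$) is in force so that Lemma~\ref{lemma:2} applies and so that the $\lambda^{(c-1)/2}$ piece is indeed a small multiplicative factor; both are assumed directly in the proposition, so this reduces to bookkeeping. The rest is purely algebraic: verifying that each of the four inner summands in the bound from Lemma~\ref{lemma:1}, once multiplied by the constants $3$ and $\lambda^{-1/2}$ from Lemmas~\ref{lemma:2} and~\ref{lemma:3}, is either of the form $(n\lambda)^{-1}$ or $n^{-1/2}\lambda^{-1/(2b)-1/2}$ up to constants. No additional probabilistic argument is needed beyond the union bound already performed.
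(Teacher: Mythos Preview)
Your proposal is correct and follows essentially the same approach as the paper: combine Lemmas~\ref{lemma:decomp},~\ref{lemma:1},~\ref{lemma:2},~\ref{lemma:3} via a union bound to obtain the $12\ln(4/\delta)\lambda^{-1/2}\{\cdots\}$ expression, then use the bounds on $\mathcal{B}(\lambda)$ and $\mathcal{N}(\lambda)$ from Lemma~\ref{lemma:bounds} together with $\lambda\leq 1$ to absorb the second and fourth summands into the first and third. The only minor slip is in your arithmetic aside (the first term carries the full factor $12\ln(4/\delta)$, not just $3$), but this does not affect the argument.
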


\begin{proof}
We combine the previous lemmas. By Lemmas~\ref{lemma:decomp},~\ref{lemma:1},~\ref{lemma:2}, and~\ref{lemma:3}, if~\eqref{eq:n_big} holds, then with probability $1-\delta$
\begin{align*}
    \|\hat{E}-E_{\lambda}\|_{\mathcal{L}_2} 
    &\leq \frac{12\log(4/\delta) }{\lambda^{1/2}} \left\{\frac{\kappa_a\kappa_b}{n\lambda^{1/2}}+\frac{\kappa_b\mathcal{B}(\lambda)^{1/2}}{n\lambda^{1/2}}+\frac{\kappa_a \mathcal{N}(\lambda)^{1/2}}{n^{1/2}}+\frac{\mathcal{B}(\lambda)^{1/2}\mathcal{N}(\lambda)^{1/2}}{n^{1/2}}\right\}.
\end{align*}
Next, recall the bounds in Lemma~\ref{lemma:bounds}. When $\lambda\leq 1$,
     $$
     \mathcal{B}(\lambda)^{1/2} \leq \kappa_b \zeta^{1/2} \lambda^{\frac{c-1}{2}} \leq \kappa_b \zeta^{1/2}.
     $$
     For brevity, write
     $$
     \mathcal{N}(\lambda)^{1/2}\leq C'\lambda^{-\frac{1}{2b}}.
     $$
     Therefore when $\lambda\leq 1$ the bound simplifies as
     \begin{align*}
         \|\hat{E}-E_{\lambda}\|_{\mathcal{L}_2} 
    \leq C\log(4/\delta) \left\{\frac{1}{n\lambda}+\frac{1}{n^{1/2}\lambda^{1/(2b)+1/2}}\right\}.
     \end{align*}
\end{proof}

\subsection{Collecting results}

\begin{proof}[Proof of Proposition~\ref{theorem:conditional}]
We combine and simplify Propositions~\ref{prop:bias} and~\ref{prop:variance}. Take $\lambda=n^{-1/(c+1/b)}$. For sufficiently large $n$,~\eqref{eq:n_big} holds and $\lambda\leq 1$ as explained in Remark~\ref{remark:big_n}. By triangle inequality, with probability $1-\delta$,
\begin{align*}
    \|\hat{E}-E_{\lambda}\|_{\mathcal{L}_2} &\leq  \|\hat{E}-E_{\lambda}\|_{\mathcal{L}_2}+ \|E_{\lambda}-E_0\|_{\mathcal{L}_2} \\
    &\leq  C\log(4/\delta) \left\{\frac{1}{n\lambda}+\frac{1}{n^{1/2}\lambda^{1/(2b)+1/2}}\right\}+C\lambda^{\frac{c-1}{2}}.
\end{align*}
Each term on the RHS simplifies as follows:
\begin{align*}
    &n^{-1}\lambda^{-1}
    =n^{-1}n^{1/(c+1/b)}
    =n^{1/(c+1/b)-1}
    =n^{\frac{1-c-1/b}{c+1/b}}
    =n^{-\frac{1}{2}\frac{2(c+1/b-1)}{c+1/b}} 
    \leq n^{-\frac{1}{2} \frac{c-1}{c+1/b}}; \\
    &n^{-1/2}\lambda^{-1\{1/(2b)+1/2\}}
    =n^{-1/2} n^{\frac{\{1/(2b)+1/2\}}{(c+1/b)}} 
    =n^{-\frac{1}{2}(1-\frac{1/b+1}{c+1/b})}
    =n^{-\frac{1}{2}(\frac{c+1/b-1/b-1}{c+1/b})}
    = n^{-\frac{1}{2}\frac{c-1}{c+1/b}};\\
    &\lambda^{\frac{c-1}{2}}
    = n^{-\frac{1}{c+1/b}\frac{c-1}{2}} =n^{-\frac{1}{2}\frac{c-1}{c+1/b}}.
\end{align*}
\end{proof}
\section{Simulation details}\label{sec:simulations}

In this appendix, we provide simulation details for (i) the dose response design, and (ii) the heterogeneous treatment effect design.

\subsection{Dose response curve}

A single observation consists of the triple $(Y,D,X)$ for outcome, treatment, and covariates where $Y,D\in\mathbb{R}$ and $X\in\mathbb{R}^{100}$. A single observation is generated is as follows. Draw unobserved noise as $\nu,\epsilon \overset{i.i.d.}{\sim}\mathcal{N}(0,1)$. Define the vector $\beta\in\mathbb{R}^{100}$ by $\beta_j=j^{-2}$. Define the matrix $\Sigma\in\mathbb{R}^{100\times 100}$ such that $\Sigma_{ii}=1$ and $\Sigma_{ij}=1(|i-j|=1)/2$ for $i\neq j$. Then draw $X\sim\mathcal{N}(0,\Sigma)$ and set
\begin{align*}
    D&=\Phi(3X^{\top}\beta)+0.75\nu,\quad 
    Y=1.2D+1.2X^{\top}\beta+D^2+DX_1+\epsilon.
\end{align*}

We implement our estimator $\hat{\theta}^{ATE}(d)$ (\texttt{RKHS}, white) described in Section~\ref{sec:algorithm}, with the tuning procedure described in Supplement~\ref{sec:tuning}. Specifically, we use ridge penalties determined by leave-one-out cross validation, and product exponentiated quadratic kernel with lengthscales set by the median heuristic. We implement \cite{kennedy2017nonparametric} (\texttt{DR1}, checkered white) using the default settings of the command \texttt{ctseff} in the \texttt{R} package \texttt{npcausal}. We implement \cite{colangelo2020double} (\texttt{DR2}, lined white) using default settings in \texttt{Python} code shared by the authors. Specifically, we use random forest for prediction, with the suggested hyperparameter values. For the Nadaraya--Watson smoothing, we select bandwidth that minimizes out-of-sample mean square error. We implement \cite{semenova2021debiased} (\texttt{DR-series}, gray) by modifying \texttt{ctseff}, as instructed by the authors. Importantly, we give \texttt{DR-series} the advantage of correct specification of the true dose response curve as a quadratic function.

\subsection{Heterogeneous treatment effect}

A single observations consists of the tuple $(Y,D,V,X)$, where outcome, treatment, and covariate of interest $Y,D,V\in\mathbb{R}$ and other covariates $X\in\mathbb{R}^3$. A single observation is generated as follows. Draw unobserved noise as $\epsilon_j \overset{i.i.d.}{\sim}\mathcal{U}(-1/2,1/2)$ $(j=1,...,4)$ and $\nu\sim \mathcal{N}(0,1/16)$. Then set
$$
V=\epsilon_1,\quad X=\begin{Bmatrix} 1+2V+\epsilon_2 \\ 1+2V+\epsilon_3 \\ (V-1)^2+\epsilon_4 \\ \end{Bmatrix}.
$$
Draw $D\sim Bernoulli[\Lambda \{(V+X_1+X_2+X_3)/2\}]$ where $\Lambda$ is the logistic link function. Finally set
$$
Y=\begin{cases} 0 &\text{ if } D=0;\\ V X_1 X_2 X_3+\nu &\text{ if } D=1. \end{cases}
$$
\cite{abrevaya2015estimating} also present a simpler version of this design.

We implement our estimator $\hat{\theta}^{CATE}(d,v)$ (\texttt{RKHS}, white) described in Section~\ref{sec:algorithm}, with the tuning procedure described in Supplement~\ref{sec:tuning}. Specifically, we use ridge penalties determined by leave-one-out cross validation. For multivariate functions, we use products of scalar kernels. For the binary treatment $D$, we use the binary kernel. For continuous variables, we use (product) exponentiated quadratic kernel with lengthscales set by the median heuristic. We implement \cite{abrevaya2015estimating} (\texttt{IPW}, lined gray) using default settings in the \texttt{MATLAB} code shared by the authors. We implement \cite{semenova2021debiased} (\texttt{DR-series}, gray) using the default settings of the command \texttt{best\_linear\_projection} in the \texttt{R} package \texttt{grf}. Importantly, we give \texttt{DR-series} the advantage of correct specification of the true heterogeneous treatment effect as the appropriate polynomial.
\section{Application details}\label{sec:application}

We implement our nonparametric estimators $\hat{\theta}^{ATE}(d)$, $\hat{\theta}^{\nabla:ATE}(d)$, and $\hat{\theta}^{CATE}(d,v)$ described in Section~\ref{sec:algorithm} (\texttt{RKHS}, solid).  We also implement the nonparametric estimator of \cite{colangelo2020double} (\texttt{DR2}, dashes) using default settings in \texttt{Python} code shared by the authors. Specifically, we use random forest for prediction, with the suggested hyperparameter values. Finally, we implement the semiparametric estimator of \cite{singh2021debiased} (\texttt{DR3}, vertical bars) with 95\% confidence intervals. Specifically, we reduce the continuous treatment into a discrete treatment that takes nine values corresponding to the roughly equiprobable bins $[40,250]$, $(250,500]$, $(500,750]$ $(750,1000]$, $(1000,1250]$, $(1250,1500]$, $(1500,1750]$, and $(1750,2000]$ class hours. Across estimators, we use the tuning procedure described in Supplement~\ref{sec:tuning}. Specifically, we use ridge penalties determined by leave-one-out cross validation, and product exponentiated quadratic kernel with lengthscales set by the median heuristic. 

\begin{figure}[H]
\begin{centering}
     \begin{subfigure}[b]{0.45\textwidth}
         \centering
         \includegraphics[width=\textwidth]{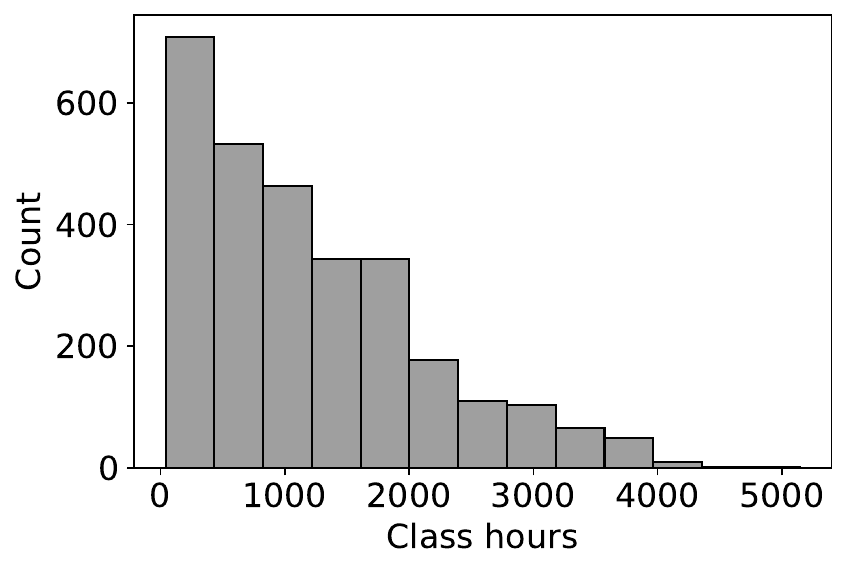}
         \vspace{-15pt}
         \caption{$D\geq 40$ and $Y>0$.}
     \end{subfigure}
     \hfill
     \begin{subfigure}[b]{0.45\textwidth}
         \centering
         \includegraphics[width=\textwidth]{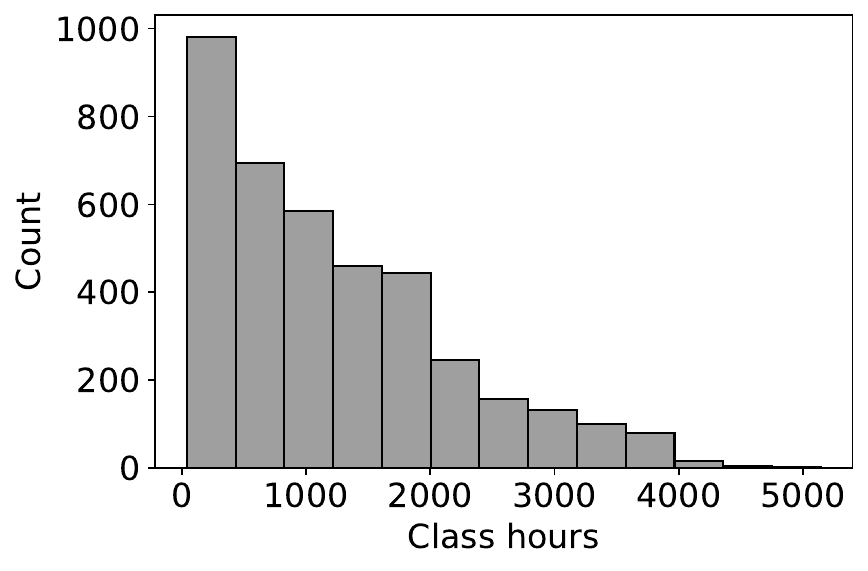}
         \vspace{-15pt}
         \caption{$D\geq 40$.}
     \end{subfigure}
\par
\caption{\label{fig:hist}
Class hours for different samples.}
\end{centering}
\end{figure}

%\vspace{-5pt}
\begin{figure}[ht]
\begin{centering}
     \begin{subfigure}[b]{0.45\textwidth}
         \centering
         \includegraphics[width=\textwidth]{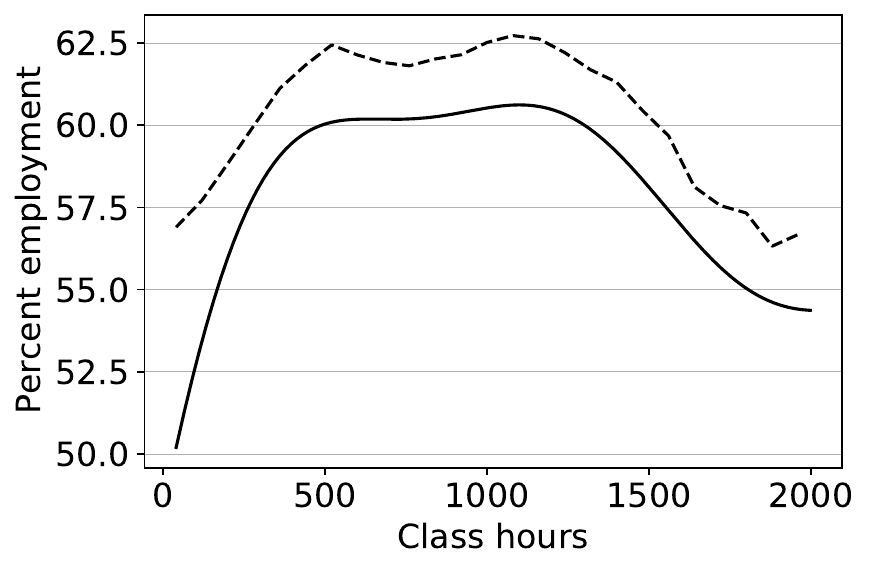}
         %\vspace{-20pt}
         \caption{Dose response curve.}
     \end{subfigure}
     \hfill
     \begin{subfigure}[b]{0.45\textwidth}
         \centering
         \includegraphics[width=\textwidth]{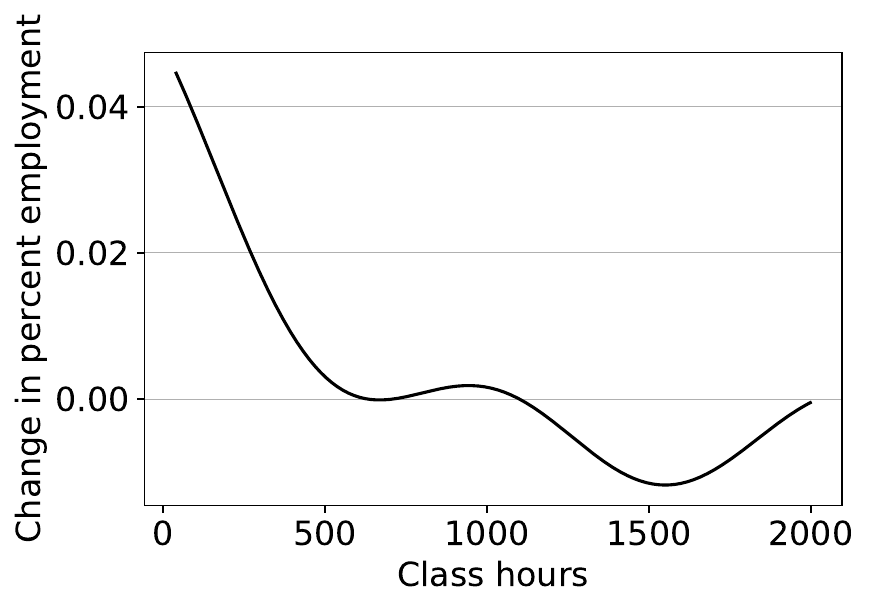}
         %\vspace{-20pt}
         \caption{Incremental response curve.}
     \end{subfigure}
     \vskip\baselineskip
     \begin{subfigure}[b]{0.45\textwidth}
         \centering
         \includegraphics[width=\textwidth]{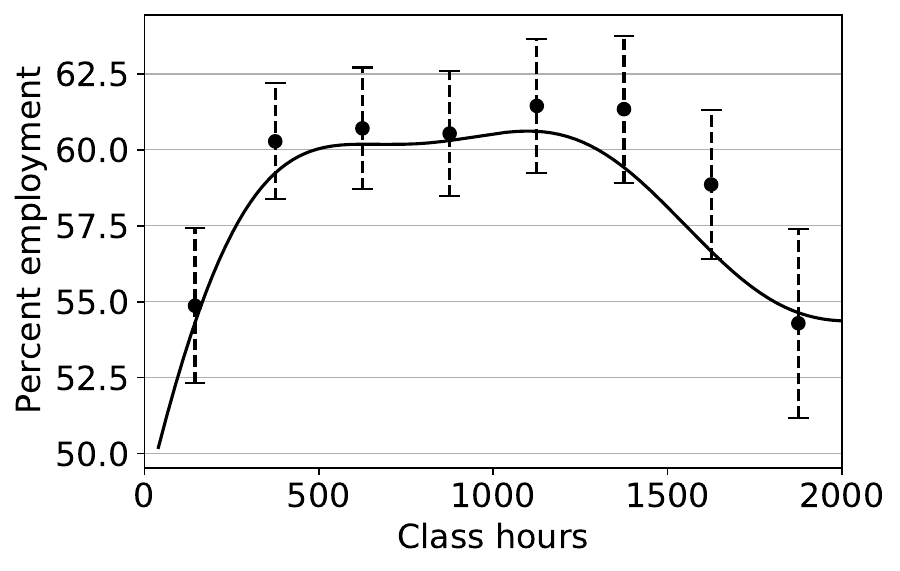}
         %\vspace{-20pt}
         \caption{Discrete treatment effects.}
     \end{subfigure}
     \hfill
     \begin{subfigure}[b]{0.45\textwidth}
         \centering
         \includegraphics[width=\textwidth]{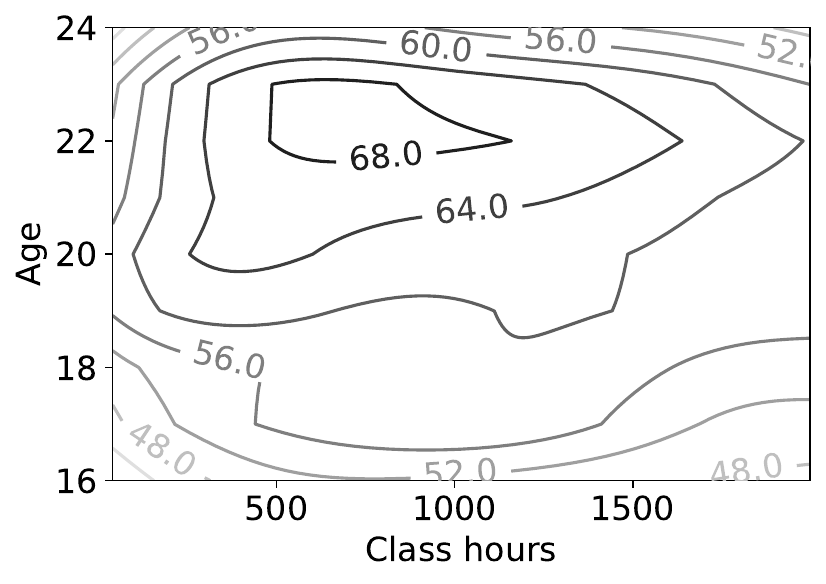}
         %\vspace{-20pt}
         \caption{Heterogeneous response curve.}
     \end{subfigure}
\par
%\vspace{-10pt}
\caption{\label{fig:JC_d}
Effect of job training on employment: $D\geq 40$ and $Y>0$. We implement our estimators for dose, heterogeneous, and incremental response curves (\texttt{RKHS}, solid). For comparison, we also implement the dose response curve estimator of \cite{colangelo2020double} (\texttt{DR2}, dashes) as well as the discrete treatment effects of \cite{singh2021debiased} (\texttt{DR3}, vertical bars).}
\end{centering}
\end{figure}

We use the dataset published by \cite{huber2020direct}. In Supplement~\ref{sec:experiments}, we focus on the $n=3,906$ observations for which $D\geq 40$, i.e. individuals who completed at least one week of training. In this section, we verify that our results are robust to the choice of sample. Specifically, we consider the sample with $D\geq 40$ and $Y>0$, i.e. the $n=2,989$ individuals who completed at least one week of training and who found employment.

For each sample, we visualize class hours $D$ with a histogram in Figure~\ref{fig:hist}. The class hour distribution in the sample with $D\geq 40$ and $Y>0$ is similar to the class hour distribution in the sample with $D\geq 40$ that we use in Supplement~\ref{sec:experiments}. Next, we estimate the dose, heterogeneous, and incremental response curve for the new sample choice. Figure~\ref{fig:JC_d} % and~\ref{fig:JC_full} 
visualizes results. For the sample with $D\geq 40$ and $Y>0$, the results mirror the results of the sample with $D\geq 40$ presented in Supplement~\ref{sec:experiments}. Excluding observations for which $Y=0$ leads to estimates that have the same shape but higher magnitudes, confirming the robustness of the results we present in Supplement~\ref{sec:experiments}.

\end{document}